\documentclass[12pt]{article}

\usepackage{geometry}
\geometry{verbose,tmargin=0.8in,bmargin=0.8in,lmargin=0.8in,rmargin=0.8in}
\usepackage[utf8]{inputenc}
\usepackage[T1]{fontenc}
\usepackage[tt=false,type1=true]{libertine}
\usepackage[varqu]{zi4}
\usepackage[libertine]{newtxmath}

\usepackage{hyperref}       % hyperlinks
\usepackage{url}            % simple URL typesetting
\usepackage{booktabs}       % professional-quality tables
\usepackage{amsfonts}       % blackboard math symbols
\usepackage{nicefrac}       % compact symbols for 1/2, etc.
\usepackage{microtype}      % microtypography
\usepackage{xcolor}         % colors

\usepackage[ruled]{algorithm2e} % For algorithms

\SetAlFnt{\small}
\SetAlCapFnt{\small}
\SetAlCapNameFnt{\small}
\SetAlCapHSkip{0pt}
\IncMargin{-\parindent}

\linespread{1.5}

\usepackage{tikz}
%% for tables
\usepackage{slashbox}
\usepackage{caption}
\usepackage{subcaption}
\usepackage{mathrsfs}
\usepackage{amssymb, amsmath, latexsym, amsfonts, graphicx, amsthm}
\usepackage{natbib}

%\usepackage{natbib}

% Choose a citation style by commenting/uncommenting the appropriate line:
\setcitestyle{authoryear}
%\setcitestyle{acmnumeric}

\DeclareMathOperator*{\E}{\mathbb{E}}

\newcommand{\Rev}{\textsc{Rev}}

\newcommand{\Imply}{\Longrightarrow}

\newcommand{\F}{\mathcal{F}}
\newcommand{\G}{\mathcal{G}}
\newcommand{\x}{\tilde{x}}
\newcommand{\one}{\mathbf{1}}

\providecommand{\U}[1]{\protect\rule{.1in}{.1in}}
%EndMSIPreambleData
\newtheorem{theorem}{Theorem}[section]
\newtheorem{corollary}[theorem]{Corollary}
\newtheorem{lemma}[theorem]{Lemma}
\newtheorem{proposition}[theorem]{Proposition}

\newtheorem*{lemma*}{Lemma}
% \newenvironment{proof}[1][Proof]{\textbf{#1.} }{\ \rule{0.5em}{0.5em}}

% Title. Note the optional short title for running heads. In the interest of anonymization, please do not include any acknowledgements.
\title{Calibrated Click-Through Auctions:\\ An Information Design Approach}

% \title{Calibrated Click-Through Auctions:\\ An Information Design Approach\thanks{The version with acknowledgement, ....}}

% Anonymized submission.
%\author{Submission 126}
% \author{}
% \author{Dirk Bergemann}
% \affiliation{%
%   \institution{Yale University}
%   \country{USA}
% }
% \author{Paul Duetting}
% \affiliation{%
%   \institution{Google Research}
%   \country{Switzerland}
% }
% \author{Renato Paes Leme}
% \affiliation{%
%   \institution{Google Research}
%   \country{USA}
% }
% \author{Song Zuo}
% \affiliation{%
%   \institution{Google Research}
%   \country{China}
% }
\author{Dirk Bergemann\thanks{Yale University, dirk.bergemann@yale.edu} \and
Paul Duetting\thanks{Google Research, duetting@google.com} \and
Renato Paes Leme\thanks{Google Research, renatoppl@google.com} \and
Song Zuo\thanks{Google Research, szuo@google.com}}

% Abstract. Note that this must come before \maketitle.
\begin{document}

\maketitle

\begin{abstract}
We analyze the optimal information design in a click-through auction with fixed valuations per click, but stochastic click-through rates. While the auctioneer takes as given the auction rule of the click-through auction, namely the generalized second-price auction, the auctioneer can design the information flow regarding the click-through rates among the bidders.
A natural requirement in this context is to ask for the information structure to be calibrated in the learning sense.
%The information structure is required to be
%calibrated,
%\redit{thus}
With this constraint, the auction needs to rank the ads by a product of the bid and an \emph{unbiased} estimator of the click-through rates, and the task of designing an optimal information structure is thus reduced to the task of designing an optimal unbiased estimator.
%thus the signal that each bidder receives has to match the conditional expectation of the click-through rate of each bidder.

%In
We show that in a symmetric setting with uncertainty about the click-through rates, the optimal information structure attains both social efficiency and surplus extraction. % (Theorem 1).
The optimal information structure requires private (rather than public) signals to the bidders. It also requires correlated (rather than independent) signals, even when the underlying uncertainty regarding the click-through rates is independent.
Beyond symmetric settings, we show that the optimal information structure requires partial information disclosure.
% Namely, there is always a partial information disclosure policy that dominates both no-disclosure and full-disclosure information policies.
%We present an LP for computing an $\epsilon$-optimal signalling scheme (in time poly in $1/\epsilon$), and present partial characterization results.
% Our analysis suggests that the optimal scheme discloses enough information to allocate efficiently, but relies on partial obfuscation and pooling of information to increase revenue.

%The optimality of an interior information structure that we establish has an interesting implication for machine-learned click-through rates: While more fine-grained information always improves efficiency; the revenue-optimal unbiased estimator achieves optimal efficiency but relies on partial obfuscation and bundling to increase revenue. \renato{I suggest we remove this last paragraph from the abstract.}

\noindent{\emph{Keywords}}: Click-Through Rates, Information Design, Second-Price Auction, Calibration, Private Signals, Public Signals, Conflation.

\noindent{\emph{JEL Classification}}: D44, D47, D82
\end{abstract}

\thispagestyle{empty}
\setcounter{page}{0}
\clearpage

% \begin{document}

% % Title page for title and abstract only.
% \begin{titlepage}

% \maketitle

% \end{titlepage}

% Paper body
\section{Introduction}

In the world of digital advertising, whether in display or in
search advertising, the allocation mechanism is commonly an auction. Independent of the details of the auction format, the mechanism
typically elicits from each bidder the willingness-to-pay for the item.
Importantly, the auction frequently uses additional information to determine
the ranking in a search auction or the assignment in the display
advertisement. The additional information frequently concerns the quality of
the match between the bidder (the advertiser) and the item (attention of the
consumer). Importantly, this additional information, such as the expected
click-through rate or expected transaction rate, is often held by the
platform that manages the auction, or the publisher on whose site the
consumer is present. Consequently, the relevant information for the
auction comes from many information sources, some of it provided by the
advertiser (the bidder), some of it provided by the publisher (the seller), or the auction platform.

As the relevant information arises from many sources, the allocation
mechanism must determine how much information to elicit and how to integrate
this information into the allocation mechanism. We pursue this question in a
simple, yet prominent setting---namely, the click-through auction as analyzed in \citet
{edos07} and \citet{vari07}, who refer to this type of auction as the
\emph{generalized second-price auction}, or \emph{position auction},
respectively. In this environment, the utility of a bidder is given by the
product of the willingness-to-pay of the bidder per click of the consumer,
and the click-through rate of the consumer. In the following, we shall simply
refer to this auction mechanism as the \emph{click-through auction}.

\citet{edos07} and \citet{vari07} consider a complete information setting
where the willingness-to-pay of each bidder and the click-through rate for
each position are assumed to be known by all auction participants. They
showed that there is an equilibrium in the bidding game of the click-through
auction in which the bidder with the highest product of the willingness-to-pay and the click-through rate wins. Thus, the click-through auction
supports the socially efficient outcome, and if the auction offers several
positions or rankings, then the socially efficient outcome extends to all
offered positions. Significantly, in equilibrium the price for each position
reflects the marginal contribution of each bidder to the social surplus.
Thus, the resulting payoffs of the bidders and the auctioneer are as they
would be in the direct Vickrey-Groves-Clarke (VCG)\ mechanism. The payoff
equivalence of the click-through auction with the corresponding VCG
mechanism is critical as it suggests that the mechanism is robust to the
introduction of private information about either the willingness-to-pay or
the click-through rate.

In the subsequent analysis, we take as given the
click-through auction format \emph{and} the profile of willingness-to-pay across
bidders. In contrast to the above contributions, we allow for randomness
and uncertainty in the click-through rates. The publisher or auction
platform must then decide how much of the information to share with the
auction participants. For the subsequent analysis, we do not distinguish
between the publisher and the auction platform, and simply refer to either
(or both of them) as the seller. The central instrument, and thus the focus
of our analysis, is the \emph{information policy} of the seller with respect
to the true click-through rates generated by any given search event.

We approach the problem of determining the optimal information policy in a
number of steps in increasing generality. We focus on randomness in the
click-through rates and maintain complete information about the willingness-to-pay of each bidder throughout the analysis. Each search event generates a
random click-through rate for each of the participating bidders. The search
environment is therefore described by a joint distribution of the
click-through rates across the bidders. The joint distribution of the
click-through rates is common information to all the participants in the
auction, bidders and seller, and is thus given by a common prior distribution. The
initial information of the bidders is given by the common prior
distribution. In contrast, the seller is assumed to know the realized
click-through rates of each search event. The information policy of the
seller then has to determine how much information to disclose about the
realized click-through rates before bidding begins.

With the randomness of the click-through rates it is natural to constrain
the information design of the platform to be consistent with the prior
distribution. We therefore refer to an information policy that maintains the
law of iterated expectation as \emph{calibrated} {(in the sense of \citet{fovo97})}. Among all possible information policies, the complete
information policy and the zero information policy are both leading examples,
as well as extremal information policies. Under a complete information policy,
the seller completely discloses all information to the bidders. It is thus
as if the bidders were in a complete information environment. By contrast, in
a zero information policy the seller does not disclose any information
about the realized click-through rates. In consequence, each bidder acts as if
the realized click-through rate is always equal to the ex ante expected
click-through rate. These two extremal information policies have
dramatically different payoff implications.

With stochastic click-through rates, social efficiency and revenue will depend on which information about the click-through rates is disclosed.
With complete information, the resulting allocation is always efficient. % as
%the earlier cited analysis suggests.
But as the competitive position of each
bidder can vary across the realized click-through rates, the resulting
revenue of the seller can be low due to weak competition. By contrast,
with zero information about the click-through rates, the resulting allocation
will typically fail to be socially efficient. As the bidding behavior cannot
reflect any information about the click-through rates, the socially relevant
information fails to be reflected in the auction outcome. Yet, as the
bidding reflects only the expected click-through rate, and hence only the mean of
the click-through rate (and not the higher moments of the click-through
distribution), the resulting bids will have zero variance, and thus be more
competitive.

Our first result, Proposition \ref{ics}, shows that a statistically independent information structure can never improve the revenue over the no-disclosure information policy. Our second result, Corollary \ref{zc}, shows that with any level of uncertainty in the click-through rates, the seller strictly prefers no-disclosure to full-disclosure. Thus, the seller always favors competition over information disclosure. We then
ask whether there exists an improved information policy that can realign social
efficiency with the revenue of the seller.

Our main result, Theorem \ref{cs}, establishes
that a calibrated and correlated information policy can completely align
social efficiency and revenue maximization. In particular, when the common
prior distribution over the click-through rates is symmetric across the
bidders, we can then explicitly construct an information policy such that
the socially efficient allocation is always realized and the
competition between the bidder levels the information rent of the
bidders to the ex-ante level. Given the symmetry of the common prior
distribution, this implies that the bidders compete their residual surplus
down to zero.

We also provide an explicit construction of the correlated information structure
(or signalling scheme). Interestingly, the optimal information structure is
an interior information structure; that is, it is neither zero nor complete
information disclosure. The information structure balances two conditions
that are necessary to attain the socially efficient allocation while
maintaining competition:\ (i) it provides sufficient information to rank the
alternative allocations according to social efficiency, and (ii) it limits
the variance in the posterior beliefs of the competing bidders so that their
equilibrium bids remain arbitrarily close to support competitive bids.

The optimality of a noisy information structure has some significant
implications in the world of digital advertising. Since the optimal
information structure remains noisy, better click-through-rate
predictions, achieved through improved learning, may not necessarily lead to better
auction results. Thus, there might be limits to the returns of more elaborate
machine learning algorithms to inform the prediction problem. Importantly,
the calibration constraint makes sure that the ranking score always remains
an unbiased click-through-rate predictor.

The optimality of an interior information structure remains even when we
move away from the symmetric stochastic environment. In Proposition \ref{equal-means}, we show
that the socially efficient allocation and revenue maximization remain
perfectly aligned as long as the expected click-through rate is equalized
across bidders, even when the support of the ex-post realized
click-through rates can vary across bidders. Finally, in Theorem \ref{thm:moderate},
we show that an interior information structure remains part of the optimal
information design with stochastic click-through rates, even when the expected
click-through rates across the bidders differ, and therefore one bidder is
stronger from an ex-ante perspective. In particular, the optimal information
structure releases less information about the winning bidder than about the
losing bidder. This suggests that the optimal information structure in an
asymmetric auction seeks to strengthen the weak bidder with additional
information, relative to the strong bidder.

\subsection{Related Literature}

\citet{bepe07} and \citet{essz07} are among the first to investigate the
design of optimal information structure in an auction setting. \citet{bepe07} consider an auction environment with $n$ bidders and independent
private values. In their setting, the bidders initially have no private
information, and the platform can distribute any information. Given the
independent value assumption, \citet{bepe07} restrict the auctioneer to
transmit information regarding bidder $i$'s valuation only to bidder $i$.
Thus, there is a restriction in the signalling scheme. By contrast, the
seller can adapt the selling mechanism to the information structure, and is
thus not restricted to the second-price auction. Consequently, the seller jointly
optimizes auction and information policies. \citet{essz07} extend the
analysis of \citet{bepe07} to permit the bidder to initially have private
information, and then allow the seller to propose the optimal mechanism.

More recently, a number of contributions have analyzed the optimal
information structure for a given mechanism or auction format. \citet{bebm15}
consider the optimal information structure for a seller who uses
third-degree price discrimination, and \citet{bebm17} pursue this question in
the context of the first-price auction. \citet{babx19} consider a second-price auction where the information about the valuation of bidder $i$ is
partially shared between the bidder and the platform. The limiting
cases are that: $(i)$ the bidder has all the private
information (the standard setting of auction theory), and $(ii)$ the platform has all the private information.

Considering the randomness of the click-through rates, earlier literature
notices that a modified version of the click-through rates may positively
impact the revenue of the click-through auction. In this context, the
adjusted click-through rates in \citet{lape07} and \citet{masu15} are not
calibrated, as the true click-through rate is "squashed" or "boosted" for the
purpose of ranking. This suggests a natural distinction between the analysis
of optimal calibrated and non-calibrated information structures.

The design of optimal information structures in a strategic setting remains an
area with many open questions. In the current setting, we explicitly allow
for private information disclosure, rather than public information
disclosure. By contrast, in most of the preceding literature the information
disclosure was either public, as in \citet{arba19}, or independent across bidders, as in \citet{bepe07}. Here, we are allowing for, and importantly showing, the optimality of private and correlated information structures.

In the current setting, the auctioneer can substantially improve the revenue by using correlated instead of independent signals. The significance of correlation among the signals for the revenue maximizing mechanism has been observed earlier in \citet{crmc85,crmc88}. They establish that correlation in the private values among bidders can be used in an optimal mechanism to extract the full surplus. While our results also highlight the increased power of correlated signals relative to independent signals, the setting and arguments differ substantially. In \citet{crmc85,crmc88}, the auctioneer is free to choose the optimal mechanism, while we take the generalized second-price auction as given. In \citet{crmc85,crmc88}, the signal of each bidder is the private value of the bidder, and the values need to be correlated for the full surplus extraction result to be obtained. In our setting, the values of the bidders are known, and the click-through rates themselves can either be independent or correlated. We choose the information structure so that the signals are sufficiently informative, yet yield competitive interim expectations. By contrast, in \citet{crmc85,crmc88}, the payments of each bidder depend on the reports of the other bidders, so that the individual reporting strategy seeks to maximize a scoring function. Thus, the correlation of the signals in these two settings seeks to achieve very different objectives, and accordingly the construction differs significantly. In particular, \citet{crmc85,crmc88} take as given the signals and then design the optimal transfer function. We take as given the transfer function, namely the payment rules, and then design the signals to maintain competition.
\footnote{We should also note that correlation is not always part of an optimal information structure. In the case of weak competition, that is, when the efficient allocation always assigns the object to the same bidder, then the optimal information structure will not use correlated signals.}

\section{Model}

We will analyze the setting of click-through auctions where bidders are ranked by the product of their value (expressed as a maximum willingness-to-pay per click) and an unbiased estimator of the click-through rates. Our main goal will be to study how to engineer such unbiased estimators to achieve better revenue-efficiency trade-offs. In the language of information design, we will keep the auction format fixed and vary the information structure.

To allow us to focus on the information structure, we will work in the full information model where each bidder $i=1,\ldots,n$ has a fixed and known value $v_i \geq 0$ representing their willingness-to-pay for a click. Our central object of study will be the click-through rates (CTRs): before the auction, a vector $r=(
r_{1},\ldots ,r_{n}) \in [0,1] ^{n}$ will be drawn from a joint prior distribution $G$, which is a multi-dimensional distribution that may display correlation across the bidders' CTRs.

%\pauledit{Added this} Our default assumption in the technical sections will be that $v_i = 1$ for all $i$, and that differences in value are modelled by scaling and normalizing the CTRs appropriately.

The click-through rates are known by the auctioneer: typically, the platform is the one building a machine learning model to estimate them. The auctioneer must now decide on a score/signal $s = (s_1,  \ldots, s_n) \in [0, 1]^n$ to rank the bidders. The design space will be to design a joint probability distribution $\rho$ on pairs $(r,s)$ such that the marginal on $r$ is $G$:
$$\int_{r \in R} dG(r) = \int_{r \in R}  \int_{s} d\rho(s,r), \quad \forall \text{ measurable } R \subseteq [0,1]^n.$$
This joint probability distribution will be referred as the \emph{information structure}. For notational convenience, it will be useful to assume that both $G$ and $\rho$ are discrete distributions with finite support, and hence we can write $g(r)$ for the probability of a given vector $r$ under distribution $G$ and $x(r,s)$ as the probability of a pair $(r,s)$ under distribution $\rho$. With that notation, the information structure is a function $x : [0,1]^n \times [0,1]^n \rightarrow [0,1]$ satisfying:
$$\sum_s x(r,s) = g(r), \quad \forall r \in [0, 1]^n.$$

\paragraph{Auction Mechanics} Again for simplicity, we focus on the single slot setting where the goal of the auction is to select a single winner $i^\ast \in [n]$.
The winner will be selected as the bidder having the largest $s_i v_i$, with a symmetric tie-breaking rule. The winner's cost per click is then:
\begin{equation*}
p_{{i}^{\ast }}=\max_{j\neq i^{\ast }}\frac{v_{j}s_{j}}{s_{i^{\ast }}}.
\end{equation*}
The winner only pays when there is a click, which happens with probability $r_{i^\ast}$. Hence, the expected revenue from this auction is $r_{i^\ast} p_{i^\ast}$.

\paragraph{Calibration} The auctioneer is restricted to ranking with the unbiased estimator of the click-through rates. Hence, the information structure will be valid only if it is calibrated in the Foster-Vohra sense \citep{fovo97}. An information structure is called \emph{calibrated} if the posterior, given
any signal realization $s_{i}^{\prime }$ for bidder $i$, matches with the
signal itself, i.e.,
\begin{equation}
\mathbb{E}[r_{i}|s_{i}=s_{i}^{\prime }]=s_{i}^{\prime }.  \label{cal}
\end{equation}%
If the CTR and signal space is discrete, then we can write calibration as:
$$\sum_{(r,s); s_i = s'_i} x(r,s) \cdot (r_i-s'_i) = 0, \quad \forall i, s'_i.$$

There are two important examples of calibrated information structures: %\pauledit{Replaced ``revelation'' with ``disclosure'' throughout this section}
\begin{itemize}
    \item Full-disclosure: where $s_i = r_i$ almost surely. %\song{According to the definition above, shall we use $\Pr[s_i = r_i] = \int_{s_i = r_i} d\rho(r, s) = 1$ or $\sum_{(r,s); s_i = r_i} x(r, s) = 1$? Or maybe we could define another notation when the information structure is deterministic.}\renato{I think just saying $s_i = r_i$ a.s. is simpler. In the end of the day, those are statements about random variables. I think the integral makes it harder to parse.} \song{I totally agree, but somehow worry that some reviewers may complaint this is not defined... }
    \item No-disclosure: where $s_i = \E[r_i]$ almost surely. %\song{Similar here.}
\end{itemize}
%\song{Shall we use Full Disclosure and Zero Disclosure as we did in Section 3.2, the table for three bidder case?}\renato{I think both are fine. I think we need to keep consistency throughout the paper.}
Since the calibration constraint is imposed on every bidder separately, it is possible to create information structures that combine disclosure and no-disclosure. For example, given two bidders, we can consider an information structure where bidder $1$ receives only one signal, and the signal is equal to the ex-ante expectation of the click-through rate, thus  $s_1 = \E[r_1]$; and bidder $2$ receives as many signals as click-through rates, thus $s_2 = r_2$. This forms a calibrated information structure.

Whenever $s_i = \E[r_i]$ we will say that we \emph{fully bundle} bidder $i$. Whenever $s_i = r_i$ we will say that we \emph{unbundle} bidder $i$. If neither is the case we will say that we \emph{partially bundle} the bidder.

\paragraph{Independence and Correlation}

A information structure is called \emph{independent} if signals $s_j$ for $j \neq i$ do not offer additional information on the expectation of $r_i$ beyond $s_i$. Formally:
\begin{equation}
\E[r_i | s = s'] = \E[r_i | s_i = s'_i],
% \Pr[s_i = s'_i | r] = \Pr[s_i = s'_i | r_i],
\ \ \ \forall i,\forall s\text{. }
\label{in}%
\end{equation}

Both full-disclosure and no-disclosure information structures are independent.

Whenever we do not assume independence, we will say that an information structure is \emph{correlated}. Below, we give an example of a correlated and calibrated information structure. This will also serve as an example of how information structures will be illustrated throughout the paper. Consider the two-bidder setting where CTRs are $r \in \{(1/2,1/2), (1/2,1), (1,1/2), (1,1)\}$, each with  probability $1/4$ (hence $r_1$ and $r_2$ are independent). We represent an information structure where rows correspond to pairs of CTRs $(r_1, r_2)$ and columns correspond to pairs of signals $(s_1, s_2)$. Each entry of the matrix will correspond to $x(r,s)$ which is the probability of the event that the CTRs are $r$ and the signals are $s$. The following table shows the "\emph{flipping the square}`` structure, which we will discuss in detail in Section \ref{sec:square}:

\medskip

\begin{center}
\begin{tabular}
[c]{l|c|c|c|c}%
\toprule
\backslashbox{$r$}{$s$} & $(3/4-\epsilon,3/4-\epsilon)$ &
$(3/4-\epsilon,3/4+\epsilon)$ & $(3/4+\epsilon,3/4-\epsilon)$ &
$(3/4+\epsilon,3/4+\epsilon)$\\\hline
$\left(  1/2,1/2\right)  $ & $\epsilon$ & $0$ & $0$ & $1/4-\epsilon
$\\%\hline
$(1/2,1)$ & $0$ & $1/4$ & $0$ & $0$\\%\hline
$(1,1/2)$ & $0$ & $0$ & $1/4$ & $0$\\%\hline
$(1,1)$ & $1/4-\epsilon$ & $0$ & $0$ & $\epsilon$\\
\bottomrule
\end{tabular}
\end{center}

\medskip

One can check that while the calibration constraints (equation \eqref{cal}) hold, the independence condition (equation \eqref{in}) does not. So, this is a calibrated, correlated information structure.

%\pauledit{Removed para ``A Word of Caution''. Maybe FN or in the discussion at the end?}

% \paragraph{A Word of Caution} With a correlated information structure, the expected revenue in the auction conditioned on a given signal vector $s = (s_1, \hdots, s_n)$ will typically be different than $\max_{j \neq i^{\ast}} v_{j}s_{j}$, i.e.:

% \[\frac{\E[r_{i^{\ast}} \cdot \max_{j \neq {i^{\ast}}}v_{j}s_{j} | s]} {s_{i^{\ast}}} \neq \max_{j\neq i^{\ast}}v_{j}s_{j}.\]

% Equality will hold for independent and calibrated information structures, as we will show in Proposition \ref{ics}.

\paragraph{Symmetric vs Asymmetric Environments} Finally, we will say that the environment is \emph{symmetric} when random variables $v_1 r_1,\hdots, v_n r_n$ are exchangeable.
%Under our default assumption that $v_i = 1$ for all $i$, this boils down to assuming that $r_1,\hdots, r_n$ are exchangeable.
%For two bidders with discrete CTR distributions, it means that:
%$$g(r_1, r_2) = g(r_2, r_1).$$
Whenever symmetry does not hold, we will say that the environment is \emph{asymmetric}.

\section{Independent Information Structures}

A first step in the analysis of optimal information design is a focus on
independent signals. With independent signals, the signal $s_{i}$ of each
bidder $i$ contains all the information about the CTR $r_{i}$ that the
auctioneer releases before the auction. Thus, bidder $i$ could not learn
anything more about the true CTR from any other bidder. In turn, the
information that bidder $i$ receives from the auctioneer is the maximal
information that is available before the auction to place an informed bid.

\subsection{Independent Signals and Two Bidders\label{ist}}

We begin the analysis with two bidders and then generalize the insight to
many bidders.

% \begin{proposition}
% [Independent and Calibrated Signalling]\quad\newline For any symmetric,
% independent, randomized, and calibrated signaling scheme $s:[0,1]\rightarrow
% \lbrack0,1]$, the expected revenue cannot exceed the revenue from no revelation.
% \end{proposition}

% \renato{For most of the paper we are assuming $v_i$ to be fixed, but in this
% theorem is is Bayesian. Shall we formulate with fixed $v_i$ for consistency? One advantage is that we will simplify the math. I am adding a simple version of this theorem below:} \song{Or maybe we can keep the simple version in main text, and send the Bayesian version to Appendix? Then say something like: ``We note that this result still holds even when the buyer's value is Bayesian, please see Appendix X.x''}

\begin{proposition}[Independent and Calibrated Signalling]
\label{ics} %\quad \newline
In a two-bidder environment, the expected revenue of
an independent and calibrated information structure cannot exceed the one from no-disclosure.
\end{proposition}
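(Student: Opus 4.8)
The plan is to collapse the expected revenue of an arbitrary independent, calibrated information structure into a single clean expression and then dominate it by Jensen's inequality.

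\textbf{Step 1: a revenue formula.} I would first condition on the realized signal profile $s=(s_1,s_2)$. Both the identity of the winner and the per-click price depend only on $s$: bidder $1$ wins iff $v_1 s_1 \geq v_2 s_2$, and then pays $p_1 = v_2 s_2 / s_1$, so the revenue conditional on $(r,s)$ is $r_1\cdot v_2 s_2/s_1$ (symmetrically if bidder $2$ wins). Averaging over $r$ given $s$, the indicator of who wins and the coefficient $v_j s_j/s_i$ are $\sigma(s)$-measurable and pull out, leaving $\E[r_i\mid s]$ for the winner $i$. Here both hypotheses enter: independence \eqref{in} gives $\E[r_i\mid s]=\E[r_i\mid s_i]$, and calibration \eqref{cal} gives $\E[r_i\mid s_i]=s_i$. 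Hence the conditional revenue when bidder $1$ wins equals $s_1\cdot v_2 s_2/s_1 = v_2 s_2 = \min(v_1 s_1, v_2 s_2)$, and likewise when bidder $2$ wins, so
$$\Rev \;=\; \E_{s}\!\left[\min(v_1 s_1,\, v_2 s_2)\right].$$

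\textbf{Step 2: Jensen plus the law of iterated expectations.} Since $(a,b)\mapsto\min(a,b)$ is concave, $\E_{s}[\min(v_1 s_1, v_2 s_2)] \leq \min\!\big(v_1\,\E[s_1],\, v_2\,\E[s_2]\big)$. Calibration together with the law of iterated expectations yields $\E[s_i]=\E\big[\E[r_i\mid s_i]\big]=\E[r_i]$, so the bound becomes $\Rev\leq \min\!\big(v_1\,\E[r_1],\,v_2\,\E[r_2]\big)$. The right-hand side is exactly the (deterministic) revenue of the no-disclosure structure, in which $s_i=\E[r_i]$ almost surely, which completes the argument.

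There is essentially no hard step: the only real content is the cancellation in Step 1, where averaging the winner's realized CTR over its own signal returns the signal value and kills the $1/s_{i^{\ast}}$ in the GSP price. The points to treat with care are (i) the degenerate events where some $s_i=0$ or where $v_1 s_1 = v_2 s_2$ (a tie), all of which contribute revenue consistent with the $\min$ expression regardless of the tie-breaking rule, and (ii) being explicit that the argument uses $\E[r_i\mid s]=s_i$ for the \emph{full} profile $s$ --- this is precisely the extra leverage that independence provides on top of calibration, and it is needed because the winner is pinned down by both coordinates of $s$.
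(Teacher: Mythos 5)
Your proposal is correct and follows essentially the same route as the paper's proof: condition on the signal profile, use independence and calibration to reduce the conditional revenue to $\min(v_1 s_1, v_2 s_2)$, then apply Jensen's inequality for the concave minimum and calibration to arrive at $\min(v_1\E[r_1], v_2\E[r_2])$, the no-disclosure revenue. Your added care about ties and zero signals is a harmless refinement of the same argument.
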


\begin{proof}
We start by computing the expected revenue given signals $s_1$ and $s_2$. If $v_1 s_1 \geq v_2 s_2$, then the revenue can be written as:
$$ \E \left[ r_1 \cdot \frac{v_2 s_2}{s_1} \middle\vert s_1, s_2 \right] =
\E \left[ r_1  \middle\vert s_1, s_2 \right] \cdot \frac{v_2 s_2}{s_1} =
\E \left[ r_1  \middle\vert s_1 \right] \cdot \frac{v_2 s_2}{s_1} =
s_1 \cdot \frac{v_2 s_2}{s_1}  = v_2 s_2,$$
where the second equality follows from the independence of the signaling scheme and the third equality follows from calibration.

Therefore, we can write the expected revenue as:
$$\Rev = \E[\min(v_1 s_1, v_2 s_2)] \leq \min(\E[v_1 s_1], \E[v_2 s_2]) =
\min(v_1 \E[s_1], v_2 \E[s_2]) = \min(v_1 \E[r_1], v_2 \E[r_2]),$$
where the first inequality follows from Jensen's inequality and the concavity of the minimum. The last equality follows from calibration. Finally, note that $\min(v_1 \E[r_1], v_2 \E[r_2])$ is the revenue from no-disclosure.
\end{proof}

We have thus shown that full-disclosure can never revenue-dominate no-disclosure. We next show that generically
%with any amount of positive variance,
full-disclosure is, in fact, strictly revenue-dominated by zero no-disclosure.

% \pauledit{Removed the comment about Board, as we discuss how our results are related to Board's at the end of this section.}
%{\ This is related to Board's
%result (convex-concave).} \renato{Expand on the discussion on Board.}

%\renato{I am not sure if the previous version of this theorem was not true: even if there is variance, they can be equal. For example, if the CTRs of player 1 are uniform in $[.6,1]$ and the ctrs of player $2$ are uniform on $[0,.5]$ then there is a lot of variance but for any information structure we get $\E[s_2 v_2] = v_2 \E[r_2]$}
%\song{Sorry, what was the previous version of this theorem?}
%\paul{I agree with Renato}

\begin{corollary}
[Zero vs.~Complete Information Disclosure]
\label{zc} %\qquad \newline
In the two-bidder environment, if both $v_1 r_1 > v_2 r_2$ and $v_1 r_1 < v_2 r_2$ occur with positive probability, then the revenue from no-disclosure strictly dominates the revenue from full-disclosure.
\end{corollary}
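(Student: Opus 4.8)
The plan is to show that the inequality in Proposition \ref{ics} is strict when applied to the full-disclosure structure, under the stated hypothesis. Recall that full-disclosure is an independent and calibrated information structure, so Proposition \ref{ics} gives $\Rev_{\text{full}} = \E[\min(v_1 s_1, v_2 s_2)] = \E[\min(v_1 r_1, v_2 r_2)]$, while no-disclosure yields $\Rev_{\text{no}} = \min(v_1 \E[r_1], v_2 \E[r_2]) = \min(\E[v_1 r_1], \E[v_2 r_2])$. So the entire claim reduces to showing that Jensen's inequality for the concave function $\min(\cdot,\cdot)$ is \emph{strict} here, i.e., $\E[\min(v_1 r_1, v_2 r_2)] < \min(\E[v_1 r_1], \E[v_2 r_2])$.

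The key step is the standard strictness criterion for Jensen's inequality: for a concave function that is strictly concave "across" a certain direction, equality holds only if the random argument lies almost surely on an affine piece where the function is linear. The function $(a,b) \mapsto \min(a,b)$ is linear on each of the two half-planes $\{a \le b\}$ and $\{a \ge b\}$, but is \emph{not} linear on any set that meets the interiors of both half-planes. Concretely, I would argue: let $A = v_1 r_1$ and $B = v_2 r_2$, and write $m = \min(A,B)$. By hypothesis, both $\Pr[A > B] > 0$ and $\Pr[A < B] > 0$. I would then exhibit two points in the support, one with $A < B$ and one with $A > B$, and note that $\min(\cdot,\cdot)$ restricted to the segment joining them is strictly concave at the crossing; averaging over the distribution (or just over these two atoms, conditioning on the rest) then gives a strict inequality. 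Slightly more carefully, one can condition on the event $\{A \ne B\}$ (which has positive probability) and compare: on $\{A < B\}$ we have $m = A < \tfrac{A+B}{2}$-type slack against the "wrong" coordinate, and symmetrically on $\{A > B\}$, so $\E[m] < \min(\E[A], \E[B])$ strictly. I would spell this out via the two elementary inequalities $\E[m] \le \E[A]$ and $\E[m] \le \E[B]$ and observe that at least one must be strict: indeed $\E[A] - \E[m] = \E[(A-B)^+] \ge \Pr[A > B]\cdot 0$ — wait, that is not quite it, so instead use $\E[A-m] = \E[(A - B)\mathbf{1}\{A > B\}] > 0$ since the integrand is positive on a positive-probability event; symmetrically $\E[B - m] = \E[(B-A)\mathbf{1}\{B>A\}] > 0$. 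Hence both $\E[A] > \E[m]$ and $\E[B] > \E[m]$, so $\min(\E[A],\E[B]) > \E[m] = \Rev_{\text{full}}$, while $\min(\E[A],\E[B]) = \Rev_{\text{no}}$.

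I do not expect a serious obstacle here; the only thing to be careful about is making the strictness airtight rather than hand-waving "Jensen is strict." The cleanest route, as above, is to avoid invoking a general strictness-of-Jensen lemma and instead directly compute $\E[A] - \E[\min(A,B)] = \E[(A-B)^+]$ and $\E[B] - \E[\min(A,B)] = \E[(B-A)^+]$, each of which is strictly positive precisely under the two positive-probability hypotheses. This also makes transparent why \emph{both} hypotheses are needed: if, say, $A \ge B$ almost surely, then $\min(A,B) = B$ and full-disclosure ties no-disclosure in revenue (consistent with the footnote about weak competition), so the corollary would fail without the two-sided crossing assumption.
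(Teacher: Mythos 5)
Your proof is correct and takes essentially the same route as the paper: reduce to the revenue chain from Proposition~\ref{ics} applied to full-disclosure, and then show that Jensen's inequality for the concave minimum is strict under the two-sided crossing hypothesis. The paper simply asserts strict Jensen in one line, whereas your explicit computation $\E[A]-\E[\min(A,B)]=\E[(A-B)\,\one\{A>B\}]>0$ (and symmetrically for $B$) is just a careful way of making that same step airtight, and it also explains why both positive-probability events are needed.
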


\begin{proof}
If both events $v_1 r_1 > v_2 r_2$ and $v_1 r_1 < v_2 r_2$ occur with positive probability, then Jensen's inequality holds with a strict inequality $\E[\min(v_1 r_1, v_2 r_2)] < \min(\E[v_1 r_1], \E[v_2 r_2])$ in the previous proof.
% With complete disclosure, the revenue from an realized pair of click-through
% rates is equally the lower of the two click-through rate, thus \renato{I think $\pi$ is undefined.}
% \begin{equation}
% \pi \left( r_{k},r_{k}^{\prime }\right) =\min \left\{ r_{k},r_{k}^{\prime
% }\right\}.  \label{min}
% \end{equation}%
% Thus the revenue across all realized pairs is the second order statistic of
% the click-through rates. The expectation of this minimum is smaller than the
% minimum of the expected click-through rate across the two bidders.
\end{proof}

The argument suggests that the revenue dominance result does not extend to more than two bidders.
With more than two bidders, the smaller of the two highest realizations
determines the price, and the expectation of the smaller of the two highest is
now larger than the unconditional expectation of the second highest
click-through rate.

\subsection{Independent Signals with Many Bidders\label{sec:3indep}}

Indeed, the power of independent signalling is much improved in the presence
of competition, and we now consider the case of more than two bidders, $n>2$.
We show by example that an independent symmetric-calibration signal can
improve the revenue, and thus partial revelation is better than no- or
full-disclosure.

Consider a symmetric three-bidder environment with $v_1 = v_2 = v_3 = 1$. For each bidder $i$ let $r_i = 0$ with probability $2/3$, and $r_i = 1$ with probability $1/3$. Moreover, assume that $r_1$, $r_2$ and $r_3$ are independent. It is simple to check that full revelation has %expected
revenue $7/27$ and no-revelation has %expected
revenue $9/27$. This can be improved by the following signaling scheme with partial bundling:

\begin{center}
\begin{tabular}
[c]{l|c|c}%
\toprule
\backslashbox{$r_i$}{$s_i$} & $s_i = 0$ & $s_i = 4/9$ \\\hline
$r_i = 0$ & $1/4$ & $5/12$ \\
$r_i = 1$ & $0$ & $1/3$\\
\bottomrule
\end{tabular}
\end{center}

The revenue of partial disclosure is $4/9$ whenever at least two of the bidders have the high signal, which happens with probability $1-(1/4)^3 - 3 (1/4)^2 (3/4) = 27/32$. Hence, the overall revenue of this partial disclose scheme is $3/8$, which dominates both full-disclosure and no-disclosure.\\

Instead of trying to optimize for the optimal independent information structure, we will move to the more powerful model of correlated information structures in the next section.

% \renato{OLD STUFF BELOW}

% Suppose we have three buyers $v_{1}=v_{2}=v_{3}=1$. Independent and symmetric
% click-through-rates, $r=0$ with probability $2/3$ and $r=2$ with probability
% $1/3$. The expected revenue is shown in the following table:
% \begin{equation*}
% \begin{tabular}
% [c]{c|c|c}%
% Information Structure & Expected revenue & Signal mapping ($r\rightarrow
% s:x(r,s)$)\\\hline
% Complete Disclosure & $14/27$ & $0\rightarrow0:2/3;~2\rightarrow2:1/3$\\\hline
% Zero Disclosure & $18/27$ & $0\rightarrow2/3:2/3;~2\rightarrow2/3:1/3$\\\hline
% Partial Disclosure & $3/4$ & $0\rightarrow0:1/4;~0\rightarrow
% 8/9:5/12;~2\rightarrow8/9:1/3$%
% \end{tabular}
% \end{equation*}
% The expected click-through rate of the winning pair is:%
% \begin{equation*}
% \frac{\frac{8}{81}\left(  0\right)  +\frac{4}{27}+\frac{2}{27}\left(
% 2\right)  +\frac{1}{81}\left(  2\right)  }{\frac{8}{81}+\frac{4}{27}+\frac
% {2}{27}+\frac{1}{81}}=\frac{26}{27}%
% \end{equation*}
% and of the losing bidder is
% \begin{equation*}
% \frac{\frac{8}{81}\left(  0\right)  +\frac{4}{27}\left(  0\right)  +\frac
% {2}{27}\left(  0\right)  +\frac{1}{81}\left(  2\right)  }{\frac{8}{81}%
% +\frac{4}{27}+\frac{2}{27}+\frac{1}{81}}=\frac{2}{27}.
% \end{equation*}

The results here mirror earlier results by \citet{boar09}, who considers an
ascending auction in a private value setting without click-through
rates. In his analysis, he restricts attention to independent signals and
establishes that with two bidders, the seller's revenue is smaller \emph{with%
} than \emph{without} information disclosure. He further shows that in a symmetric model, as the number of bidders become arbitrarily large, complete
information disclosure eventually revenue-dominates zero information
disclosure.

\section{Correlated Structures in Symmetric Environments\label{corr}}

We obtain a much stronger result if we allow the information structure to be
correlated across bidders.
The information flow allows influence over the
level of competition to some extent. This allows us to conflate auction items
and restore some market thickness (see \citet{lemi10}).

\subsection{A First Example: Flipping the Square}\label{sec:square}

To showcase the power of correlated information structures, we start with an example where the optimal structure is rather counterintuitive. Consider two bidders  with values $v_1 = v_2 = 1$ and independent click-through rates distributed uniformly in $\{1/2,1\}$. Or rather: the vector $r$ is uniformly distributed in $\{(1/2,1/2), (1/2,1), (1,1/2), (1,1)\}$. With an independent signaling scheme, the optimal information structure is no-disclosure, which yields revenue equal to $3/4$. With a correlated signaling scheme, one can obtain arbitrarily close to $7/8$ revenue, which is optimal since it corresponds to the welfare $\E[\max_i v_i r_i]$ of the optimal allocation.

The information structure is described in the following table:

\medskip

\begin{center}%
\begin{tabular}
[c]{l|c|c|c|c}%
\toprule
\backslashbox{$r$}{$s$} & $(3/4-\epsilon,3/4-\epsilon)$ &
$(3/4-\epsilon,3/4+\epsilon)$ & $(3/4+\epsilon,3/4-\epsilon)$ &
$(3/4+\epsilon,3/4+\epsilon)$\\\hline
$\left(  1/2,1/2\right)  $ & $\epsilon$ & $0$ & $0$ & $1/4-\epsilon
$\\%\hline
$(1/2,1)$ & $0$ & $1/4$ & $0$ & $0$\\%\hline
$(1,1/2)$ & $0$ & $0$ & $1/4$ & $0$\\%\hline
$(1,1)$ & $1/4-\epsilon$ & $0$ & $0$ & $\epsilon$\\
\bottomrule
\end{tabular}
\end{center}

\medskip

To see that this information structure is calibrated, observe that $s_1 = 3/4-\epsilon$ with probability $1/2$. This probability event can be decomposed in two: with probability $1/4 + \epsilon$ we output this signal with $r_1 = 1/2$, and with the remaining  $1/4 - \epsilon$ probability we have $r_1 = 1$. Hence:
$$\E[r_1 | s_1 = 3/4-\epsilon] = \frac{(1/4 + \epsilon) \cdot 1/2 + (1/4 - \epsilon)\cdot 1}{1/2} = 3/4-\epsilon.$$

The counterintuitive nature of this mapping can best be seen when depicted as in Figure \ref{fig:flipping_the_square}. We map the CTRs in $\{1/2,1\}$ to two values $\{3/4 - \epsilon, 3/4 + \epsilon \}$ around the mean. The symmetric pairs,
$\left(  1/2,1/2\right)  $ and $\left(  1,1\right)  $, are mapped with high
probability into symmetric, but order-reversed pairs, $(3/4+\epsilon
,3/4+\epsilon)$ and $(3/4-\epsilon,3/4-\epsilon)$, respectively. The
calibration is nonetheless achieved by the off-diagonal pairs $\left(
1/2,1\right)  $ and $\left(  1,1/2\right)  $ that are mapped into order-preserving signals with probability 1, $(3/4-\epsilon,3/4+\epsilon)$
and $(3/4+\epsilon,3/4-\epsilon),$ respectively. The $\epsilon$
perturbation in the mapping of the diagonal pairs then achieves the ordering
of the signals.

\begin{figure}[h]
\centering
\begin{tikzpicture}[scale=7]

\draw (0,0) -- (1,0) -- (1,1) -- (0,1) -- cycle;

\draw (.45, .45) -- (.55,.45) -- (.55,.55) -- (.45,.55) -- cycle;

\node[circle,fill,inner sep=1pt] at (0,0) {};
\node[circle,fill,inner sep=1pt] at (0,1) {};
\node[circle,fill,inner sep=1pt] at (1,0) {};
\node[circle,fill,inner sep=1pt] at (1,1) {};
\node[circle,fill,inner sep=1pt] at (.45, .45) {};
\node[circle,fill,inner sep=1pt] at (.55, .45) {};
\node[circle,fill,inner sep=1pt] at (.45, .55) {};
\node[circle,fill,inner sep=1pt] at (.55, .55) {};

\draw[->, line width=1pt, color=blue] (0,1) -- (.45,.55);
\draw[->, line width=1pt, color=blue] (1,0) -- (.55,.45);
\draw[->, line width=1pt, color=blue] (0,0) to [bend right] (.55,.55);
\draw[->, line width=1pt, color=blue] (1,1) to [bend right] (.45,.45);
\end{tikzpicture}
\caption{Depiction of the ``flipping the square'' structure (with $\epsilon$-flows omitted).}
\label{fig:flipping_the_square}
\end{figure}
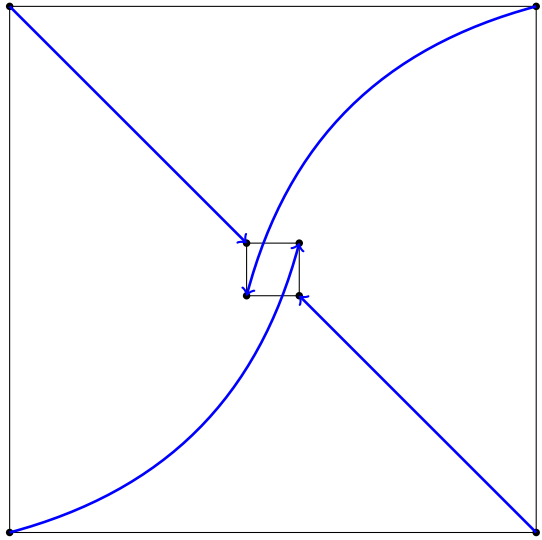

We notice the click-through signals $s_{i}$ maintain the efficient ranking of
the alternatives, and thus guarantee an efficient outcome in the auction. The
revenue in the auction is given by:%
\[
\frac{1}{4} \cdot \frac{1}{2}+\frac{1}{4} \cdot \frac{\frac{3}{4}-\epsilon}{\frac{3}%
{4}+\epsilon}+\frac{1}{4} \cdot \frac{\frac{3}{4}-\epsilon}{\frac{3}%
{4}+\epsilon}+\frac{1}{4}=\frac{21-4\epsilon}{24+32\epsilon}%
{= \frac{7}{8} - O(\epsilon)}\text{,}%
\]
which means that almost the entire surplus is extracted. The auction uses a
\emph{uniform} tie-breaking rule, thus allocating the object with equal
probability if the signals are equal across the bidders.

\subsection{A Second Example: Dispersion Along the Diagonal}

The second example maintains symmetry across the bidders but has correlated click-through rates. The resulting information structure is more subtle, reflecting the
need to balance information necessary to support an efficient allocation with information to support competition.

%We present the example in two versions, with a smaller number of signals and
%with a larger number of signals. %This will allow us to hint at the structure
%of the proof for the next general result.

We present the construction, which we refer to as ``dispersion along the diagonal,'' for a small number of signals. Our main result in this section (Theorem~\ref{css}) builds on a generalization of this construction to more signals (see Lemma~\ref{cs} and Figure~\ref{fig:css}).

Consider again two bidders with values $v_1 = v_2 = 1$ and click-through rates either $(1/2,1)$ or $(1,1/2)$, with probability $1/2$ each. The CTRs are thus perfectly negatively correlated and the
social surplus is $1$. The revenue under full-disclosure would be $1/2$,
and under no-disclosure it would be $3/4$. With no-disclosure, the price-per-click
is always competitive, as $\mathbb{E}\left[  r_1\right]  /\mathbb{E}\left[
r_2\right]  =1$, but the auction fails to lead to the efficient allocation
with probability $1/2$. With the following information flow, we attain a
revenue of $0.79 > 3/4$:

\medskip

\begin{center}
\begin{tabular}
[c]{l|c|c|c|c|c|c}%
\toprule
\backslashbox{$r$}{$s$} & $(0.6,0.6)$ & $(0.6,0.75)$ & $(0.75,0.6)$ &
$(0.75,0.9375)$ & $(0.9375,0.75)$ & $(0.9375,0.9375)$\\\hline
$(1/2,1)$ & $2/15$ & $2/5$ & $0$ & $2/5$ & $0$ & $1/15$\\
$(1,1/2)$ & $2/15$ & $0$ & $2/5$ & $0$ & $2/5$ & $1/15$\\%
\bottomrule
\end{tabular}
\end{center}

\medskip

This information flow lowers the probability of an inefficient allocation from
$1/2$ to $1/5$ and attains an equilibrium price closer to 1. In particular,
\[
\min\left\vert \frac{r_{i}}{r_{j}}\right\vert =\frac{1}{2}<\frac{4}{5}%
=\min\left\vert \frac{s_{i}}{s_{j}}\right\vert \text{.}%
\]

% We can improve the revenue further by adding more signals bringing it arbitrarily close to $1$ as we will prove in Theorem \ref{}.

% When we extend the signal from three to four signals, we can increase revenue
% to $0.89$ and increase competition:%
% \[
% \min\left\vert \frac{r_{i}}{r_{j}}\right\vert =\frac{1}{2}<\frac{10}{11}%
% =\min\left\vert \frac{s_{i}}{s_{j}}\right\vert \text{.}%
% \]
% The following information flow attains the revenue and lowers the probability
% of an inefficient allocation to $0.12$.%
% \begin{align}
% &
% \begin{tabular}
% [c]{c|c|c}%
% $(s_{1},s_{2})$ & $(r_{1},r_{2})=(1/2,1)$ & $(r_{1},r_{2})=(1,1/2)$\\\hline
% $(75/121,75/121)$ & $0.0745$ & $0.0745$\\\hline
% $(75/121,15/22)$ & $0.162$ & $0$\\\hline
% $(15/22,75/121)$ & $0$ & $0.162$\\\hline
% $(15/22,3/4)$ & $0.283$ & $0$\\\hline
% $(3/4,15/22)$ & $0$ & $0.283$\\\hline
% $(3/4,33/40)$ & $0.283$ & $0$\\\hline
% $(33/40,3/4)$ & $0$ & $0.283$\\\hline
% $(33/40,363/400)$ & $0.152$ & $0$\\\hline
% $(363/400,33/40)$ & $0$ & $0.152$\\\hline
% $(363/400,363/400)$ & $0.044$ & $0.044$%
% \end{tabular}
% \label{i2}\\
% &  \text{Figure 3: Binary and Correlated Click-Through Rate with }4\text{
% Signals}\nonumber
% \end{align}

The information flow in %the examples (\ref{i2}) and (\ref{i3})
this example generates some
symmetric click-through \emph{signals} in the absence of symmetric
click-through \emph{rates}. The symmetric signals in the presence of
asymmetric rates create some inefficiency in the allocation. But the symmetric
click-through rates create the basis for signals that are adjacent, in the
sense that they are nearby, yet signal the correct ranking of the underlying
click-through rates. If we increase the numbers of signals in the construction
of the information flow, we can then reduce the revenue loss and bring it
arbitrarily close to zero. This is the following content of Lemma \ref{css}.

% \renato{I commented out the part about endogenoue tie breaking again since it seems like it reduces the appeal of the result.}

% We notice that an auction that could use an \emph{endogeneous} tie-breaking
% rule could again achieve full surplus extraction by implementing the following
% conditional allocation probabilities $x\left(  s,r\right)  $:
% \begin{equation}%
% \begin{tabular}
% [c]{c|c}%
% $(r_{1},r_{2})\ \ |\ \ (s_{1},s_{2})$ & $(3/4,3/4)$\\\hline
% $\left(  1/2,1\right)  $ & $\left(  0,1\right)  $\\\hline
% $(1,1/2)$ & $\left(  1,0\right)  $%
% \end{tabular}
% . \label{i3}%
% \end{equation}

%\subsection{A General Result}
\subsection{Optimal Information Structure}

We can now state and establish the first main result, showing that
for any $n$-bidder symmetric environment it is possible to construct an
information structure extracting revenue that is arbitrarily close to the
optimal surplus.

%\pauledit{After talking to Renato I decided to remove the $v_i = 1$ for all $i$ assumption from Section 2. Instead I would suggest we expand the proof of Theorem 4.1 accordingly.}\renato{Note that I changed $v_1 = v_2 = v$ to $v_1 = v_2 = 1$ and changed $h$ to $hv$ and moved the ``wlog $v=1$" to the proof. This is just an aesthetic change, but we can revert it if you prefer.}

%\pauledit{I moved up the corollary and made it a theorem and I have restructured the end of the section. One thing I do not yet understand is whether we need identical valuations. Song had a FN that suggests we don't/ I think we do and maybe we just want to say in Sec 2 that we assume $v_i = 1$ for all $i$, and model differences in valuation by scaling CTRs?} \song{I exactly meant we can do scaling CTRs. I vote for assuming $v_i = 1$ and model differences by scaling CTRs.}

%\songedit{
\begin{theorem}[Full Surplus Extraction in Symmetric Environments]
\label{cs}For every symmetric $n$-bidder environment, there exists a
randomized and calibrated correlated information structure whose revenue is arbitrarily close to full surplus extraction. %\renato{Rephrased the statement a bit. Can you check it.} \pauledit{Added correlated, otw it looks good to me.}
\end{theorem}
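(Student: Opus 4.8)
I would first pass to the rescaled quantities $w_i := v_i r_i$ and $\sigma_i := v_i s_i$. Symmetry says $(w_1,\dots,w_n)$ is exchangeable; since $w_i \le v_i$ almost surely and the $w_i$ share one marginal, their common support lies in $[0,b]$ with $b := \min_i v_i$, so every profile $\sigma \in [0,b]^n$ is admissible (take $s_i = \sigma_i/v_i \le 1$), and calibration \eqref{cal} reads $\E[w_i \mid \sigma_i] = \sigma_i$. A direct computation gives the expected revenue of the click-through auction, on a realization with a unique top $w_{i^\ast}$ that is also the unique top $\sigma_{i^\ast}$, as $w_{i^\ast}\cdot\bigl(\max_{j\ne i^\ast}\sigma_j\bigr)/\sigma_{i^\ast}$, while on a realization where the top score is shared by a set $T$ of bidders all attaining $\max_i w_i$ it equals $\max_i w_i$ for any tie-break inside $T$. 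Hence it suffices to build, for each $\eta>0$, a calibrated correlated information structure such that with probability at least $1-\eta$ the auction winner is efficient and $\max_{j\ne i^\ast}\sigma_j \ge (1-\eta)\,\sigma_{i^\ast}$: on that event the revenue is at least $(1-\eta)\max_i w_i$, so $\Rev \ge (1-\eta)\,\E[\max_i w_i] - \eta\max_i v_i$, which tends to the full surplus $\E[\max_i v_i r_i]$ as $\eta\to 0$.

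The structure I would build is the $n$-bidder version of ``dispersion along the diagonal.'' Fix a large $L$ and a grid $0 = c_0 < c_1 < \dots < c_L \le b$ with $c_k/c_{k+1} \ge 1-\eta$ for every $k \ge 1$ (for instance $c_k = b(1-\eta)^{L-k}$ for $k\ge1$, with $c_0=0$). Given a draw $w$, sort it $w_{(1)} \ge \dots \ge w_{(n)}$ and: (i) with probability $1-O(\eta)$ place the two top-ranked bidders on an adjacent pair of grid levels $(c_{k+1},c_k)$, the larger on $c_{k+1}$---so the winner is efficient and the two largest scores are within a factor $1-\eta$---while the remaining $n-2$ bidders go on levels below $c_k$; (ii) with the leftover probability run a ``flip/diagonal'' correction that parks two or more bidders on a common level, exploiting the slack present when several of the largest $w$'s already coincide (the same device as sending the diagonal realizations of ``flipping the square'' to order-reversed signals), together with an $O(\eta)$ perturbation of the placement in (i); the level $c_0=0$ is left free to absorb whatever mass is needed to calibrate the bottom of the grid. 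The index $k$ in (i), the placement of the lower $n-2$ bidders, and the weights of (ii) are all chosen as permutation-equivariant functions of $w$. Since we only redistribute the prior mass of each $w$ among signal profiles, the marginal on $w$ stays $G$.

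Three checks remain. \emph{Calibration:} for each level $c_k$ the conditional mean of $w_i$ over the mass routed to $c_k$ must equal $c_k$; writing this out gives one linear equation per level, relating the ``dispersion'' weights on $c_{k-1}$ and $c_k$ to the ``diagonal'' weight at $c_k$, and the correction in step (ii) is defined to be the solution of this system. \emph{Off the bad event}, efficiency and the $(1-\eta)$ ratio hold by construction, so the first paragraph applies. The step I expect to be the main obstacle is proving this calibration system always admits a nonnegative solution whose total correction---and hence the inefficiency probability and the revenue loss it causes---is $O(\eta)$ while $L\to\infty$, uniformly over all symmetric $G$. Concretely, one would pick the dispersion weights $p_k$ along the grid with $p_k/p_{k-1}\approx(1-c_k)/c_k$, perturbed by a tiny amount in the direction that keeps the induced diagonal weights nonnegative, and then check---using that the geometric grid makes the spacing, and thus the quantity $|1-2c_k|$ near the crossing level $c_k\approx\tfrac{1}{2}$, of order $\eta$, matched by $p_{\mathrm{peak}}=\Theta(\eta)$---that the total diagonal mass is $O(\eta)$; the passage to $n$ bidders is if anything easier, as the freedom to place the $n-2$ non-competing bidders adds variables to the system. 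The residual care is in handling atoms of the common marginal (which are exactly what creates the free ties used in (ii)), diffuse stretches of the marginal (handled by dispersion across many consecutive levels), and gaps in the support (handled by locally refining the grid, or by charging the small probability that the efficient winner and its runner-up straddle a gap to the $\eta$ loss).
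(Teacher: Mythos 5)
Your first paragraph (rescaling to $w_i=v_ir_i$, the revenue formula $w_{i^\ast}\max_{j\neq i^\ast}\sigma_j/\sigma_{i^\ast}$, and the reduction to ``efficient winner plus top-two scores within a factor $1-\eta$ with probability $1-\eta$'') is correct and matches the target the paper also aims at. The gap is in the construction itself, and you have named it yourself: you never establish that your global calibration system --- one linear equation per grid level, coupling all CTR profiles whose mass is routed to that level --- admits a nonnegative solution whose total ``diagonal correction'' mass is $O(\eta)$ uniformly over all symmetric priors $G$. That is not a technical loose end; it is the entire content of the theorem, and the heuristic $p_k/p_{k-1}\approx(1-c_k)/c_k$ plus the assertion that ``the passage to $n$ bidders is if anything easier'' does not substitute for it. The paper's proof exists precisely to avoid ever facing this coupled system.

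The missing idea is a decomposition that decouples the calibration constraints before any grid is introduced. By exchangeability, each CTR profile in which bidders $i$ and $j$ hold the top two (value-weighted) rates has the same prior probability as the profile obtained by swapping those two coordinates. Pairing each profile with its top-two swap decomposes $G$ into a mixture of two-point, perfectly negatively correlated sub-distributions of the form $\{(l,h),(h,l)\}$ for the competing pair (with the other $n-2$ bidders strictly below and simply fully disclosed). On each such two-point block the dispersion-along-the-diagonal weights can be written down in closed form and verified nonnegative and calibrated directly (Lemma~\ref{css} and Corollary~\ref{cor:n_bidder_symmetric}); and since a mixture of calibrated information structures is calibrated with revenue equal to the mixture of revenues (Lemma~\ref{lem:signal_composition}), the per-block schemes compose into a global one with no system left to solve. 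Your proposal uses exchangeability only to identify the common marginal, not to perform this pairing, so the blocks in your construction remain entangled through shared grid levels and the nonnegativity/smallness claim is unproven. Either import the pairing-plus-composition device, or supply an actual proof of solvability of your system; as written the argument is incomplete.
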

%}

As a building block, we will consider the special
case of a symmetric environment of two bidders where $v_1 = v
_2 = v$ and
\begin{equation*}
\Pr[r = (l,h)]=\Pr[r = (h,l)]=1/2, %\label{bs}%
\end{equation*}
for two values $0\leq l < h \leq 1$. We will then reduce the general symmetric
case to a composition of information structures for pairs $(h,l)$. The optimal information structure will be to disperse the signals along the diagonal as depicted in Figure \ref{fig:css}.

% with negatively click-through rates. Suppose there are two possible
% ctr values, $0\leq l<h$, and there are two symmetric buyers, i.e.,
% \begin{equation}
% \Pr[(l,h)]=\Pr[(h,l)]=1/2. \label{bs}%
% \end{equation}

% {\color{blue}This is the symmetric result with full surplus. We have also a more general result in the next section that recovers this one in the limit (but is perhaps weaker). It would be nice to leave the section "Beyond the symmetric environment" and say we have two directions beyond symmetric.}

\begin{lemma}
[Dispersion Along the Diagonal]
\label{css} Consider the symmetric setting with two bidders with values $v_1 = v_2 = v$ where the click-through rate vector is either $(l,h)$ or $(h,l)$, each with probability $1/2$. Then, for every $\epsilon >0$, there is a calibrated, correlated information structure with revenue $vh - \epsilon$.
\end{lemma}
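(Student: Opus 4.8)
The efficient surplus equals $\max(vl,vh)=vh$ in every state, so the goal is to extract essentially all of it. The plan is to build, for a parameter $N$, an information structure whose signals are \emph{dispersed along the diagonal}: take $N+1$ levels $a_j=l\,(h/l)^{j/N}$, $j=0,\dots,N$, so that $a_0=l$, $a_N=h$, and consecutive levels are multiplicatively close, $a_j/a_{j+1}=(l/h)^{1/N}\to 1$ as $N\to\infty$. In each of the two states place all the mass on the ``adjacent off-diagonal'' signal pairs $\{(a_j,a_{j+1}),(a_{j+1},a_j)\}$, $j=0,\dots,N-1$, symmetrically across bidders and always oriented so that the bidder whose CTR is $h$ receives the larger signal $a_{j+1}$. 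Let $w_j$ be the probability (the same in both states, by symmetry) of selecting the pair joining levels $j$ and $j+1$. I first do the case $l>0$; the case $l=0$ needs only a vanishing correction, indicated at the end.

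\textbf{Calibration.} By symmetry it suffices to check bidder $1$. An interior level $a_j$ ($1\le j\le N-1$) is received by bidder $1$ in exactly two circumstances: as the low coordinate of the pair joining $j$ and $j+1$ in state $(l,h)$, where $r_1=l$, with probability $w_j$; and as the high coordinate of the pair joining $j-1$ and $j$ in state $(h,l)$, where $r_1=h$, with probability $w_{j-1}$. Thus $\E[r_1\mid s_1=a_j]=a_j$ is equivalent to
$$w_j\,(a_j-l)=w_{j-1}\,(h-a_j),\qquad j=1,\dots,N-1,$$
which, together with $\sum_{j=0}^{N-1}w_j=1$, determines all the $w_j$; since $l<a_j<h$ for interior $j$ every ratio $w_j/w_{j-1}$ is positive, so all $w_j\in(0,1)$. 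The extreme levels are calibrated for free: bidder $1$ receives $a_0=l$ only when $r_1=l$, and $a_N=h$ only when $r_1=h$. The structure is correlated (and not independent) because the other bidder's signal reveals which of the two states obtains.

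\textbf{Revenue.} Conditioning on state $(l,h)$ and on the pair joining $j$ and $j+1$: bidder $2$, whose CTR is $h$, has the larger score $va_{j+1}$ and wins; its cost-per-click is $va_j/a_{j+1}$ and a click occurs with probability $r_2=h$, so the realized revenue is $vh\cdot a_j/a_{j+1}=vh\,(l/h)^{1/N}$. As this is independent of $j$ and of the state, $\Rev=vh\,(l/h)^{1/N}$, which exceeds $vh-\epsilon$ once $N>\log(h/l)\,/\,\log\!\bigl(vh/(vh-\epsilon)\bigr)$. When $l=0$ one cannot take $a_0=0$ (it would force the winner's price to $0$): instead take $a_0=\delta>0$ small with the same geometric spacing up to $a_N=h$, and add a diagonal pair $(a_0,a_0)$ of mass $d_0=\delta w_0/(h-2\delta)$ in each state (rescaling the $w_j$ so everything still sums to one) to restore calibration at the bottom; since $d_0\le 2\delta/h\to 0$, the off-diagonal pairs still give $vh(\delta/h)^{1/N}$ each and the diagonal pairs give at least $0$, so taking $\delta$ small and then $N$ large again yields $\Rev>vh-\epsilon$.

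\textbf{Main obstacle.} The crux is meeting all the calibration constraints \emph{simultaneously} while keeping consecutive signals multiplicatively close --- the latter being what pushes the winner's price-per-click, and hence the revenue, toward $vh$. This works because calibration collapses to the single linear recursion above on the mixing weights, which is solvable with strictly positive weights precisely because every interior level lies strictly inside $(l,h)$, a property compatible with arbitrarily fine geometric spacing. The only further delicacy is the endpoint bookkeeping: it is clean for $l>0$ (no diagonal mass), and needs the vanishing mass $d_0$ above when $l=0$.
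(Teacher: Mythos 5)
Your construction is correct, and it is the same ``dispersion along the diagonal'' idea as the paper's proof: a geometric ladder of signals with consecutive ratio tending to $1$, mass placed only on adjacent off-diagonal pairs oriented to preserve the efficient ranking, and weights pinned down by the one-step linear recursion $w_j(a_j-l)=w_{j-1}(h-a_j)$ that calibration forces. The difference is in the choice of grid. The paper centers its signals at the mean $(l+h)/2$ and keeps them strictly inside $(l,h)$; as a result the calibration constraints at the two extreme signals cannot be met by off-diagonal mass alone, and the paper must add diagonal masses $y$ and $z$ at $(s_{-K},s_{-K})$ and $(s_K,s_K)$ and then spend most of the proof showing $y,z<\epsilon/8$ (via a bound on $x_{-K}$ through the count $K+1$ of weights). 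You instead anchor the grid at the endpoints, $a_0=l$ and $a_N=h$, so the extreme signals are full-disclosure signals and are calibrated for free, with no diagonal mass and no probability estimates needed when $l>0$; the price is that the geometric ladder degenerates when $l=0$, which you correctly repair with a small $\delta$-anchored grid plus a vanishing diagonal mass $d_0=\delta w_0/(h-2\delta)$ at the bottom (the paper's mean-centered grid avoids this case distinction since $(l+h)/2=h/2>0$ even for $l=0$). Net: your endpoint-anchored variant buys a substantially shorter revenue argument --- every off-diagonal pair yields exactly $vh\,(l/h)^{1/N}$ --- at the cost of one extra case; both are valid.
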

\begin{figure}[h]
\centering
\begin{tikzpicture}[scale=7]

\draw (0,0) -- (1,0) -- (1,1) -- (0,1) -- cycle;

\draw (.35, .35) -- (.35, .45) -- (.55,.45) -- (.55,.65) -- (.65,.65) -- (.65, .55) -- (.45, .55) -- (.45, .35) -- cycle;

% \node[circle,fill,inner sep=1pt] at (0,0) {};
\node[circle,fill,inner sep=1pt] at (0,1) {};
\node[anchor=east] at (0,1) {$(l, h)$};

\node[circle,fill,inner sep=1pt] at (1,0) {};
\node[anchor=west] at (1,0) {$(h, l)$};
% \node[circle,fill,inner sep=1pt] at (1,1) {};
% \node[circle,fill,inner sep=1pt] at (.45, .45) {};
\node[circle,fill,inner sep=1pt] at (.55, .45) {};
\node[circle,fill,inner sep=1pt] at (.45, .55) {};
% \node[circle,fill,inner sep=1pt] at (.55, .55) {};

\node[circle,fill,inner sep=1pt] at (.45, .35) {};
\node[circle,fill,inner sep=1pt] at (.35, .45) {};
% \node[circle,fill,inner sep=1pt] at (.35, .25) {};
\node[circle,fill,inner sep=1pt] at (.35, .35) {};
% \node[circle,fill,inner sep=1pt] at (.25, .25) {};

\node[circle,fill,inner sep=1pt] at (.65, .55) {};
\node[circle,fill,inner sep=1pt] at (.55, .65) {};
% \node[circle,fill,inner sep=1pt] at (.75, .65) {};
\node[circle,fill,inner sep=1pt] at (.65, .65) {};
% \node[circle,fill,inner sep=1pt] at (.75, .75) {};

\draw[->, line width=1pt, color=blue] (0,1) to [bend left] (.65,.65);
\draw[->, line width=1pt, color=blue] (0,1) -- (.55,.65);
\draw[->, line width=1pt, color=blue] (0,1) -- (.45,.55);
\draw[->, line width=1pt, color=blue] (0,1) -- (.35,.45);
\draw[->, line width=1pt, color=blue] (0,1) to [bend right] (.35,.35);

\draw[->, line width=1pt, color=blue] (1,0) to [bend right] (.65,.65);
\draw[->, line width=1pt, color=blue] (1,0) -- (.65,.55);
\draw[->, line width=1pt, color=blue] (1,0) -- (.55,.45);
\draw[->, line width=1pt, color=blue] (1,0) -- (.45,.35);
\draw[->, line width=1pt, color=blue] (1,0) to [bend left] (.35,.35);
% \draw[->, line width=1pt, color=blue] (0,0) to [bend right] (.55,.55);
% \draw[->, line width=1pt, color=blue] (1,1) to [bend right] (.45,.45);
\end{tikzpicture}
\caption{Depiction of the ``dispersion along the diagonal'' structure for Lemma~\ref{css}.}
\label{fig:css}
\end{figure}
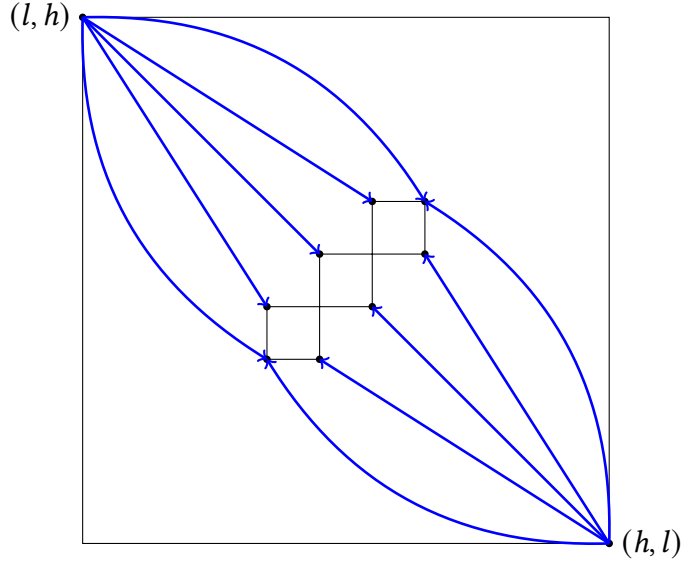
\begin{proof}
%\songedit{
We assume without loss of generality that $v=1$. To prove the lemma, we construct an information structure with a finite set of signals, $S \subset [0, 1]$. The key to this construction is to (i) properly select the signal set $S$, and (ii) come up with a discretized and calibrated information structure $x(r, s)$ for $r \in \{(l, h), (h, l)\}$ and $s \in S$ that achieves almost optimal revenue.

We consider the following construction with parameters $\delta > 0$ and $x_0 > 0$ to be determined later (see Figure~\ref{fig:css} for an illustration).
\begin{enumerate}
  \item Signal set $S = \{s_{-K}, \ldots, s_0, s_1, \ldots, s_K\}$, where $s_0 = (l + h) / 2$, $s_k = s_0 \cdot (1 + \delta)^k$ for $-K \leq k \leq K$,
  \[K = \left\lfloor\log_{(1+\delta)}\frac{2h}{l+h}\right\rfloor - 1;\]
  \item $x((l, h), (s_k, s_{k+1})) = x((h, l), (s_{k+1}, s_k)) = x_k$ for $-K \leq k \leq K-1$, where
  \begin{align*}
    x_{k}=\frac{h-s_{k}}{s_{k}-l}\cdot x_{k-1}=x_{0}\prod_{\kappa=1}^{k}
    \frac{h-s_{\kappa}}{s_{\kappa}-l},&~\text{when}~1 \leq k \leq K-1, \\
    x_k=\frac{s_{k+1}-l}{h-s_{k+1}}\cdot x_{k+1}=x_{0}\prod_{\kappa=k}^{-1}\frac{s_{\kappa+1}-l}{h-s_{\kappa+1}},&~\text{when}~-K \leq k \leq -1;
  \end{align*}
  \item $x((l, h), (s_{-K}, s_{-K})) = x((h, l), (s_{-K}, s_{-K})) = y$, where
  \[y = \frac{s_{-K}-l}{l+h-2s_{-K}}\cdot x_{-K};\]
  \item $x((l, h), (s_K, s_K)) = x((h, l), (s_K, s_K)) = z$, where
  \[z = \frac{h-s_{K}}{2s_{K}-l-h}\cdot x_{K-1}.\]
\end{enumerate}
In the rest of the proof, we first verify that the construction is a valid calibrated and correlated information structure, then show that by choosing a sufficiently small $\delta$, the revenue is at least $h - \epsilon$.\\

\noindent\emph{Step 1: We verify that the signals in $S$ are valid probabilities, i.e., $S \subset [0, 1]$.}\\

For sufficiently small $\frac{h - l}{3(h + l)} > \delta > 0$, $K - 1 = \big\lfloor\log_{(1+\delta)}\frac{2h}{l+h}\big\rfloor - 1 \geq 1$. For all $-K \leq k \leq K$,
\begin{align*}
  s_k &\leq s_K = \frac{l+h}{2} \cdot (1+\delta)^{K} \leq \frac{l+h}{2} \cdot \frac{2h}{l + h} \cdot \frac1{1 + \delta} = \frac{h}{1+\delta} < h \leq 1; \\
  s_k &\geq s_{-K} = \frac{l+h}{2} \cdot (1+\delta)^{-K} \geq \frac{l+h}{2} \cdot \frac{l + h}{2h} \cdot (1 + \delta) \geq \frac{4hl}{4h} \cdot (1+\delta) > l \geq 0.
\end{align*}
Therefore, $S \subset [0, 1]$ is a valid finite signal space.\\

\noindent\emph{Step 2: We verify that the parameters $x_k, y, z$ are non-negative.}\\

Since $s_k \in (l, h)$, by the construction of $x_k$ for $k \neq 0$, $x_k / x_0 > 0$. For $y$ and $z$, since $s_{-K} < s_0 = (l + h) / 2 < s_K$,
\begin{gather*}
  y / x_{-K} = \frac{s_{-K}-l}{l+h-2s_{-K}} = \frac{(s_{-K} - l) / 2}{(l + h) / 2 - s_{-K}} > 0, \quad
  z / x_{K-1} = \frac{h-s_{K}}{2s_{K}-l-h} = \frac{(h-s_{K}) / 2}{s_{K} - (l + h) / 2} > 0.
\end{gather*}
Therefore, as long as $x_0 > 0$, all probability terms are positive.

\noindent\emph{Step 3: We verify that we can choose $x_0$ such that:}\\
$$\sum_s x((l, h), s) = \sum_s x((h, l), s) = 1/2$$

As we showed, the coefficients, $x_k / x_0$, $y / x_0$, and $z / x_0$ are all positive and fixed. Then with $x_0$ defined below,
\begin{gather*}
  x_0 = \frac12 \cdot \frac1{y / x_0 + z / x_0 + \sum_{k} x_k / x_0} > 0,
\end{gather*}
we have
$
  \sum_s x((l, h), s) = \sum_s x((h, l), s) = y + z + \sum_k x_k = 1/2.
$ \\

\noindent\emph{Step 4: We check that the calibration constraints are satisfied.}\\

For $-K + 1 \leq k \leq K - 1$, we verify the calibration constraint for sending signal $s_k$ to bidder $1$ as follows:
\begin{align*}
  {}&{}\sum_{s \in S} x((l, h), (s_k, s)) \cdot (l - s_k) + x((h, l), (s_k, s)) \cdot (h - s_k)
  = x_k \cdot (l - s_k) + x_{k-1} \cdot (h - s_k)
  \\
  ={}&{} x_{k-1} \cdot \frac{h - s_k}{s_k - l} \cdot (l - s_k) + x_{k-1} \cdot (h - s_k) = 0.
\end{align*}
When sending signal $s_K$ to bidder $1$:
\begin{align*}
  {}&{}\sum_{s \in S} x((l, h), (s_K, s)) \cdot (l - s_K) + x((h, l), (s_K, s)) \cdot (h - s_K)
  = z \cdot (l - s_K) + x_{K-1} \cdot (h - s_K) + z \cdot (h - s_K)
  \\
  ={}&{} \frac{h-s_{K}}{2s_{K}-l-h}\cdot x_{K-1} \cdot (l - s_K) + x_{K-1} \cdot (h - s_K) + \frac{h-s_{K}}{2s_{K}-l-h}\cdot x_{K-1} \cdot (h - s_K) = 0.
\end{align*}
When sending signal $s_{-K}$ to bidder $1$:
\begin{align*}
  {}&{}\sum_{s \in S} x((l, h), (s_{-K}, s)) \cdot (l - s_{-K}) + x((h, l), (s_{-K}, s)) \cdot (h - s_{-K})  \\
  ={}&{} x_{-K} \cdot (l - s_{-K}) + y \cdot (h - s_{-K}) + y \cdot (h - s_{-K})
  \\
  ={}&{} x_{-K} \cdot (l - s_{-K}) + \frac{s_{-K}-l}{l+h-2s_{-K}} \cdot x_{-K} \cdot (l - s_{-K}) +  \frac{s_{-K}-l}{l+h-2s_{-K}} \cdot x_{-K} \cdot (h - s_{-K}) = 0.
\end{align*}

As the construction is symmetric for bidder $1$ and $2$, we omit the verification of the calibration constraints for bidder $2$.\\

\noindent\emph{Step 5: Now that we verified this is a calibrated signaling scheme, we bound its revenue.}\\

Note that when $s \in \{(s_k, s_{k+1}), (s_{k+1}, s_k)\}_{k=-K}^{K-1}$, the auction allocates the item efficiently and the auctioneer extracts almost all the surplus. More specifically, when $s = (s_k, s_{k+1})$ for CTR profile $(l, h)$, or $s = (s_{k+1}, s_k)$ for CTR profile $(h, l)$, the conditional expected revenue is
\[h \cdot \frac{s_k}{s_{k+1}} = h / (1 + \delta) > h - \epsilon / 2,~\text{when}~\delta < \frac{\epsilon}{2h - \epsilon}.\]

Therefore, we remain to prove that the probability of not extracting revenue $h / (1 + \delta)$ is sufficiently small, i.e., $y, z < \epsilon / 8$.

Recall that $h \geq s_K \cdot (1 + \delta)$, with sufficiently small $\delta < (h - l) / 2(h + l)$,
\begin{align*}
  z {}&{}= x_{K-1} \cdot \frac{h-s_{K}}{2s_{K}-l-h} \leq x_{K-1}\cdot\frac{h-h/(1+\delta)}{2h/(1+\delta)-l-h} \\
  &{}= x_{K-1} \cdot \frac{\delta h}{h-l-\delta \cdot (h+l)} < \delta \cdot \frac{2h}{h - l} \cdot x_{K-1} < \delta \cdot \frac{2h}{h - l},
\end{align*}
which is less than $\epsilon / 8$ when $\delta < \frac{h - l}{16h} \cdot \epsilon$.

Similarly, $s_{-K} = (1 + \delta)^{-K} \cdot (l + h) / 2 \leq \frac{l + h}{2h} \cdot (1 + \delta) \cdot (l + h) / 2 < (1 + \delta)(h + 3l)/4$, with sufficiently small $\delta < (h - l) / 2(h + 3l)$,
\begin{align*}
  y {}&{}= x_{-K} \cdot \frac{s_{-K}-l}{l+h-2s_{-K}} \leq
  x_{-K} \cdot \frac{(1 + \delta)(h + 3l)/4-l}{l+h-(1 + \delta)(h + 3l)/2}  \\
  &{}= x_{-K} \cdot \frac{h - l + \delta \cdot (h + 3l)}{2h-l- 2\delta \cdot (h+3l)} < \frac32 \cdot x_{-K}.
\end{align*}

It suffices to show $x_{-K} < \epsilon / 12$ with a sufficiently small $\delta$.
Note that $s_{-K} < \cdots < s_{-1} < s_0 = (l + h) / 2 < s_1 < \cdots < s_K$, we then have $x_0 > x_1 > \cdots > x_{K-1}$ and $x_0 > x_{-1} > \cdots > x_{-K}$. Then
\[1/2 = y + z + \sum_{k = -K}^{K-1} x_k > (K + 1) \cdot x_{-K}.\]
Therefore, when $\delta < \frac{\epsilon}{6} \cdot \log \frac{2h}{l + h}$, $x_{-K}$ can be bounded by $\epsilon / 12$:
\[x_{-K} < 1/2(K+1) < \frac1{2\log_{(1 + \delta)}(2h/(l+h))} = \frac{\log(1 + \delta)}{2\log (2h / (l + h))} < \frac{\delta}{2\log (2h / (l + h))}.\]
% which is less than $\epsilon / 12$ when

In summary, for any given $\epsilon > 0$, we can conclude the proof with a sufficiently small $\delta$:
\begin{gather*}
  \delta < \frac{h - l}{3(h + l)} \Imply K - 1 \geq 1, \\
  \delta < \frac{\epsilon}{2h - \epsilon} \Imply h/(1 + \delta) \geq h - \epsilon/2, \\
  \delta < \min\left(\frac{h - l}{2(h + l)}, \frac{h - l}{16h} \cdot \epsilon\right) \Imply z < \epsilon / 8, \\
  \delta < \min\left(\frac{h - l}{2(h + 3l)}, \frac{\epsilon}{6} \cdot \log \frac{2h}{l + h}\right) \Imply y < \epsilon / 8. \qedhere
\end{gather*}
\end{proof}

%\paragraph{Generalization to $n$ bidders}
%The above result
Lemma~\ref{css} is stated for a very special case within the class of
symmetric environments. Both bidders have value $1$, each bidder has only one of \emph{two possible}
click-through rates, and the click-through rates are \emph{perfectly negatively}
correlated. However, the result can now be extended immediately to a general
symmetric environment.
The extension is based on two simple observations:

\begin{corollary}[High-Low Pairing]\label{cor:n_bidder_symmetric}
Consider $n$ bidders with values $v_i \equiv v$ and click-through rates uniformly distributed between two profiles where $r$ and $r'$ are such for two bidders $i,j \in [n]$ that we have: $r_i = r'_j > r_j = r'_i \geq r_k, r'_k$ for any $k \neq i,j$. Then, for any $\epsilon > 0$ there is a calibrated, correlated information structure with revenue $v r_i - \epsilon$.
\end{corollary}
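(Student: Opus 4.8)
The plan is to reduce the $n$-bidder statement to the two-bidder Lemma~\ref{css} by a direct embedding argument. Observe that in the hypothesized environment only two bidders $i$ and $j$ ever matter for the allocation and the price: in both CTR profiles $r$ and $r'$, the CTRs of all other bidders $k \neq i,j$ are weakly below $\min(r_i, r_j) = r_j = r'_i$, and since all values are equal to $v$, the efficient winner is always one of $\{i,j\}$ (with $i$ winning under $r$ and $j$ winning under $r'$), and the second-highest $v_k s_k$ among the losers is what sets the per-click price of the winner. So the idea is: run the ``dispersion along the diagonal'' information structure of Lemma~\ref{css} on the pair $(i,j)$ with the two values $(l,h) = (r_j, r_i)$, and fully bundle every other bidder $k$, i.e.\ send $s_k = \E[r_k]$ deterministically. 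Because $r_k$ takes the same value under both profiles (it equals $r_k = r'_k \leq r_j$), fully bundling is trivially calibrated and the signal $s_k$ is just that constant, which is $\leq r_j = l < s_0$, hence strictly below every signal in the set $S$ that Lemma~\ref{css} uses for bidders $i$ and $j$.

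The first step is to write down this composite information structure formally: its signal space is $S \times S \times \prod_{k \neq i,j}\{c_k\}$ where $c_k$ is the constant CTR of bidder $k$, and $x$ is the product of the Lemma~\ref{css} weights on the $(i,j)$-coordinates with point masses on the $c_k$'s. The second step is to check calibration, which is immediate: on coordinates $i,j$ it is exactly the calibration already verified in Lemma~\ref{css}, and on each coordinate $k \neq i,j$ the posterior of $r_k$ given any signal is the constant $c_k$, since $r_k$ is deterministic. The third step is to verify that the allocation and price coincide with the two-bidder case. Here one uses that for every realized signal profile, $v s_k = v c_k \leq v l < v s_i'$ for any signal $s_i' \in S$ of bidder $i$ (and likewise for $j$), so no bidder $k \neq i,j$ can ever win, and moreover the losing bids of the $k$'s are dominated by the losing bid among $\{i,j\}$ whenever the latter uses a signal from $S$; since in Lemma~\ref{css}'s construction the loser among $\{i,j\}$ always receives a signal in $S$ that is at least $s_{-K} > l \geq c_k$, the price charged to the winner is exactly the same $\max_{j'\neq i^\ast} v s_{j'} / s_{i^\ast}$ as in the two-bidder analysis. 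Hence the conditional expected revenue in every signal event matches Lemma~\ref{css} exactly, and the overall revenue is $v h - \epsilon = v r_i - \epsilon$.

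I expect no serious obstacle here — the corollary is essentially a bookkeeping exercise layered on top of Lemma~\ref{css}. The one point that needs a clean word is the inequality chain guaranteeing that the bundled bidders never disturb the $(i,j)$-subauction: this hinges on $s_{-K} > l$, which is precisely one of the bounds established in Step~1 of the proof of Lemma~\ref{css}, together with the hypothesis $r_k = r'_k \leq r_j = l$. So I would state that chain explicitly and cite Step~1 of Lemma~\ref{css}, and the rest follows.
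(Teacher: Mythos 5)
Your proof is correct and follows essentially the same route as the paper's: reduce to Lemma~\ref{css} on the pair $(i,j)$ and neutralize the remaining bidders with a trivially calibrated signal that stays strictly below $s_{-K}$, so that they never win and never set the price (the paper uses full-disclosure for the bidders $k \neq i,j$ while you use full bundling, but the difference is immaterial since both are calibrated and both keep $v s_k \leq v r_j = v l < v s_{-K}$). One small inaccuracy to fix: the hypothesis does not assert $r_k = r'_k$, only $r_k, r'_k \leq r_j$, so your justification that full bundling is ``trivially calibrated'' because $r_k$ is deterministic rests on a premise that is not given; fortunately no-disclosure is calibrated for any distribution of $r_k$, and $\E[r_k] = (r_k + r'_k)/2 \leq r_j = l$ still holds, so the rest of your argument goes through unchanged.
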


\begin{proof}
Treat bidders $i$ and $j$ as the high/low pair in Lemma \ref{css} and do full-disclosure for any other bidder. The signal is still calibrated and only $i$ and $j$ win the item since their signals will be above the signals of any other bidder. Hence, the revenue bound in Lemma \ref{css} still holds.
\end{proof}

The next lemma shows that information structures can be composed, in the sense that if we decompose a distribution of click-through rates and design an information structure for each of them, we can later compose them without loss in calibration.

\begin{lemma}[Signal Composition]\label{lem:signal_composition}
Let $\G'$ and $\G''$ be distributions over click-through rate profiles $r$ of $n$ bidders and let $\F'$ and $\F''$ be corresponding calibrated information structures  given by joint distributions over vector pairs $(r,s)$ such that the $r$-marginals are $\G'$ and $\G''$ respectively.

Let $\G$ be the distribution obtained by sampling from $\G'$ with probability $\lambda$, and $\G''$ with probability $1-\lambda$. Define a distribution $\F$ similarly. Then, $\F$ is a calibrated information structure for $\G$ and $$\Rev(\F) = \lambda \Rev(\F') + (1-\lambda) \Rev(\F'').$$
\end{lemma}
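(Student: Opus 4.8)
The plan is to verify the three claims in order: that $\F$ is well-defined as a joint distribution with the correct $r$-marginal, that it satisfies the calibration constraints, and that its revenue decomposes linearly. First I would spell out the construction: since $\F'$ and $\F''$ are joint distributions over pairs $(r,s)$, define $\F$ to be the mixture that with probability $\lambda$ draws $(r,s)$ from $\F'$ and with probability $1-\lambda$ draws $(r,s)$ from $\F''$; equivalently, in the discrete notation, $x(r,s) = \lambda x'(r,s) + (1-\lambda) x''(r,s)$. Summing over $s$ and using that $\F'$, $\F''$ have $r$-marginals $\G'$, $\G''$ gives $\sum_s x(r,s) = \lambda g'(r) + (1-\lambda) g''(r) = g(r)$, so $\F$ is a valid information structure for $\G$.

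Next I would check calibration. The calibration constraint in the discrete form is $\sum_{(r,s):\, s_i = s_i'} x(r,s)(r_i - s_i') = 0$ for every bidder $i$ and every signal value $s_i'$. Since this expression is linear in $x$, and both $x'$ and $x''$ satisfy it (as $\F'$ and $\F''$ are calibrated), so does $x = \lambda x' + (1-\lambda) x''$. The one subtlety worth a sentence is that a signal value $s_i'$ in the support of $\F'$ but not $\F''$ (or vice versa) causes no problem: the sum for the other structure is simply empty, hence zero. So calibration is immediate from linearity.

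Finally, the revenue identity. The expected revenue of an information structure is $\E_{(r,s)\sim \F}[\,r_{i^\ast(s)} \cdot p_{i^\ast(s)}(s)\,]$, where $i^\ast(s)$ is the winner determined by $s$ (with the fixed tie-breaking rule) and $p_{i^\ast(s)}(s) = \max_{j\neq i^\ast} v_j s_j / s_{i^\ast}$. Crucially, the winner and the per-click price are functions of the signal vector $s$ alone, so the revenue is the expectation of a single fixed (measurable) function $\phi(r,s) := r_{i^\ast(s)} p_{i^\ast(s)}(s)$ of the pair $(r,s)$ — the same function for $\F'$, $\F''$, and $\F$. Since $\F$ is the $\lambda$-mixture of $\F'$ and $\F''$, the expectation of $\phi$ under $\F$ is $\lambda \E_{\F'}[\phi] + (1-\lambda)\E_{\F''}[\phi] = \lambda \Rev(\F') + (1-\lambda)\Rev(\F'')$, which is the claimed identity.

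There is no real obstacle here — the lemma is essentially the observation that all the relevant constraints (marginal, calibration) and the objective (revenue) are linear in the underlying joint distribution $x(r,s)$, and mixing distributions is a convex operation. The only point requiring a word of care is making sure the signal spaces of $\F'$ and $\F''$ are handled correctly when they differ; one can either take the signal space of $\F$ to be the union of the two, or simply note that the linearity argument is insensitive to this since missing entries are zero. I would state the construction, then dispatch the three verifications in three short paragraphs exactly as above.
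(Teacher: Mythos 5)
Your proof is correct and follows essentially the same route as the paper: the marginal and revenue claims are immediate from the mixture structure, and calibration is the only part needing verification, which you handle via linearity of the unnormalized constraint while the paper writes the posterior $\E_\F[r_i \mid s_i = s_i']$ as a ratio and observes that a weighted average of two ratios each equal to $s_i'$ is again $s_i'$ --- these are trivially equivalent arguments. Your added remarks on differing signal supports and on the revenue being the expectation of a fixed function of $(r,s)$ are correct refinements of points the paper treats as obvious.
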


\begin{proof}
The $r$-marginal of $\F$ is clearly $\G$ and $\Rev(\F) = \lambda \Rev(\F') + (1-\lambda) \Rev(\F'')$. The only non-trivial part is to check that $\F$ is calibrated, which we do below:
\begin{align*}
\E_\F[r_i | s_i = s'_i] & =
\frac{\E_\F[r_i \one \{s_i - s'_i\}]}{\Pr_\F(s_i = s'_i)}
= \frac{\lambda\E_{\F'}[r_i \one \{s_i - s'_i\}] + (1-\lambda) \E_{\F''}[r_i \one \{s_i - s'_i\}]
}{\lambda\Pr_{\F'}(s_i = s'_i) + (1-\lambda) \Pr_{\F''}(s_i = s'_i)} \\
& = \frac{\lambda\E_{\F'}[s'_i \one \{s_i - s'_i\}] + (1-\lambda) \E_{\F''}[s'_i \one \{s_i - s'_i\}]
}{\lambda\Pr_{\F'}(s_i = s'_i) + (1-\lambda) \Pr_{\F''}(s_i = s'_i)} = s'_i. \qedhere
\end{align*}
\end{proof}

% \songedit{
% \begin{corollary}
% \label{cs}For every symmetric environment, there exists a
% randomized and calibrated signaling scheme
% that approximates full surplus extraction.
% \end{corollary}
% }

%\pauledit{TODO: Update the proof so that it says how we deal with the case where $v_i$'s are non-identical.}
%\renato{I added more details and broke down the argument in two lemmas. I also removed the label (sketch) as I feel it is a complete proof. What do you think? The original proof is commented in case you prefer to revert it.}

%\songedit{
\begin{proof}[Proof of Theorem~\ref{cs}]
Consider an $n$-bidder symmetric environment and assume for simplicity that the distribution over CTRs is discrete. For every profile of CTRs where bidder $i$ has the highest CTR $r_i$ and bidder $j$ has the second highest CTR $r_j$ (breaking ties lexicographically), we can pair with a profile where the CTRs of $i$ and $j$ are reversed. This leads to a decomposition of the original distribution of CTRs into distributions with support two of the form studied in Corollary \ref{cor:n_bidder_symmetric}. The results follow from applying Corollary \ref{cor:n_bidder_symmetric} together with the composition technique in Lemma \ref{lem:signal_composition}.
% The theorem can be proved by assembling the construction used in the proof of Lemma~\ref{css}.
% To this end assume that $v_i \equiv v$ for all $i$ and the joint distribution of
% click-through rates is symmetric. For every pair of bidders $i$ and $j$ we have: %\footnote{\songedit{For general $v_i$'s, we need $g(r') = g(r)$ whenever $(v_1r'_1, \ldots, v_nr'_n)$ is a permutation of $(v_1r_1, \ldots, v_nr_n)$.}}:
% \begin{equation}
% \Pr\left(  r_{i}=r,r_{j}=r^{\prime}\right) = \Pr\left(  r_{i}=r^{\prime}%
% ,r_{j}=r\right)  ,\forall r,r^{\prime}\text{. } \label{ss}%
% \end{equation}
% %\song{Shall we instead use $g(r) = g(r')$ whenever $r'$ is a permutation of $r$?}
% For any $r$ in the support of the symmetric environment, without loss of generality, suppose bidder $1$ has the highest CTR and bidder $2$ has the second highest CTR. Then we can reuse the construction for the two-bidder case with $l = r_2$ and $h = r_1$ on CTR profiles $r = (h, l, r_3, \ldots, r_n)$ and $r' = (l, h, r_3, \ldots, r_n)$, where we simply do full-disclosure for all other bidders.
\end{proof}

It turns out that the idea behind the construction in this section can be generalized to work under the weaker requirement of \emph{equal means}. The main building block for this is the following generalization of Lemma~\ref{css}, which we prove in Appendix~\ref{app:equal-means}.

% \begin{lemma}\label{equal-means}
% Consider a setting with two bidders with \pauledit{Added this} values $v_1 = v_2 = 1$ with click-through rates $(l,a)$ and $(h,b)$ where $0 \leq l < h$ and $a \neq b \geq 0$, with probability
% \[
% \Pr[(l,a)]=\frac{h-b}{h+a-l-b},\qquad\Pr[(h,b)]=\frac{a-l}{h+a-l-b},
% \]
% and the same mean
% \[
% \mu=\frac{ha-lb}{h+a-l-b}.
% \]
% Then, for every $\epsilon > 0$,  there is a calibrated correlated information structure that extracts at least $1-O(\sqrt{\epsilon})$ of the welfare as revenue. \renato{I am confused about the result here. We want to approach the optimal welfare which is $hp + \max(a,b) (1-p)$, right? So it should not be $1$. I also think that this lemma is giving in the main text too much implementation details whcih should be in the appendix. Let me write below a version that I think would be cleaner. Let me know if you agree.} \song{The simplified looks great! Maybe we can keep the complicated version in appendix right before its proof.}
% \end{lemma}

\begin{proposition}
[Generalized Dispersion]\label{equal-means}
Consider a setting with two bidders with $v_1 = v_2 = v$ and a support size two distribution over click-through rate vectors $(r_1, r_2)$ such that $\E[r_1] = \E[r_2]$. Then for every $\epsilon$, there is a calibrated and correlated information structure with revenue $v\cdot \E[\max(r_1, r_2)] - \epsilon$.
\end{proposition}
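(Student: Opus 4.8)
The plan is to mirror the construction in the proof of Lemma~\ref{css}, replacing its single symmetric calibration recursion by a pair of interlocking recursions. Normalize $v=1$ and write the support of the CTR distribution as $r=(r_1,r_2)$ with probability $p$ and $r'=(r_1',r_2')$ with probability $1-p$; set $\mu:=\E[r_1]=\E[r_2]$. First I dispose of the degenerate cases. If $p\in\{0,1\}$, or if the efficient winner is the same bidder in both profiles, then the equal-means hypothesis forces the two coordinates to coincide within each profile (since $p(r_1-r_2)+(1-p)(r_1'-r_2')=0$ with both summands of one sign), so $\E[\max(r_1,r_2)]=\mu$ and full bundling $s_1=s_2=\mu$ already extracts exactly $\mu$. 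Hence I may assume $p\in(0,1)$ and, relabeling bidders and profiles, $r_1>r_2$ and $r_1'<r_2'$ (bidder $1$ is efficient in the first profile, bidder $2$ in the second), with each marginal genuinely two-valued; note $\E[\max(r_1,r_2)]=p\,r_1+(1-p)\,r_2'$ and that $\mu$ lies in the open interval $I:=\bigl(\min(r_1,r_1'),\max(r_1,r_1')\bigr)\cap\bigl(\min(r_2,r_2'),\max(r_2,r_2')\bigr)$.

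Next I set up the dispersion grid exactly as in Lemma~\ref{css}: fix a small $\delta>0$, put $s_k=\mu(1+\delta)^k$, and take the range $-K\le k\le K$ as large as possible subject to all $s_k\in I$, so $K=\Theta(1/\delta)$. The information structure places mass $x_k$ on the pair $(s_{k+1},s_k)$ under the first profile (bidder $1$ one notch above bidder $2$, hence bidder $1$ wins efficiently) and mass $x_k'$ on the pair $(s_k,s_{k+1})$ under the second profile (bidder $2$ wins efficiently); at the two ends I add ``collapsed'' correction terms that emit an equal pair $(s_{\pm K},s_{\pm K})$ from each profile, playing the role of the terms $y,z$ in Lemma~\ref{css}.

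The substantive step is choosing the masses so that both calibration constraints hold. Collecting the terms that deliver signal $s_m$ to bidder $1$ yields, for interior $m$, the identity $x_{m-1}(r_1-s_m)=x_m'(s_m-r_1')$, while collecting the terms delivering $s_m$ to bidder $2$ yields $x_m(s_m-r_2)=x_{m-1}'(r_2'-s_m)$. Every factor here — $r_1-s_m$, $s_m-r_1'$, $s_m-r_2$, $r_2'-s_m$ — has a fixed sign for $s_m\in I$, so these equations define two interleaved recursions with strictly positive multipliers; propagating inward from the boundary determines all interior masses up to one free scale per profile, which is then pinned down by requiring total mass $p$ and $1-p$, and the end corrections absorb the residual calibration defect at $s_{\pm K}$. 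This bookkeeping — simultaneously keeping all masses non-negative, verifying the two boundary calibration identities involving the collapsed pairs, and bounding the end masses — is the part I expect to be the main obstacle; as in Lemma~\ref{css}, a counting argument (each profile's mass is a sum of $\Theta(1/\delta)$ comparable interior terms plus the boundary terms) shows the end masses are $O(\delta)$.

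Finally, the revenue bound is immediate once the construction is in place: on every ``good'' pair $(s_{k+1},s_k)$ or $(s_k,s_{k+1})$ the auction is efficient and charges price-per-click $s_k/s_{k+1}=1/(1+\delta)$, so the conditional expected revenue equals $r_1/(1+\delta)$ resp.\ $r_2'/(1+\delta)$; the only shortfall comes from the $O(\delta)$ boundary mass. Hence the total revenue is at least $\bigl(p\,r_1+(1-p)\,r_2'\bigr)/(1+\delta)-O(\delta)=\E[\max(r_1,r_2)]-O(\delta)$, which is $\ge \E[\max(r_1,r_2)]-\epsilon$ for $\delta$ small enough; rescaling by $v$ gives the stated bound.
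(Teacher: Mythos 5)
Your high-level plan---dispersing calibrated signals on a geometric grid around the common mean $\mu$ and chaining the calibration constraints into recursions with positive multipliers---is the same idea as the paper's proof in Appendix~\ref{app:equal-means}, but the step you defer as ``bookkeeping'' is precisely where the content of that proof lies, and the plan you sketch breaks down there. First, the probability constraints: in the paper's construction the two calibration recursions merge into a \emph{single} chain with one free scale, which cannot meet two mass constraints for a fixed grid; the paper therefore leaves the grid offset $s$ free and proves by an intermediate value argument (Lemma~\ref{lem:prob}) that some $s\in(s^-,s^+)$ makes the ratio of the two profiles' total masses equal to $\Pr[r]/\Pr[r']$. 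You fix the grid at $\mu$ and invoke ``one free scale per profile,'' but even granting that your pairing yields two decoupled chains, each chain alternates between masses of \emph{both} profiles, so you must still show that $(p,1-p)$ lies in the cone spanned by the two chains' aggregate mass vectors; this is the analogue of the paper's IVT step and is not addressed. Second, your diagonal endpoint corrections $(s_{\pm K},s_{\pm K})$ need not admit non-negative solutions. Take profiles $(0.9,0.85)$ with probability $1/6$ and $(0.8,0.81)$ with probability $5/6$ (equal means, $\mu=49/60$, variable winner, but with $r_2>\mu>r_2'$): at the top signal the two calibration equations for the two correction masses reduce to a fixed-point equation $z=c_1+c_2z$ with $c_1>0$ and $c_2=\frac{(s_K-r_2')(r_1-s_K)}{(r_2-s_K)(s_K-r_1')}$, which equals $1$ at $s_K=\mu$ and is, e.g., $4/3$ at $s_K=0.82$; hence no non-negative solution exists. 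The paper avoids this by terminating the chains with the fully revealing extreme signals $h^*,l^*$ rather than collapsed diagonal pairs. Third, the claimed $O(\delta)$ bound on the end masses presumes all $\Theta(1/\delta)$ interior masses are comparable; the two-step recursion multiplier is only $1+O(\delta)$, so over $K=\Theta(1/\delta)$ steps the masses drift by a constant factor, which is why the paper takes $K=O(1/\sqrt{\epsilon})$ and settles for an $O(\sqrt{\epsilon})$ loss (Lemma~\ref{lem:efficient}).

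A smaller but real issue: your reduction does not deliver ``each marginal genuinely two-valued.'' With profiles $(1/2,3/10)$ and $(1/2,7/10)$, each with probability $1/2$, the winner is variable and the means are equal, yet bidder $1$'s marginal is degenerate; calibration then forces $s_1\equiv\mu$ almost surely, and a short computation shows no calibrated structure can exceed revenue $v\mu$, which is strictly below $v\,\E[\max(r_1,r_2)]$. The appendix restatement excludes this case by assuming $l<h$ and $a\neq b$; you should impose the same hypothesis explicitly rather than assert that it follows from discarding the uniform-winner case.
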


\section{Correlated Structures Beyond Symmetric Environments}\label{sec:asymmetric}

%Next we consider the case when the mean of the click-through rate is the same
%across the two bidders, but the distribution of the click-through rates are
%not required to be symmetric, and in fact the support of the click-through
%rates can differ across the two bidders (and even permit an empty intersection).

%\pauledit{Still need to go over this, don't call it stylized}
Our analysis so far has maintained a weak notion of ex-ante symmetry among the bidders. In Proposition \ref{equal-means}, the bidders were assumed to have the same ex-ante mean in click-through rates, but may have different supports and different distributions in terms of the ex-post click-through rates. In the presence of ex-ante symmetry we established that an optimal information design supported a revenue level equal to the full social surplus. In this section, we pursue a more limited objective for general asymmetric environments. The central result of this section, Theorem \ref{thm:moderate}, establishes that optimal information design remains a powerful instrument to increase revenue, irrespective of the joint distribution of the click-through rates. In particular, we can prove that there always exist information structures that revenue-dominate either no-disclosure or full-disclosure information structures.

For this general environment, we do not provide a complete identification of the optimal information design across all possible configurations of the joint distributions of click-through rates. However, the arguments developed so far will be sufficient to offer significant qualitative characterizations of optimal information design.

Different information structures offer different levels of informativeness, and in fact form a \emph{lattice} (see Figure~\ref{fig:lattice}).
The no-disclosure information structure is the minimal information policy. The
full-disclosure information structure is the maximal information policy. Together,
they form the set of \emph{extremal} informational policies. We refer to every
information policy that is not extremal as \emph{moderate}, and to any information policy that does not consist of a combination of full- or no-disclosure as \emph{interior}.

Theorem \ref{thm:moderate} establishes that in all environments, symmetric or asymmetric, there exists a moderate information structure that presents a strict improvement. We then show that in some sense this is the strongest result that we can obtain in a general asymmetric environment. Theorem \ref{thm:uni-con-weak} establishes that in an environment in which one bidder always has the highest click-through rate, the optimal information structure is moderate, but not interior. Finally, Proposition \ref{pro:uni-inc} shows that in other environments the optimal information structure must indeed be not only moderate, but in fact interior.

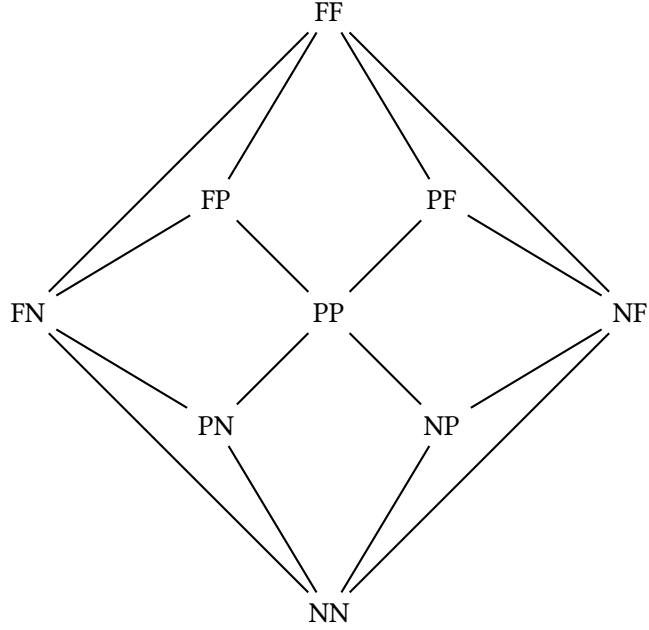
\begin{figure}[ht]
\begin{center}
\begin{tikzpicture}[scale=2]
\node (FN) at (0,2) {\small FN};
\node (PN) at (1.25,1.25) {\small PN};
\node (FP) at (1.25,2.75) {\small FP};
\node (FF) at (2,4) {\small FF};
\node (PP) at (2,2) {\small PP};
\node (NN) at (2,0) {\small NN};
\node (PF) at (2.75,2.75) {\small PF};
\node (NP) at (2.75,1.25) {\small NP};
\node (NF) at (4,2) {\small NF};
\draw[-,thick] (FN) -- (FF);
\draw[-,thick] (FN) -- (FP);
\draw[-,thick] (FN) -- (PN);
\draw[-,thick] (FN) -- (NN);
\draw[-,thick] (NF) -- (FF);
\draw[-,thick] (NF) -- (PF);
\draw[-,thick] (NF) -- (NP);
\draw[-,thick] (NF) -- (NN);
\draw[-,thick] (NN) -- (PN);
\draw[-,thick] (NN) -- (NP);
\draw[-,thick] (FF) -- (FP);
\draw[-,thick] (FF) -- (PF);
\draw[-,thick] (PP) -- (FP);
\draw[-,thick] (PP) -- (PF);
\draw[-,thick] (PP) -- (PN);
\draw[-,thick] (PP) -- (NP);
\end{tikzpicture}
\end{center}
\caption{Depiction of the disclosure lattice for two bidders. The label XY of a vertex denotes the pair of policies X and Y used by bidder 1 and bidder 2. N stands for no-disclosure, P for partial disclosure, and F for full-disclosure. All information structures inside the square are interior.}
\label{fig:lattice}
\end{figure}

\subsection{Extremal Structures are Dominated}

For the remainder of this section we consider an environment in which there are two bidders with values $v_1 = v_2 = 1$, and two possible click-through rate configurations, namely $(r_1,r_2)$ and $(r'_1,r'_2).$ Without loss of generality, we can always label the identities of the bidders and the click-through rates so that:
\[
    r_1 \geq r'_1, r_2, r'_2.\label{eq:rank}
\]
The prior probability of the pair $(r_1,r_2)$ is $p$, the other pair $(r'_1,r'_2)$ has the complementary probability $1-p$. Let $\mu_1 = p \cdot r_1 + (1-p) \cdot r'_1$ and $\mu_2 = p \cdot r_2 + (1-p) \cdot r'_2$ be the expected click-through rates of bidder 1 and bidder 2, respectively.

The main result of this section is the following theorem.

\begin{theorem}[Moderate Information Structures]\label{thm:moderate}
There always exists a moderate information structure that strictly dominates any extremal information structure.
\end{theorem}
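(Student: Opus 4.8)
The plan is to handle the two extremal policies separately and to construct, in each case, a \emph{moderate} structure that strictly beats it; then a moderate structure that dominates \emph{both} simultaneously follows because we are in a finite two-point environment where the two candidate improvements can be combined (or, more simply, because if one of the extremal policies is not revenue-optimal among extremal policies we only need to beat the better one). Concretely, recall the baseline numbers: under no-disclosure (NN) the revenue is $\min(\mu_1,\mu_2)$, and under full-disclosure (FF) it is $p\cdot r_2 + (1-p)\cdot r'_1$ when the efficient winner flips (i.e.\ when $r'_1 > r'_2$, so bidder $1$ wins in state $(r_1,r_2)$ at price $r_2$ and bidder $1$ again wins in state $(r'_1,r'_2)$ at price $r'_2$ — wait, one must be careful about which bidder wins in the primed state). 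The first step is therefore a short case analysis on the sign of $r'_1 - r'_2$, pinning down exactly who wins and what price is paid under FF, and writing $\Rev(\text{NN})$ and $\Rev(\text{FF})$ explicitly in terms of $p, r_1, r_2, r'_1, r'_2$.

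The second step is to beat FF. By Corollary~\ref{zc} (the two-bidder version), whenever both orderings $v_1 s_1 > v_2 s_2$ and $v_1 s_1 < v_2 s_2$ occur with positive probability under full disclosure, NN strictly beats FF; and NN is itself a (degenerate) information structure, but not moderate. So the real content is: if the efficient allocation is \emph{not} constant (i.e.\ the winner differs across the two states), take the no-disclosure-dominates-full-disclosure gap and then perturb NN slightly into a genuinely moderate structure (e.g.\ fully bundle one bidder, partially bundle the other by splitting one of its signals into two nearby values as in the ``dispersion'' constructions of Section~\ref{corr}) so that revenue changes by only $O(\epsilon)$; since the NN-vs-FF gap is a fixed positive constant, the perturbed moderate structure still strictly beats FF. If instead the efficient winner is \emph{constant} — say bidder $1$ always wins — then FF already extracts revenue $p\,r_2 + (1-p)\,r'_2$, and here we should instead beat it by a small amount of \emph{partial} disclosure on the loser or by using a structure in the spirit of ``flipping the square'' / Lemma~\ref{css}; I expect one can raise the price above $\max(r_2,r'_2)/\cdots$ by conflating the loser's two signals appropriately while keeping bidder $1$'s signal (nearly) fully revealed.

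The third step is to beat NN. Since NN pays $\min(\mu_1,\mu_2)$, assume WLOG $\mu_2 \le \mu_1$ so the binding price is $\mu_2$. If the environment has \emph{unequal} means, $\mu_2 < \mu_1$, there is slack: disclose a little information to bidder $2$ (partially unbundle $r_2$ into two signals straddling $\mu_2$) while keeping bidder $1$ fully bundled at $\mu_1$; because $\mu_1 > \mu_2$, for small enough dispersion bidder $1$ still wins in the relevant state(s) and the price is now the (higher) realized signal of bidder $2$ in the favorable state, strictly above $\mu_2$ in expectation — this is exactly the ``dispersion along the diagonal'' mechanism of Lemma~\ref{css} restricted to one side. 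If the means are \emph{equal}, $\mu_1 = \mu_2$, then Proposition~\ref{equal-means} (Generalized Dispersion) already gives a calibrated correlated structure with revenue $\E[\max(r_1,r_2)] - \epsilon > \min(\mu_1,\mu_2) = \mu_2$ for $\epsilon$ small (the inequality is strict precisely because $\max(r_1,r_2)$ is not a.s.\ equal to its mean when the two CTR profiles differ), and this structure is moderate (interior, in fact). So NN is strictly dominated by a moderate structure in every case.

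Finally, combine: let $\Rev^\ast = \max(\Rev(\text{NN}), \Rev(\text{FF}))$ be the best extremal revenue. By the above, there is a moderate structure strictly exceeding $\Rev(\text{NN})$ and one strictly exceeding $\Rev(\text{FF})$; in the finite two-point setting, using Lemma~\ref{lem:signal_composition} one can take a convex combination (or simply pick whichever of the two moderate structures targets the larger extremal value and check it also clears the smaller one — which it does, since both improvements are strict and, after shrinking $\epsilon$, can be made to clear $\Rev^\ast$). The main obstacle I anticipate is the bookkeeping in the ``constant efficient winner'' sub-case: there NN need not dominate FF, and one must exhibit a concrete moderate structure beating the larger of the two, which requires the more delicate conflation argument rather than a soft appeal to Corollary~\ref{zc}. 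Everything else is either a direct citation of Lemma~\ref{css}/Proposition~\ref{equal-means} or a routine $\epsilon$-perturbation of a degenerate structure.
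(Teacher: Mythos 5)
Your high-level architecture (case analysis on who wins under full disclosure, then beat the better of the two extremal policies) matches the paper's, which splits into uniform winner with congruent loser, uniform winner with incongruent loser, and variable winner (Propositions~\ref{pro:uni-con}, \ref{pro:uni-inc}, \ref{pro:variable}). You also correctly note that in the uniform-winner case the two extremal revenues coincide at $\mu_2$, so a small perturbation of no-disclosure cannot suffice and a genuinely improving construction is needed. However, your key construction for beating no-disclosure when $\mu_1 > \mu_2$ --- fully bundle the winner, partially unbundle the loser into signals straddling $\mu_2$ --- contains a genuine error. Calibration forces $\E[s_2]=\mu_2$ exactly, so the expected \emph{price} is not ``strictly above $\mu_2$''; the expected revenue is $\E[r_1 s_2]/\mu_1$, and whether this exceeds $\mu_2 = \E[r_1]\E[s_2]/\mu_1$ is governed by the sign of $\mathrm{Cov}(r_1,s_2)$. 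In the \emph{incongruent loser} case ($r_1 \geq r'_1 \geq r'_2 > r_2$), calibration forces any informative signal for the loser to be high precisely in the state where the winner's CTR is low, so the covariance is necessarily negative and your construction yields revenue \emph{strictly below} $\mu_2$. The paper computes this explicitly in Proposition~\ref{pro:uni-inc}: unbundling the loser and bundling the winner gives $\mu_2 + p(1-p)(r_1-r'_1)(r_2-r'_2)/\mu_1 < \mu_2$. The fix there is more delicate: one first partially bundles the loser (revenue-neutrally) to create a $2\times 2$ submatrix that is congruent with weak competition, and only then bundles the winner on that submatrix, yielding an interior structure.

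A second, smaller gap is the variable-winner case with unequal means. There your appeal to Proposition~\ref{equal-means} covers only $\mu_1=\mu_2$, and ``dispersion restricted to one side'' is not a defined construction when the efficient winner flips across states. The paper instead uses the variable-winner property to find a convex combination $\lambda$ of the two CTR profiles with equal coordinate-wise means, carves out an $\epsilon$-probability sub-environment on which it applies the equal-means full-extraction result, and pools the remaining mass into a single calibrated signal (Proposition~\ref{pro:variable}); the strict gain comes from replacing $r_2$ by $r_1 > r_2$ on the $\epsilon$-event. Without these two constructions your argument covers only the congruent-loser and globally-equal-means cases, so the proof as proposed is incomplete.
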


The theorem will be established in several steps. We split our analysis to match the nature of the competition between the bidders. In the ex-ante symmetric case, the bidder with the highest click-through rate was variable across the realizations, and we thus had a variable winner. In the asymmetric setting, the bidder with the highest click-through rate may be uniform across click-through rates. Given the ranking, % \ref{eq:rank},
we will split our analysis into the following cases:
$$\begin{aligned}
& \text{Uniform Winner:  } r'_1 \geq r'_2 \qquad  & & \text{Variable Winner:  } r'_1 < r'_2 \\
& \text{Congruent Loser:  } r_2 \geq r'_2 \qquad  & & \text{Incongruent Loser:  } r_2 < r'_2 \\
& \text{Weak Competition:  } r_2 \leq \mu_1 \qquad  & & \text{Strong Competition:  } r_2 > \mu_1
\end{aligned}$$

\subsection{Moderate vs. Interior Structures}

We next show by construction that Theorem \ref{thm:moderate} cannot be strengthened from moderate to interior structures in the general asymmetric setting. For this, we consider the case of the uniform winner with a congruent loser under weak competition. In this setting, bidder $1$ has the higher click-through rate in all realizations of click-through rates, and is thus the uniform winner. Moreover, the ranking across realizations is the same for all bidders, and hence congruent. Finally, the expected click-through rate of the winner is larger than the maximum of the realized click-through rates of the loser.

For this setting, we can identify the uniquely optimal information structure. Namely, it is
optimal to bundle the click-through rates of the winner and to unbundle the click-through rates of the
loser. The bundling of the click-through rates of the winner generates more competitive
prices, and hence higher revenue for the auctioneer. Thus, the optimal information structure is moderate, but not interior.

\begin{theorem}[Uniform Winner, Congruent Loser, Weak Competition]\label{thm:uni-con-weak}
%Suppose $v_1 = v_2 = 1$ and CTRs are $(r_1,r_2)$ and $(r'_1,r'_2)$ with probability $p$ and $1-p$, respectively.
The optimal information structure for the uniform winner, congruent loser, and weak competition case leaves
%bidder 1 (the ``winner'') unbundled and fully bundles bidder 2 (the ``loser'').
the loser unbundled and fully bundles the winner.
\end{theorem}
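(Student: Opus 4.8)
Write $\mu_1 = p\,r_1 + (1-p)\,r'_1$ as in the setup, and let $R^\star = \big(p\,r_1 r_2 + (1-p)\,r'_1 r'_2\big)/\mu_1$. The plan is to prove two things: (i) the ``fully bundle bidder $1$, unbundle bidder $2$'' structure is calibrated and earns exactly $R^\star$; and (ii) no calibrated information structure earns more, with equality only for that structure.

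For (i): fully bundling the winner means $s_1 = \mu_1$ almost surely and unbundling the loser means $s_2$ equals bidder $2$'s realized CTR (i.e.\ $r_2$ or $r'_2$); each marginal is trivially calibrated, and calibration of the pair follows from Lemma~\ref{lem:signal_composition}. Here is where \emph{weak competition} ($r_2 \le \mu_1$) and \emph{congruence} ($r'_2 \le r_2$) are used: together they give $v_2 s_2 \le \mu_1 = v_1 s_1$ in both states, so bidder $1$ always wins, pays $s_2/s_1 = (\text{bidder }2\text{'s realized CTR})/\mu_1$ per click, and is clicked with probability equal to bidder $1$'s realized CTR. Averaging over the two states gives revenue $\E[\,r_1^{\mathrm{real}} r_2^{\mathrm{real}}\,]/\mu_1 = R^\star$, where $r_i^{\mathrm{real}}$ denotes bidder $i$'s realized CTR.

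For (ii) I would first reduce to a one-sided pointwise bound. In \emph{any} realization, the winner's realized CTR is at most $r_1^{\mathrm{real}}$ (in the first state because $r_1 \ge r_2$ by the labeling, in the second because $r'_1 \ge r'_2$ by the uniform-winner hypothesis), and $\min(s_1,s_2)/\max(s_1,s_2) \le s_2/s_1$; hence in every realization $\Rev \le r_1^{\mathrm{real}}\cdot s_2/s_1$. So it suffices to show $\E[\,r_1^{\mathrm{real}}\, s_2/s_1\,] \le R^\star$ for every calibrated structure, and note that this bound is attained with equality by the structure in (i), since there bidder $1$ always wins and $s_1 \equiv \mu_1$. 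The remaining inequality I would establish by linear-programming duality: for a fixed finite signal support the $x(r,s)$ are the variables, the objective $\sum_{r,s} x(r,s)\, r_1^{\mathrm{real}} s_2/s_1$ is linear, and the calibration constraints $\sum_{s:\,s_i=s'_i} x(r,s)(r_i - s'_i)=0$ are linear. I would guess multipliers $\lambda_i(s'_i)$ for these constraints --- read off from complementary slackness at the proposed optimum, i.e.\ from bidder $1$'s calibration at $s_1=\mu_1$ and bidder $2$'s calibration at the two values $r_2,r'_2$ --- such that the pointwise inequality $r_1^{\mathrm{real}} s_2/s_1 \le R^\star + \sum_i \lambda_i(s_i)(r_i - s_i)$ holds for all $(r,s)$; summing against $x(r,s)$ and invoking the prior constraint then yields the bound. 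An alternative is an exchange argument: transform any structure toward ``bundle winner / unbundle loser'' --- first garble bidder $1$'s signal down to its mean, then refine bidder $2$'s signal to full disclosure --- without decreasing $\E[\,r_1^{\mathrm{real}} s_2/s_1\,]$. Uniqueness then follows by tracking tightness: equality in the pointwise bound forces bidder $1$ to always win and $s_1$ to be deterministic, while equality in the dual inequality forces bidder $2$'s signal to be perfectly informative.

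\textbf{Main obstacle.} The delicate step is the last one. The two bidders' calibration constraints are coupled only through the joint law of $(s_1,s_2,\text{state})$, so neither marginal can be analyzed in isolation, and producing the correct dual certificate (equivalently, making the exchange step monotone when collapsing $s_1$ to its mean) is where weak competition $r_2 \le \mu_1$ genuinely re-enters --- it is exactly what guarantees that collapsing bidder $1$'s signal never flips the winner, and hence that the certificate is valid across all signal values. Congruence enters to ensure $R^\star$ is at least the no-disclosure value, so that the claimed structure is indeed an improvement.
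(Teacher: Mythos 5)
Your part (i) and your Step-1 reduction are both correct and match the paper: the ``bundle the winner, unbundle the loser'' structure is calibrated, bidder $1$ always wins by weak competition plus congruence, and its revenue is exactly $R^\star = \bigl(p\,r_1 r_2 + (1-p)\,r'_1 r'_2\bigr)/\mu_1$; likewise the pointwise bound $\Rev \le \sum_s (x(r,s)r_1 + x(r',s)r'_1)\,s_2/s_1$ using the uniform-winner hypothesis is exactly how the paper begins.

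The gap is that the heart of the argument --- showing $\E\bigl[r_1^{\mathrm{real}}\, s_2/s_1\bigr] \le R^\star$ for \emph{every} calibrated structure --- is not actually carried out. You propose either an LP-duality certificate or an exchange argument, but you neither exhibit the multipliers $\lambda_i(\cdot)$ nor verify the required pointwise dual inequality, and you flag this yourself as the unresolved obstacle. This is not a routine omission: for a fixed finite signal support LP duality guarantees multipliers exist, but the signal support is itself a design variable, so you would need a single certificate $\lambda_1(\cdot), \lambda_2(\cdot)$ valid on all of $[0,1]$, and the objective term $s_2/s_1$ does not separate additively in $(s_1,s_2)$, so it is far from clear such a separable certificate exists. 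The paper resolves exactly this coupling with two applications of Chebyshev's sum inequality (Lemma~\ref{lem:chebyshev}): first, substituting $s_2$ by bidder $2$'s calibration constraint and applying Chebyshev with $a_i = r_1/s_1$ or $r'_1/s_1$ (sorted because $r'_1 \le s_1 \le r_1$) and $b_i = r_2$ or $r'_2$ (sorted by congruence) to ``unbundle the loser''; second, substituting $s_1$ by bidder $1$'s calibration constraint and applying Chebyshev to the ratios $a(s_1) = (\lambda r_1 r_2 + \lambda' r'_1 r'_2)/(\lambda r_1 + \lambda' r'_1)$ and $b(s_1) = \lambda r_1 + \lambda' r'_1$, both monotone in $\lambda(s_1)$, to ``bundle the winner.'' These are aggregate rearrangement inequalities across signal values, not pointwise bounds, which is why your pointwise-certificate plan does not obviously go through. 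Until one of your two proposed routes is executed in full, the proof is incomplete at its central step.
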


The main tool in our proof of Theorem~\ref{thm:uni-con-weak} is the following lemma. %\pauledit{Somehow I can't edit the bib file, I have added the citation for this lemma to specific.bib} \song{I downloaded the file and then uploaded. It should be fine now. I also merged the specific.bib into general.bib.} \pauledit{Thanks!}

%\paragraph{Chebyshev's sum inequality}
\begin{lemma}[Chebyshev's sum inequality   \citep{Hardy88}]\label{lem:chebyshev}
Given two sequences $a_1 \geq a_2 \geq \hdots a_n \geq 0$ and $b_1 \geq b_2 \geq \hdots b_n  \geq 0$ that are monotone in the same direction, and a set of non-negative weights $w_i \geq 0$ (not necessarily monotone), then:
$$\left( \sum_i w_i a_i b_i \right) \cdot \left( \sum_i w_i \right) \geq
  \left( \sum_i w_i a_i \right) \cdot \left( \sum_i w_i b_i \right).$$
If $\{a_i\}$ and $\{b_i\}$ sequences are monotone in different directions (one increasing and one decreasing), the inequality holds in the opposite direction.
\end{lemma}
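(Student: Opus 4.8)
The plan is to prove Chebyshev's sum inequality by the classical ``doubling'' trick: sum the manifestly signed quantity $w_i w_j (a_i - a_j)(b_i - b_j)$ over all ordered pairs $(i,j) \in [n] \times [n]$ and expand. First I would record the pointwise sign fact: when $\{a_i\}$ and $\{b_i\}$ are monotone in the same direction, for every pair $(i,j)$ the factors $a_i - a_j$ and $b_i - b_j$ are both nonnegative or both nonpositive, so $(a_i - a_j)(b_i - b_j) \geq 0$; combined with $w_i w_j \geq 0$ — which is exactly where the nonnegativity hypothesis is consumed, and note that it is nonnegativity of the \emph{weights}, not of the sequences $a_i,b_i$, that matters — every summand of $\sum_{i,j} w_i w_j (a_i - a_j)(b_i - b_j)$ is nonnegative, hence so is the whole sum.

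Next I would expand the double sum. Writing it out,
$$\sum_{i,j} w_i w_j (a_i - a_j)(b_i - b_j) = \sum_{i,j} w_i w_j\bigl(a_i b_i - a_i b_j - a_j b_i + a_j b_j\bigr).$$
The $a_i b_i$ and $a_j b_j$ groups each contribute $\bigl(\sum_i w_i a_i b_i\bigr)\bigl(\sum_j w_j\bigr)$ by the symmetry $i \leftrightarrow j$, while the two cross terms each contribute $-\bigl(\sum_i w_i a_i\bigr)\bigl(\sum_j w_j b_j\bigr)$. Hence the right-hand side equals $2\bigl(\sum_i w_i a_i b_i\bigr)\bigl(\sum_i w_i\bigr) - 2\bigl(\sum_i w_i a_i\bigr)\bigl(\sum_i w_i b_i\bigr)$. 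Dividing by $2$ and invoking the nonnegativity from the first step yields exactly the claimed inequality.

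For the opposite-direction case the only change is that $a_i - a_j$ and $b_i - b_j$ now carry opposite signs, so each summand of the double sum is $\leq 0$, and the identical expansion gives the reversed inequality. There is essentially no genuine obstacle here since the statement is classical; the only points requiring care are the bookkeeping of the four groups of cross terms in the expansion and stating precisely which hypothesis the argument uses (nonnegativity of the $w_i$, plus monotonicity of the two sequences in a common direction).
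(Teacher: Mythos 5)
Your proof is correct. The paper does not actually prove this lemma; it is stated with a citation to Hardy--Littlewood--P\'olya and used as a black box, and the symmetrization argument you give --- expanding $\sum_{i,j} w_i w_j (a_i-a_j)(b_i-b_j)\ge 0$ --- is precisely the classical proof found there, so there is nothing to reconcile. Your side remark that only nonnegativity of the weights $w_i$ (and not of the $a_i,b_i$) is needed is also accurate, and the reversed-inequality case follows exactly as you describe.
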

(Aside: The probabilistic  interpretation of Lemma~\ref{lem:chebyshev} is that if $A$ and $B$ are two positively-correlated random variables, then $\E[AB] \geq \E[A] \cdot \E[B]$.)

\begin{proof}[Proof of Theorem~\ref{thm:uni-con-weak}]

We will start with a generic solution $x(r,s)$ and show that using two applications of Chebyshev's sum inequality (Lemma~\ref{lem:chebyshev}) we can bound it with respect to %(\ref{eq:target})
\begin{align}
p \cdot r_1 \cdot \frac{r_2}{\mu_1} + (1-p) \cdot r'_1 \cdot \frac{r'_2}{\mu_1}, \label{eq:target}
\end{align}
which is the revenue obtained from leaving the loser unbundled and bundling the winner.
We will proceed in three steps.\\

\noindent\emph{Step 1: Bounding the revenue.}
Consider any information structure defined by $x(r,s)$. For each signal $s = (s_1, s_2)$, the revenue in the event that bidder $1$ wins is $(x(r,s) r_1 + x(r',s) r'_1) \cdot s_2 / s_1$. If bidder $2$ wins, the revenue is $(x(r,s) r_2 + x(r',s) r'_2) \cdot s_1 / s_2$. In the second case, observe that since $s_2 \geq s_1$ we have $\frac{s_1}{s_2} < 1 < \frac{s_2}{s_1}$, and since we are in the uniform winner case, we know that $x(r,s) r_1 + x(r',s)r'_1 \geq x(r,s) r_2 + x(r',s)r'_2$. Hence, we can bound:
$$(x(r,s) r_2 + x(r',s) r'_2) \cdot \frac{ s_1 }{ s_2 } \leq (x(r,s) r_1 + x(r',s) r'_1) \cdot \frac{ s_2 }{ s_1 },$$ and write:
$$\Rev \leq \sum_s (x(r,s) r_1 + x(r',s) r'_1) \cdot \frac{ s_2 }{ s_1 }. $$

% The objective has a part summing over $W_1$ and a part summing over $W_2$. Consider a pair $(j,k) \in W_2$ then since $s_1^j < s_2^k$ we have: $\frac{s_1^j}{s_2^k} < 1 < \frac{s_2^k}{s_1^j}$. Because we are in the uniform winner case, we know that $x_{jk}r_1 + x'_{jk}r'_1 \geq x_{jk}r_2 + x'_{jk}r'_2$. Now we can write the following bound summing over all pairs $(j,k)$ without splitting into $W_1$ and $W_2$:
% $$\textsf{Obj} \leq \sum_{j,k} (x_{jk}r_1 + x'_{jk}r'_1) \cdot \frac{s_2^k}{s_1^j}.$$

\noindent\emph{Step 2: Unbundling the loser.} Substituting $s_2$ by the calibration constraint we obtain:
$$\begin{aligned}
\Rev &  \leq
  \sum_{s_2} \sum_{s_1} \left( x(r,(s_1,s_2))\frac{r_1}{s_1} + x(r',(s_1,s_2))\frac{r'_1}{s_1} \right) \cdot
  \frac{\sum_{s_1} x(r,(s_1,s_2))r_2 + x(r',(s_1,s_2))r'_2}{\sum_{s_1} x(r,(s_1,s_2) +  x(r',(s_1,s_2))} \\ &
\leq
  \sum_{s_2} \sum_{s_1} \left(x(r,(s_1,s_2))\frac{r_1r_2}{s_1} + x(r',(s_1,s_2))\frac{r'_1r'_2}{s_1}\right),
  \end{aligned}$$
% $$\textsf{Obj} \leq
%   \sum_k \sum_{j} \left( x_{jk}\frac{r_1}{s_1^j} + x'_{jk}\frac{r'_1}{s_1^j} \right) \cdot
%   \frac{\sum_j (x_{jk}r_2 + x'_{jk}r'_2)}{\sum_j (x_{jk} +  x'_{jk})}
% \leq
%   \sum_k \sum_{j} \left(x_{jk}\frac{r_1r_2}{s_1^j} + x'_{jk}\frac{r'_1r'_2}{s_1^j}\right),$$
where the second inequality follows from Chebyshev's sum inequality with $$\{a_i\}_i = \left( \frac{r_1}{s_1^1}, \hdots, \frac{r_1}{s_1^n}, \frac{r'_1}{s_1^1}, \hdots, \frac{r'_1}{s_1^n}  \right) \quad \text{and} \quad \{b_i\}_i = (r_2, \hdots,r_2, r'_2, \hdots, r'_2).$$ Congruence implies that the sequence $\{b_i\}$ is sorted. The sequence $\{a_i\}$ is sorted because $r'_1 \leq s_1^1 \leq \hdots \leq s_1 \leq r_1$, and hence $\frac{r_1}{s_1} \geq 1 \geq \frac{r'_1}{s_1}$. Now that the loser is unbundled, there is no longer any need to keep track of $s_2$. To simplify notation we will define:
$$\x(r,s_1) = \sum_{s_2} x(r,(s_1,s_2))$$
and re-write our current bound on the objective as:
 $$\Rev \leq \sum_{s_1} \frac{\x(r,s_1) r_1r_2 +  \x(r',s_1)r'_1r'_2}{s_1}.$$

\noindent\emph{Step 3: Bundling the winner.} We will replace $s_1$ according to the calibration constraint in the expression above and replace:
$$\lambda(s_1) = \frac{\x(r,s_1)}{\x(r,s_1) + \x(r',s_1)} \quad \text{ and } \quad \lambda'(s_1) = \frac{ \x(r',s_1) }{ \x(r,s_1) + \x(r',s_1) }.$$
We obtain:
$$\begin{aligned}
\Rev & \leq \sum_{j} (\x(r,s_1)r_1r_2 +  \x(r',s_1)r'_1r'_2) \cdot \frac{\x(r,s_1) + \x(r',s_1)}{r_1\x(r,s_1) + r'_1\x(r',s_1)} \\ & =  \sum_{j} \frac{\lambda(s_1)r_1r_2 +  \lambda'(s_1)r'_1r'_2}{\lambda(s_1)r_1 + \lambda'(s_1)r'_1} \cdot (\x(r,s_1) + \x(r',s_1)).
\end{aligned}$$

Now we can apply Chebyshev's sum inequality one more time with:
$$a(s_1) = \frac{\lambda(s_1)r_1r_2 +  \lambda'(s_1)r'_1r'_2}{\lambda(s_1)r_1 + \lambda'(s_1)r'_1}, \quad b(s_1) = \lambda(s_1)r_1 + \lambda'(s_1)r'_1, \quad w(s_1) = \x(r,s_1) + \x(r', s_1).$$
%Order the signals in monotone increasing order of $s_1^j$ then along this sequence, $\lambda(s_1)$ increases.
Note that we can reorder signals $s_1$ in any order we wish. Let us reorder the signals so that $\lambda(s_1)$ is increasing.
This immediately implies that $b(s_1)$ is increasing. Namely, $b(s_1) = \lambda(s_1) r_1 + (1-\lambda(s_1)) r'_1 = \lambda(s_1) (r_1 - r'_1) + r'_1$. For $a(s_1)$, we can take the derivative in $\lambda$,
\[
\frac{d}{d\lambda} a(s_1) = \frac{r_1\cdot r'_1 \cdot (r_2-r'_2)}{(r_1\lambda - r'_1 (1-\lambda))^2} > 0.
\]
Thus, $a(s_1)$ is also increasing in $\lambda$,
%For that reasons as we increase $j$ there is more weight placed in the $r_2$ than in $r'_2$ so $a_j$ is also increasing,
allowing us to apply the inequality to obtain:
$$\sum_{s_1} w(s_1)  a(s_1) \leq \left(\sum_{s_1} w(s_1) a(s_1) b(s_1) \right) \frac{\sum_{s_1} w(s_1)}{ \sum_{s_1} w(s_1) b(s_1)},$$ which translates to (since $\sum_{s_1} w(s_1) = 1$):
$$\Rev \leq \frac{\sum_{s_1} \x(r,s_1)r_1r_2 +  \x(r',s_1)r'_1r'_2}{\sum_{s_1}  \x(r,s_1)r_1 +  \x(r',s_1)r'_1} = \sum_{s_1} \x(r,s_1)r_1 \frac{r_2}{\mu_1} +  \x(r',s_1)r'_1 \frac{r'_2}{\mu_1} = p r_1 \frac{r_2}{\mu_1} + (1-p) r'_1 \frac{r'_2}{\mu_1}.$$
This is the revenue obtained by bundling the winner and unbundling the loser as desired.
\end{proof}

The impact of competitive prices via competitive click-through rates can be so strong that
even a weaker position of the winning bidder can actually increase the revenue of
the auctioneer. In fact, as long as we maintain a uniform winner and a
congruent loser, it is revenue-increasing for the winner to have a lower
click-through rate in the low click-through rate.

For the next result we stay with the uniform winner setting, but now flip the ranking of click-through rates across click-through realizations. Thus, we consider the case of the incongruent loser. We can verify that the competition is guaranteed to be weak in the above sense.

Now, we can show that an interior information structure will always be the optimal information structure. In contrast, the moderate information structure that was optimal in the congruent setting can be shown to perform worse than either of the extremal information structures.

\begin{proposition}[Uniform Winner and Incongruent Loser]
\label{uwil}\label{pro:uni-inc}
With a uniform winner and incongruent loser, there is always an interior
information structure that is revenue-improving over all exterior
information structures. In particular, partially bundling the winner \emph{%
and} the loser is revenue-improving, relative to all exterior information
structures:

\begin{center}
%\begin{equation*}
\begin{tabular}{c|c|ccc}
\toprule
\backslashbox{$r$}{$s$} & $(r_1,x)$ & $(z,y)$ & $\left(z,x\right) $ & $%
\left( r_{1}^{\prime },y\right) $ \\ \hline
$(r_{1},r_{2})$ & $p-q$ & $q$ & $0$ & $0$ \\ \hline
$(r_{1}^{\prime },r_{2}^{\prime })$ & $0$ & $0$ & $q^{\prime }$ & $p^{\prime
}-q^{\prime }$\\
\bottomrule%
\end{tabular}%
%.\
%\end{equation*}
\end{center}
for suitably chosen $x < y \leq z$.
\end{proposition}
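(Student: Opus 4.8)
Write $p'=1-p$ and regard the displayed table as a two–parameter family of information structures $\mathcal F_{q,q'}$, $q\in[0,p]$, $q'\in[0,p']$. The strategy is: (i) show calibration forces $x,y,z$ to be explicit convex combinations of the four CTR values, so that $\mathcal F_{q,q'}$ is automatically valid and calibrated; (ii) note that bidder $1$ wins at every signal and write the revenue in closed form; (iii) check that every \emph{exterior} structure earns at most $\mu_2:=p\,r_2+(1-p)\,r_2'$, with equality attained by $\mathcal F_{0,0}$ (which is just full disclosure); and (iv) show that a small perturbation of $(q,q')$ away from $(0,0)$ strictly raises the revenue while keeping the structure valid and interior. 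Throughout we may assume $r_1>r_1'$ (if $r_1=r_1'$ then bidder $1$'s CTR is constant, calibration forces its signal to be that constant, and the structure cannot be interior in bidder $1$) and $0<p<1$.

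For (i): each of bidder $1$'s signals $r_1$ and $r_1'$ is emitted under a single CTR profile, so those calibration constraints hold trivially, while the constraint for signal $z$ and for bidder $2$'s signals $x,y$ give
\[
z=\frac{q\,r_1+q'\,r_1'}{q+q'},\qquad x=\frac{(p-q)\,r_2+q'\,r_2'}{(p-q)+q'},\qquad y=\frac{q\,r_2+(p'-q')\,r_2'}{q+(p'-q')}.
\]
Each is a convex combination of numbers in $[0,1]$, the two rows sum to $p$ and $p'$, and along the ray $q=q'=t>0$ one has $z=(r_1+r_1')/2$ (constant), $x\to r_2$ and $y\to r_2'$ as $t\to0$; hence for small $t>0$ one has $x<y$ (incongruence $r_2<r_2'$) and $y<z$ (using $r_2'\le r_1'$ from the uniform–winner hypothesis and $r_1'<(r_1+r_1')/2$), as well as $z\le r_1$ and $y\le r_1'$.

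For (ii)–(iii): combining the ranking $r_1\ge r_1',r_2,r_2'$, the inequalities $x<y<z\le r_1$ and $y\le r_1'$, and weak competition $r_2\le\mu_1$, one verifies that bidder $1$ wins at all four signal cells of $\mathcal F_{t,t}$ and at every realization of each of the four exterior structures. Since the signal in $\mathcal F_{t,t}$ pins down bidder $1$'s true CTR ($r_1$ for $(r_1,x)$ and $(z,y)$; $r_1'$ for $(z,x)$ and $(r_1',y)$), the revenue is
\[
\Rev(\mathcal F_{q,q'})=(p-q)\,x+q\,r_1\frac{y}{z}+q'\,r_1'\frac{x}{z}+(p'-q')\,y .
\]
Evaluating the exterior structures, no-disclosure, full-disclosure, and full-to-bidder-$1$/no-to-bidder-$2$ each yield exactly $\mu_2$, while no-to-bidder-$1$/full-to-bidder-$2$ yields $\big(p\,r_1 r_2+(1-p)\,r_1' r_2'\big)/\mu_1$; applying Chebyshev's sum inequality (Lemma~\ref{lem:chebyshev}) to the oppositely ordered sequences $(r_1,r_1')$ and $(r_2,r_2')$ — this is exactly where incongruence enters — bounds the numerator by $\mu_1\mu_2$, so this structure earns at most $\mu_2$ as well. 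Thus $\mu_2$ is the best exterior revenue, attained at $\mathcal F_{0,0}$.

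For (iv), differentiating along $q=q'=t$ (where the formulas for $x,y,z$ simplify as above) gives, after a short computation,
\[
\left.\frac{d}{dt}\,\Rev(\mathcal F_{t,t})\right|_{t=0}=\frac{(r_1-r_1')(r_2'-r_2)}{r_1+r_1'}>0,
\]
using $r_1>r_1'$ and $r_2'>r_2$. Hence for all sufficiently small $t>0$, $\mathcal F_{t,t}$ is a valid, calibrated, interior information structure with $\Rev(\mathcal F_{t,t})>\mu_2\ge$ (revenue of every exterior structure); fixing any such $t$ and the induced $x,y,z$ yields the claimed structure. The step carrying the real content is the Chebyshev bound in (iii) showing the no-to-bidder-$1$/full-to-bidder-$2$ structure cannot beat $\mu_2$: this is precisely what incongruence buys, and it fails in the congruent case (Theorem~\ref{thm:uni-con-weak}), where that structure is optimal and the conclusion flips. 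The remaining work — the calibration identities, checking bidder $1$ wins in all relevant cells, and the derivative — is routine bookkeeping that repeatedly strings together the ranking $r_1\ge r_1',r_2,r_2'$, the uniform–winner inequality $r_1'\ge r_2'$, and weak competition $r_2\le\mu_1$.
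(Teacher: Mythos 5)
Your proof is correct, and it reaches the same final family of structures as the paper, but the key step is argued by a genuinely different route. The paper handles the four exterior structures exactly as you do (its direct algebraic identity $\bigl(pr_1r_2+(1-p)r_1'r_2'\bigr)/\mu_1-\mu_2=\frac{p(1-p)(r_1-r_1')(r_2-r_2')}{\mu_1}<0$ is the same fact you obtain from Chebyshev's sum inequality with oppositely ordered sequences). Where you diverge is the strict-improvement step. The paper's argument is structural and two-staged: it first partially bundles only the loser, checks via calibration that this operation is exactly revenue-neutral (so any $(q,q')$ with $y>x$ can be reached for free), and then observes that the resulting $2\times 2$ submatrix on the cells $(r_1,y)$ and $(r_1',x)$ is an instance of the congruent-loser, weak-competition case, so that Theorem~\ref{thm:uni-con-weak} implies merging the two winner signals into $z$ strictly raises revenue; interiority follows because the gain from that merge grows with $y-x$ while feasibility shrinks $q,q'$. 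You instead collapse both stages into a single first-order perturbation along the diagonal $q=q'=t$ and compute $\frac{d}{dt}\Rev\big|_{t=0}=\frac{(r_1-r_1')(r_2'-r_2)}{r_1+r_1'}>0$, which I have verified. Your route is more self-contained (it does not lean on Theorem~\ref{thm:uni-con-weak} for the direction of improvement) and gives an explicit rate; the paper's route is less computational and makes the conceptual point that incongruence creates a congruent sub-problem on which winner-bundling pays. One remark: your derivative vanishes when $r_1=r_1'$, and you correctly flag that the construction needs $r_1>r_1'$ — the paper's proof has the same implicit non-degeneracy requirement (its strict inequality also uses $r_1\neq r_1'$), so this is not a gap relative to the paper, but it is worth stating the assumption explicitly in the proposition's hypotheses.
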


These two cases demonstrate how rich the optimal structures in the asymmetric setting can be, but they also emphasize that the same guiding principles that governed the optimal policy in symmetric settings are in play. Namely, as in the symmetric setting, the optimal schemes that we identify are not extremal, maintain efficiency, and seek to strengthen competition through calibrated signals.

\section{Conclusion}

A large share of online advertising, whether in search advertising, display advertising,
or social networks, is allocated by various auction mechanisms. The auction
seeks to form a match between the viewer and competing advertisers. The
viewers are typically heterogeneous in many dimensions: their
characteristics, their preferences, their (past) shopping behavior, their
browsing history, and many other aspects, observable and unobservable. A
critical dimension of the heterogeneity is the click-through rate of the
user for a specific advertiser.

The disclosure policy of the auctioneer regarding the click-through rates
therefore influences the distribution of bids, holding fixed the
distribution of preferences among the bidders. By disclosing less, the
auctioneer in effect bundles certain features, or as \citet{lemi10} suggest,
conflates features of the viewer. The process of conflation influences the
thickness or thinness of the market---in other words, the strength of the
competition.

In our analysis, conflation was achieved by bundling the information
regarding click-through rates in an optimal manner. The optimal information
structure conveys just enough information so that the resulting bidding
process ranks the bidders according to the true social value of each
bidder. At the same time, the information is released only partially to
maintain bids as close as possible to a perfectly competitive level. We
show that this requires that the information is provided to each bidder as
private information at an individual level, rather than as public information
on a market level.

An alternative approach might have been to directly apply a version of the
revelation principle to the information design problem (see \citet{bemo19}).
The revelation principle for information design would suggest that the
auctioneer only requires as many signals as allocations to be implemented.
This would suggest that a public signal that informs the market about the
identity of the winner and loser would be sufficient. In a two-bidder
environment, this would suggest that a binary public information structure
would be sufficient. Indeed, the binary signal would lead to an efficient
allocation. But as our analysis and Example 2 have shown in detail, this
would lead to a lower revenue for the seller than what is attainable by
optimal information design. The apparent failure of the revelation principle
here is related to the click-through auction. We deliberately fixed the
auction mechanism to trace the implications of information design in a
given auction mechanism. But by adopting the rules of the click-through
auction, we effectively limit the message space of each type, which is
the fundamental source of failure of the revelation principle, as
famously observed by \citet{grla86}.

A significant advance in our information design problem is to allow for
multi-dimensional and private information in a strategic setting. By
contrast, the most recent result in the design of optimal information
structure requires one-dimensional, or equivalently symmetric, solutions to
optimal design (see \citet{klms20}). A general approach to optimal
multi-dimensional information design in strategic settings is a wide-open
question. In the current context, we could make progress by insights
specific to the auction setting.

Throughout this work we maintained the assumption of complete information, regarding
the value of each bidder. While this setting is commonly adopted in the
analysis of sponsored search auctions, it would clearly be an important issue
for how the optimal information policy is influenced by private information,
regarding the valuation of each bidder. We would then be in a setting where
both the auctioneer and bidders have private information. This also remains
a wide-open issue, even in a setting with a single receiver, and progress is
currently being made only in specific settings, either binary actions and states
or multiplicative separable settings (see \citet{klmz15} and \citet{cast21},
respectively). We would hope that the current auction setting would give us
additional tools to address these issues eventually.

\bibliographystyle{ACM-Reference-Format}
\bibliography{general}

\newpage
\appendix

\section{Full Surplus Extraction with Equal Means}\label{app:equal-means}

%We can extend the full surplus extraction result to asymmetric environments
%with the weaker requirement of equal mean. Thus suppose there are two possible
%CTR values, $0\leq l<h$, and $a\neq b\geq0$\ with
%\[
%\Pr[(l,a)]=\frac{h-b}{h+a-l-b},\qquad\Pr[(h,b)]=\frac{a-l}{h+a-l-b},
%\]
%and the same mean
%\[
%\mu=\frac{ha-lb}{h+a-l-b}.
%\]

\begin{lemma*}[Lemma~\ref{equal-means} Restate]
Consider a setting with two bidders with click-through rates $(l,a)$ and $(h,b)$, where $0 \leq l < h$ and $a \neq b \geq 0$, with probability
\[
\Pr[(l,a)]=\frac{h-b}{h+a-l-b},\qquad\Pr[(h,b)]=\frac{a-l}{h+a-l-b},
\]
and the same mean
\[
\mu=\frac{ha-lb}{h+a-l-b}.
\]
Then, for every $\epsilon > 0$,  there is a calibrated, correlated information structure that extracts at least $1-O(\sqrt{\epsilon})$ fraction of the welfare as revenue.
\end{lemma*}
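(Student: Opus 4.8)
The plan is to generalize the "dispersion along the diagonal" construction of Lemma~\ref{css} from the perfectly negatively correlated symmetric case to the present two-valued, equal-means case. Taking the common value to be $1$, the welfare is $W=\E[\max(r_1,r_2)] = p\,a+(1-p)\,h$, since the efficient winner is bidder $2$ in state $(l,a)$ and bidder $1$ in state $(h,b)$; I also assume without loss $a>b$, the case $a<b$ being symmetric. The first thing to record is the geometry: since the prior probabilities are strictly interior we have $l<\mu<h$, and since $a\neq b$ we also have $\min(a,b)<\mu<\max(a,b)$; hence $\mu$ lies in the interior both of $[l,h]$, the interval of admissible conditional means for bidder $1$, and of $[\min(a,b),\max(a,b)]$ for bidder $2$, so their intersection $I$ is a nontrivial open interval around $\mu$ in which all signals will be placed.

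Inside $I$ I would build a geometric backbone $s_0=\mu$, $s_k=\mu(1+\delta)^k$ for $-K\le k\le K$, with $\delta$ small and $K=\Theta(1/\delta)$, and define the structure by sending state $(l,a)$ to the ordered pair $(s_k,s_{k+1})$ (bidder $2$ slightly ahead, hence the efficient winner) with weight $\alpha_k$, state $(h,b)$ to $(s_{k+1},s_k)$ with weight $\beta_k$, and putting boundary masses on the diagonal pairs $(s_{-K},s_{-K})$ and $(s_K,s_K)$. Calibration at $s_j$ forces $\alpha_j(s_j-l)=\beta_{j-1}(h-s_j)$ for bidder $1$ and $\alpha_{j-1}(a-s_j)=\beta_j(s_j-b)$ for bidder $2$; since $s_j\in I$ the multipliers $\tfrac{h-s_j}{s_j-l}$ and $\tfrac{a-s_j}{s_j-b}$ are strictly positive, so this is a coupled two-step recursion that stays positive once seeded positively, the endpoint calibration equations pin down the boundary masses (nonnegative for small $\delta$), and the remaining freedom is used to enforce $\sum_k\alpha_k+(\text{boundary})=p$ and $\sum_k\beta_k+(\text{boundary})=1-p$. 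Unlike the symmetric case, where $\alpha_k\equiv\beta_k$ and $p=\tfrac12$ collapse this to a single chain with no sign constraint, matching the two fixed prior probabilities with nonnegative weights is a genuine restriction here, and I expect to have to tune the backbone to meet it --- for instance by using unequal step sizes, or by attaching a small full-disclosure component via Lemma~\ref{lem:signal_composition}.

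For the revenue estimate, on every nonboundary pair the allocation is efficient and the per-click price ratio is exactly $s_k/s_{k+1}=1/(1+\delta)$, so the conditional revenue is $a/(1+\delta)$ in state $(l,a)$ and $h/(1+\delta)$ in state $(h,b)$, contributing at least $W/(1+\delta)$ on the nonboundary event; on the diagonal-boundary events revenue is nonnegative but can fall short of the welfare by a constant, so the total loss is $O(\delta)$ from price distortion plus a constant times the total boundary mass. The crux --- and the step I expect to be hardest --- is to bound this boundary mass while simultaneously keeping calibration exact for both bidders along the backbone and matching the prior probabilities $p$ and $1-p$ with nonnegative weights; the interplay of these constraints is what forces a bound of order $\sqrt\delta$ rather than $\delta$, and then choosing $\delta$ proportional to $\epsilon$ yields revenue at least $(1-O(\sqrt\epsilon))\cdot v\,\E[\max(r_1,r_2)]$, as claimed. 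Everything else --- efficiency on the interior, the explicit price computation, and nonnegativity of all weights for $\delta$ small --- is immediate from the construction.
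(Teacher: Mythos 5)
Your skeleton is the same as the paper's: a geometric signal grid clustered around the common mean $\mu$, state $(l,a)$ sent to adjacent pairs with bidder $2$ slightly ahead and $(h,b)$ to pairs with bidder $1$ slightly ahead, the coupled two-step calibration recursion with strictly positive multipliers because all signals lie in $(l,h)\cap(\min(a,b),\max(a,b))$, and a revenue accounting of $W/(1+\delta)$ on the interior plus a loss proportional to the extreme masses. But the proposal explicitly defers exactly the two steps that carry the mathematical weight, so as written it is a plan rather than a proof.

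First, the consistency of the two marginal probability constraints. You correctly observe that, unlike the symmetric case, the recursion is seeded by a single degree of freedom while two totals ($p$ and $1-p$) must be matched, and you say you "expect to have to tune the backbone," suggesting unequal step sizes or splicing in a full-disclosure component. Neither suggestion is developed, and the second is dangerous: a full-disclosure patch loses a constant fraction of surplus on its support, so its mass must itself be driven to zero, which reproduces the original problem. The paper's resolution is a clean intermediate-value argument in the grid's base point $s$: calibration forces $x'_k = x_k\cdot\frac{s_{2k}-l}{h-s_{2k}}$, the map $z\mapsto\frac{z-l}{h-z}$ takes the value $\Pr[(h,b)]/\Pr[(l,a)]$ exactly at $z=\mu$, and sliding the whole grid from entirely below $\mu$ to entirely above $\mu$ moves the weighted average of these ratios continuously through the target. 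This is the one new idea relative to Lemma~\ref{css} and it is absent from your proposal. Second, the $O(\sqrt{\epsilon})$ bound on the extreme masses: you assert its order and call it the hardest step, but give no mechanism. The monotonicity argument that bounds the tail in Lemma~\ref{css} does not transfer, because the two per-step calibration ratios $\frac{h-s}{s-l}$ and $\frac{a-s}{s-b}$ move in opposite directions as $s$ crosses $\mu$ (their product is $1$ at $s=\mu$), so the chain of weights need not be monotone. The paper instead Taylor-expands both ratios around $\mu$ to show every weight equals $\Pr[\cdot]/(2K+1)+O(\Delta s)$ where $\Delta s$ is the grid's total width, and then balances $K=\Theta(1/\sqrt{\epsilon})$ against $\Delta s=O(\sqrt{\epsilon})$. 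You would need to supply both of these arguments for the proof to close.
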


%In this appendix we prove Lemma~\ref{equal-means}.
Note that we can, without loss of generality, assume that $a\neq l$ and $b\neq h$. Otherwise, because of the
symmetric mean, when $a=l$ we must have $b=h$, and vice versa. In this special case, the
problem is trivial (full-disclosure is optimal).

%\begin{theorem}
%[Correlated and Calibrated Signalling]\label{acss}\quad\newline There exists a
%randomized and calibrated signaling scheme $s:[0,1]\rightarrow[0,1]$
%that approximates full surplus extraction.
%\end{theorem}

Consider the following information structure:

\begin{itemize}
\item Mapping $(l,a)$ to $(s_{2k},s_{2k+1})$ with probability $x_{k}$ for $-K
\leq k \leq K$;

\item Mapping $(h,b)$ to $(s_{2k},s_{2k-1})$ with probability $x_{k}^{\prime}$
for $-K \leq k \leq K$.
\end{itemize}

In particular,
\[
s_{i} = \left\{
\begin{array}
[c]{ll}%
h^{*} = \min\{h, \max\{a, b\}\}, & i = 2K+1;\\
s \cdot(1+\epsilon)^{i}, & -2K \leq i \leq2K;\\
l^{*} = \max\{l, \min\{a, b\}\}, & i = -2K-1.
\end{array}
\right.
\]
In general, we require that $s_{-2K-1} < s_{-2K} < s_{2K} < s_{2K+1}$, that
is
\[
l^{*} < s \cdot(1 + \epsilon)^{-2K} < s \cdot(1 + \epsilon)^{2K} < h^{*}.
\]
Note that either $a > s_{2K}, \ldots, s_{-2K} > b$ or $a < s_{2K}, \ldots,
s_{-2K} < b$.

We claim that as $\epsilon\rightarrow0$, there exists a valid information structure of the above form satisfying the calibration constraint and extracting
at least $1 - O(\sqrt{\epsilon})$ of the welfare as revenue.\\

Consider the calibration constraints: For $-K\leq k\leq K$, the first
buyer receives signal $s_{2k}$ when the CTR profiles are $(l,a)$ or $(h,b)$,
with probability $x_{k}$ and $x_{k}^{\prime}$, respectively, i.e.,
\begin{gather*}
h\cdot x_{k}^{\prime}+l\cdot x_{k}=s_{2k}\cdot(x_{k}^{\prime}+x_{k}
)\iff(h-s_{2k})x_{k}^{\prime}=(s_{2k}-l)x_{k}.
\end{gather*}
Similarly, for the second buyer receiving signal $s_{2k+1}$,
\[
a\cdot x_{k}+b\cdot x_{k+1}^{\prime}=s_{2k+1}\cdot(x_{k}+x_{k+1}^{\prime}%
)\iff(a-s_{2k+1})x_{k}=(s_{2k+1}-b )x_{k+1}^{\prime}.
\]
Therefore, we have
\[
x_{k+1} = \frac{h - s_{2k+2}}{s_{sk+2}-l}x^{\prime}_{k+1} = \frac{h -
s_{2k+2}}{s_{2k+2}-l} \cdot\frac{a-s_{2k+1}}{s_{2k+1}-b} x_{k},
\]
and similarly,
\[
x^{\prime}_{k+1} = \frac{a- s_{2k+1}}{s_{2k+1} - b}x_{k} = \frac{a- s_{2k+1}%
}{s_{2k+1} - b} \cdot\frac{h - s_{2k}}{s_{2k} - l} x^{\prime}_{k}.
\]
Note that the coefficients here are all positive,
\[
\frac{a- s_{2k+1}}{s_{2k+1} - b} > 0,
\]
because either $a > s_{2k+1} > b$ or $a < s_{2k+1} < b$.

Thus,
\begin{align*}
x_{k} = x_{0} \prod_{i=1}^{k} \frac{h - s_{2i}}{s_{2i}-l} \cdot\frac
{a-s_{2i-1}}{s_{2i-1}-b}  &  \qquad x_{-k} = x_{0} \prod_{i=1}^{k}
\frac{s_{2-2i}-l}{h - s_{2-2i}} \cdot\frac{s_{1-2i}-b}{a-s_{1-2i}}\\
x^{\prime}_{k} = x^{\prime}_{0} \prod_{i=1}^{k} \frac{h - s_{2i-2}}%
{s_{2i-2}-l} \cdot\frac{a-s_{2i-1}}{s_{2i-1}-b}  &  \qquad x^{\prime}_{-k} =
x^{\prime}_{0} \prod_{i=1}^{k} \frac{s_{-2i}-l}{h - s_{-2i}} \cdot
\frac{s_{1-2i}-b}{a-s_{1-2i}}%
\end{align*}

Therefore, the construction is valid, if and only if the following probability
constraint is satisfied:
\[
\sum x_{k} = \Pr[(l, a)] = \frac{h - b}{h + a - l - b}, \qquad\sum x^{\prime
}_{k} = \Pr[(h, b)] = \frac{a - l}{h + a - l - b},
\]
and
\[
x_{k}, x^{\prime}_{k} \geq0.
\]

It remains to argue the existence of such an information structure $(s, \epsilon,
K)$ with $\epsilon\rightarrow0^{+}$, and for the purpose of the revenue bound $x_{K}, x^{\prime}_{-K} = o(1)$. This is established in Lemma~\ref{lem:prob} and Lemma~\ref{lem:efficient} below.

\begin{lemma}
\label{lem:prob} For sufficiently small $0 < \epsilon< (\min\{h^{*} / \mu,
\mu/ l^{*}\})^{1/4} - 1$ and $K < \frac14\log_{1 + \epsilon} \min\{h^{*} /
\mu, \mu/ l^{*}\}$, there exist $s \in(s^{-}, s^{+})$ such that the above
construction is a valid signaling scheme, where
\[
s^{-} = \mu\cdot(1+\epsilon)^{-2K}, \qquad s^{+} = \mu\cdot(1+\epsilon)^{2K}.
\]

\end{lemma}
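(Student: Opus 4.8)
The plan is to treat the unknown base point $s$ and the free normalization constants $x_0, x_0'$ as the degrees of freedom, and to show that for $\epsilon$ small enough the two probability-mass constraints $\sum_k x_k = \Pr[(l,a)]$ and $\sum_k x_k' = \Pr[(h,b)]$ can be satisfied simultaneously with all terms non-negative. First I would observe that for any choice of $s \in (s^-, s^+)$ all the signals $s_i = s\cdot(1+\epsilon)^i$ for $-2K \le i \le 2K$ lie strictly between $l^*$ and $h^*$ (this is exactly why the stated bounds on $\epsilon$ and $K$ are imposed), and hence between $\min\{a,b\}$ and $\max\{a,b\}$ as well as strictly between $l$ and $h$; consequently every ratio appearing in the product formulas for $x_k/x_0$, $x_k'/x_0'$ is strictly positive, so non-negativity reduces to $x_0, x_0' \ge 0$. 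So the only real content is the pair of normalization equations.

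Next I would set $C(s) := \sum_{k=-K}^{K} (x_k/x_0)$ and $C'(s) := \sum_{k=-K}^{K}(x_k'/x_0')$, both finite positive sums depending continuously on $s$. The normalization then forces $x_0 = \Pr[(l,a)]/C(s)$ and $x_0' = \Pr[(h,b)]/C'(s)$, which are automatically positive — so it looks as though \emph{any} $s \in (s^-,s^+)$ works. The subtlety is that $x_0$ and $x_0'$ are not independent: the calibration relations tie $x_k$ to $x_k'$ (e.g. $(h - s_{2k}) x_k' = (s_{2k} - l) x_k$), so fixing $x_0$ determines $x_0'$ via $x_0' = \frac{s_0 - l}{h - s_0} x_0$ when $0$ is in range, or more precisely the whole family is pinned down by a single scalar once $s$ is chosen. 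Thus the genuine requirement is a \emph{consistency} condition: the value of $x_0'$ forced by the ``$(l,a)$-side'' normalization must equal the value forced by the ``$(h,b)$-side'' normalization. I would write this as a single equation $\Phi(s) = 0$ where $\Phi(s)$ is the difference between these two determinations (equivalently, $\Phi(s) = \Pr[(l,a)] \cdot C'(s) \cdot \kappa(s) - \Pr[(h,b)] \cdot C(s)$ for the appropriate linking ratio $\kappa(s)$ coming from calibration at $s_0$), and then invoke the intermediate value theorem: show $\Phi(s^-) $ and $\Phi(s^+)$ have opposite signs. Intuitively, pushing $s$ toward $s^+$ makes the $(h,b)$-mass want to concentrate near the top signal and shrinks the relative weight the $(l,a)$-chain can carry, and symmetrically near $s^-$, so the two determinations of $x_0'$ cross.

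I expect the \textbf{main obstacle} to be establishing that sign change cleanly — that is, controlling the monotone behavior of the ratios $\frac{h - s_{2i}}{s_{2i} - l}$ and $\frac{a - s_{2i-1}}{s_{2i-1} - b}$ as $s$ ranges over $(s^-, s^+)$ and confirming that the endpoint degeneracies (where $s\cdot(1+\epsilon)^{\pm 2K}$ approaches $l^*$ or $h^*$ and one of the factors blows up or vanishes) drive $\Phi$ to the required signs. This is where the precise choice of $s^\pm = \mu(1+\epsilon)^{\mp 2K}$ and the symmetric-mean hypothesis $\mu = \frac{ha - lb}{h+a-l-b}$ enter: at $s = \mu$ the chain is ``balanced'' and the equal-means assumption is what guarantees $\Phi$ changes sign around $s = \mu$ rather than staying one-signed. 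After that, continuity of $\Phi$ on the compact interval and the IVT finish the existence claim; the auxiliary bounds $x_K, x_{-K}' = o(1)$ needed later for the revenue estimate follow because there are $\Theta(K) = \Theta(\log_{1+\epsilon}(\cdot)) = \Theta(1/\epsilon)$ comparable terms summing to a constant, forcing each extreme term to be $O(\epsilon)$, which I would record at the end of the argument.
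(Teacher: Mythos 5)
Your overall strategy is the same as the paper's: reduce validity to a single scalar consistency equation in $s$ (the two normalizations of $x_0$ and $x_0'$ must agree, since calibration links $x_k'$ to $x_k$), then find a root in $(s^-,s^+)$ by the intermediate value theorem. The non-negativity step and the reduction to one equation are correct. However, the decisive step — that your $\Phi$ actually changes sign between $s^-$ and $s^+$ — is exactly the content of the lemma, and you leave it as an expectation rather than a proof. Moreover, the mechanism you anticipate for it (controlling "endpoint degeneracies where $s(1+\epsilon)^{\pm 2K}$ approaches $l^*$ or $h^*$ and a factor blows up or vanishes") is not how the sign change arises: under the stated bounds on $\epsilon$ and $K$, all signals stay in the compact range $[\mu(1+\epsilon)^{-4K},\mu(1+\epsilon)^{4K}]$, which is strictly inside $(l^*,h^*)$, so nothing blows up or vanishes at either endpoint of the $s$-interval.

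The actual argument is a pointwise comparison that uses the equal-means hypothesis through one algebraic identity. Calibration for bidder $1$ at signal $s_{2k}$ gives $x_k' = x_k\cdot\frac{s_{2k}-l}{h-s_{2k}}$, so the consistency condition is that the weighted average $\gamma(s)=\sum_k x_k\frac{s_{2k}-l}{h-s_{2k}}\big/\sum_k x_k$ equals $\frac{\Pr[(h,b)]}{\Pr[(l,a)]}$. The key identity is
\[
\frac{\mu-l}{h-\mu}=\frac{(h-l)(a-l)}{(h-l)(h-b)}=\frac{a-l}{h-b}=\frac{\Pr[(h,b)]}{\Pr[(l,a)]},
\]
and the map $z\mapsto\frac{z-l}{h-z}$ is increasing on $(l,h)$. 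At $s=s^-$ every signal $s_{2k}\le s^-(1+\epsilon)^{2K}\cdot(1+\epsilon)^{-?}<\mu$ — more precisely all of $s_{-2K},\dots,s_{2K}$ lie below $\mu$ — so every term of the average, and hence $\gamma(s^-)$, is strictly below $\frac{\Pr[(h,b)]}{\Pr[(l,a)]}$; symmetrically $\gamma(s^+)$ is strictly above. Continuity in $s$ then gives the root. This is precisely where the specific choice $s^{\pm}=\mu(1+\epsilon)^{\mp 2K}$ is used (it forces all signals to sit on one side of $\mu$ at each endpoint), not any degeneration of the product coefficients. Without this computation your proof is incomplete; with it, it coincides with the paper's. (Your closing remark about $x_K,x_{-K}'=o(1)$ belongs to the companion Lemma~\ref{lem:efficient}, not to this one.)
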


\begin{lemma}
\label{lem:efficient} As $\epsilon\rightarrow0^{+}$ and $K = \frac
{\sqrt{\epsilon}}4 \log_{1+\epsilon} \min\{h^{*} / \mu, \mu/ l^{*}\} =
O(1/\sqrt{\epsilon})$, the valid signaling scheme defined in
Lemma~\ref{lem:prob} has $x_{K}, x^{\prime}_{-K} = O(\sqrt{\epsilon})$.
\end{lemma}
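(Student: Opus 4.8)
## Proof Plan for Lemma~\ref{equal-means} (Restated)

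\textbf{Setup and strategy.} I would take the construction and Lemma~\ref{lem:prob} as given and control the probabilities $x_k$ and $x'_k$ entirely through their consecutive ratios, which the equal-means hypothesis pins down to be $1+o(1)$. By Lemma~\ref{lem:prob} the valid scheme uses a base $s \in (s^-,s^+)=(\mu(1+\epsilon)^{-2K},\mu(1+\epsilon)^{2K})$, with $K=\tfrac{\sqrt\epsilon}{4}\log_{1+\epsilon}C$ and $C=\min\{h^*/\mu,\mu/l^*\}$. First I would record that $C>1$ and $K=\Theta(1/\sqrt\epsilon)$: since $\mu$ is a strict convex combination of $l,h$ and of $a,b$, we have $l^*<\mu<h^*$, so $C>1$; and $\ln(1+\epsilon)\sim\epsilon$ gives $K\sim \tfrac{\ln C}{4\sqrt\epsilon}$, hence also $1/K=O(\sqrt\epsilon)$. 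Moreover $(4K+1)\ln(1+\epsilon)=O(\sqrt\epsilon)\to 0$, so $(1+\epsilon)^{\pm(4K+1)}=1+O(\sqrt\epsilon)$, and therefore every signal actually used, $s_i=s(1+\epsilon)^i$ with $|i|\le 2K+1$, satisfies $s_i=\mu\bigl(1+O(\sqrt\epsilon)\bigr)$ with the constant uniform in $i$.

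\textbf{The key identity.} The second ingredient is the algebraic fact
\[
\frac{(h-\mu)(a-\mu)}{(\mu-l)(\mu-b)}=1,
\]
which is the equal-means condition in disguise: writing $p=\Pr[(l,a)]$, the relations $\mu=p\,l+(1-p)h$ and $\mu=p\,a+(1-p)b$ give $\tfrac{h-\mu}{\mu-l}=\tfrac{p}{1-p}$ and $\tfrac{a-\mu}{\mu-b}=\tfrac{1-p}{p}$, whose product is $1$. Since $h-\mu,\ \mu-l,\ |a-\mu|,\ |\mu-b|$ are positive constants for a fixed instance, substituting $s_i=\mu(1+O(\sqrt\epsilon))$ into the recurrences of the construction gives, uniformly in $k$,
\[
\frac{x_{k+1}}{x_k}=\frac{(h-s_{2k+2})(a-s_{2k+1})}{(s_{2k+2}-l)(s_{2k+1}-b)}=\frac{(h-\mu)(a-\mu)}{(\mu-l)(\mu-b)}\bigl(1+O(\sqrt\epsilon)\bigr)=1+O(\sqrt\epsilon),
\]
and likewise $x'_{k+1}/x'_k=1+O(\sqrt\epsilon)$ from the analogous recurrence.

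\textbf{Telescoping and the marginal constraint.} Any $x_k$ with $-K\le k\le K$ is obtained from $x_0$ by multiplying at most $2K$ of these ratios, so $\bigl|\log(x_k/x_0)\bigr|\le 2K\cdot O(\sqrt\epsilon)=O(1)$ (using $K=O(1/\sqrt\epsilon)$), i.e. $x_k=\Theta(x_0)$; similarly $x'_k=\Theta(x'_0)$ for all $k$. Plugging this into the validity constraint $\sum_{k=-K}^{K}x_k=\Pr[(l,a)]\le 1$, a sum of $2K+1$ terms each of order $x_0$, forces $x_0=\Theta(1/K)$, whence $x_K=\Theta(1/K)=O(\sqrt\epsilon)$. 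The bound $x'_{-K}=O(\sqrt\epsilon)$ is identical, via $\sum_k x'_k=\Pr[(h,b)]$. Together with the observation (used in the surrounding revenue argument) that every non-edge event is allocated efficiently at a price equal to the winner's realized CTR divided by $1+\epsilon$, while the two edge events carry total mass $O(\sqrt\epsilon)$, this yields the claimed $1-O(\sqrt\epsilon)$ revenue guarantee.

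\textbf{Main obstacle.} The delicate part is the calibration of $K$ against two opposing demands. On one hand $K$ must be large, so that $x_K\approx 1/(2K+1)$ times the total mass is small; on the other hand $K$ must be small enough that $2K\epsilon=o(1)$, keeping all signals — and hence all ratios $x_{k+1}/x_k$ — within $1+o(1)$ of the value fixed by the equal-means identity. (Were $K$ taken of the maximal order $1/\epsilon$ that Lemma~\ref{lem:prob} permits, the signal cloud would spread out, the ratios would drift off $1$, and the $x_k$ would decay geometrically rather than remain mutually comparable.) The choice $K=\Theta(1/\sqrt\epsilon)$ is exactly the balance point: each interior event loses only $O(\epsilon)$ of its welfare in price, and the residual $O(\sqrt\epsilon)$ revenue loss is entirely accounted for by the $O(\sqrt\epsilon)$ mass on the two edge events.
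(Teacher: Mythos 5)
Your proof is correct and follows essentially the same route as the paper: Taylor-expand the transition ratios around $\mu$, use the equal-means identity $\frac{(h-\mu)(a-\mu)}{(\mu-l)(\mu-b)}=1$ to see each ratio is $1+O(\Delta s)$, and then invoke the marginal constraint $\sum_k x_k=\Pr[(l,a)]$ over $2K+1$ comparable terms to force $x_K=O(1/K)=O(\sqrt{\epsilon})$. If anything, your telescoping step, which only concludes $x_k=\Theta(x_0)$ from $2K$ factors of size $1+O(\sqrt{\epsilon})$, is the more careful version of the paper's claim that $A_k(s,\epsilon)=1+O(\Delta s)$ (which as literally stated ignores the accumulation of errors over $K=\Theta(1/\sqrt{\epsilon})$ steps), and it still suffices for the stated bound.
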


%We first prove Lemma~\ref{lem:prob}, and then we prove Lemma~\ref{lem:efficient}.

%\section{Proof of Lemma ?}
\begin{proof}
[Proof of Lemma~\ref{lem:prob}]Note that from the previous analysis, we have
\[
x_{k} = A_{k}(s, \epsilon) \cdot x_{0}, \quad x^{\prime}_{k} = A^{\prime}%
_{k}(s, \epsilon) \cdot x^{\prime}_{0},
\]
where $A_{k}(s, \epsilon)$ and $A^{\prime}_{k}(s, \epsilon)$ are coefficients
depending on $s$ and $\epsilon$.

Therefore, the construction is valid if and only if the following two
conditions are satisfied:

\begin{itemize}
\item The probability constraints are satisfied, i.e.,
\begin{gather}
\label{eq:prob}x_{0} \sum_{k=-K}^{K} A_{k}(s, \epsilon) = \Pr[(l, a)], \qquad
x^{\prime}_{0} \sum_{k=-K}^{K} A^{\prime}_{k}(s, \epsilon) = \Pr[(h, b)];
\end{gather}

\item All probabilities are non-negative, or equivalently,
\begin{gather}
\label{eq:pcoeff}\forall k \in[-K, K],~A_{k}(s, \epsilon), A^{\prime}_{k}(s,
\epsilon) \geq0.
\end{gather}

\end{itemize}

For any sufficiently small $0 < \epsilon< (\min\{h^{*} / \mu, \mu/
l^{*}\})^{1/4} - 1$, there exists $1 \leq K < \frac14\log_{1+\epsilon}
\min\{h^{*} / \mu, \mu/ l^{*}\}$. Then, for any $s \in(s^{-}, s^{+})$ we have
\[
l^{*} < \mu\cdot(1 + \epsilon)^{-4K} = s^{-} \cdot(1+\epsilon)^{-2K} <
s_{-2K}, \ldots, s_{2K} < s^{+} \cdot(1+\epsilon)^{2K} = \mu\cdot(1 +
\epsilon)^{4K} < h^{*}.
\]

Hence, for all $-K \leq k \leq K$, we know that \eqref{eq:pcoeff} is satisfied.
To conclude the proof, we next argue that there exists $s \in(s^{-}, s^{+})$
such that \eqref{eq:prob} is also satisfied.

Note that $x_{k}$ and $x^{\prime}_{k}$ are inter-dependent, i.e.,
\[
x^{\prime}_{k} = x_{k} \cdot\frac{s_{2k} - l}{h - s_{2k}}.
\]

In particular, if $s_{2k} < \mu$, then
\[
\frac{s_{2k} - l}{h - s_{2k}} < \frac{\mu- l}{h - \mu} = \frac{\frac{ha- lb}{h
+ a - l - b} - l}{h - \frac{ha- lb}{h + a - l - b}} = \frac{(h - l)(a - l)}{(h
- l)(h - b)} = \frac{a - l}{h - b} = \frac{\Pr[(h, b)]}{\Pr[(l, a)]},
\]
and vice versa.%

Now,
\[
\sum_{k = -K}^{K} x^{\prime}_{k} = \sum_{k = -K}^{K} x_{k} \cdot\frac{s_{2k} -
l}{h - s_{2k}} = x_{0} \sum_{k = -K}^{K} A_{k}(s, \epsilon) \cdot\frac{s_{2k}
- l}{h - s_{2k}},
\]
and so \eqref{eq:prob} can be satisfied if and only if
\[
\frac{\sum_{k = -K}^{K} A_{k}(s, \epsilon) \cdot\frac{s_{2k} - l}{h - s_{2k}}%
}{\sum_{k = -K}^{K} A_{k}(s, \epsilon)} = \frac{\Pr[(h, b)]}{\Pr[(l, a)]}.
\]
With $\epsilon$ fixed, the left-hand-side is a continuous function of $s$,
denoted as $\gamma(s)$. When $s = s^{-}$, $s_{i} < \mu$ for all $-2K \leq i
\leq2K$. So,
\[
\gamma(s^{-}) < \frac{\sum_{k = -K}^{K} A_{k}(s, \epsilon) \cdot\frac{\Pr[(h,
b)]}{\Pr[(l, a)]}}{\sum_{k = -K}^{K} A_{k}(s, \epsilon)} = \frac{\Pr[(h,
b)]}{\Pr[(l, a)]}.
\]

Similarly, we have
\[
\gamma(s^{+}) > \frac{\Pr[(h, b)]}{\Pr[(l, a)]}.
\]

Therefore, there must exist $s \in(s^{-}, s^{+})$ such that
\[
\gamma(s) = \frac{\Pr[(h, b)]}{\Pr[(l, a)]},
\]
which implies that \eqref{eq:prob} is satisfied, hence completing the proof.
\end{proof}

\begin{proof}
[Proof of Lemma~\ref{lem:efficient}]We now argue that for sufficiently small
$\epsilon$ and properly selected $K$, the information structure defined by
Lemma~\ref{lem:prob} has $x_{K}, x^{\prime}_{-K} = o(1)$.

Consider the following Taylor expansion near $\mu$,
\[
\frac{h - z}{z - l} = \frac{h - \mu}{\mu- l} + O(z - \mu) = \frac{h - b}{a- l}
+ O(z - \mu).
\]

Similarly,
\[
\frac{a- z}{z - b} = \frac{a- \mu}{\mu- b} + O(z - \mu) = \frac{a - l}{h - b}
+ O(z - \mu).
\]

Therefore,
\[
A_{k}(s, \epsilon) = 1 + O(\Delta s), \quad A^{\prime}_{k}(s, \epsilon) = 1 +
O(\Delta s),
\]
where $\Delta s = \max\{|s_{-2K} - \mu|, |s_{2K} - \mu|\}$.

Then, we have
\[
x_{K} = x_{0} \cdot A_{K}(s, \epsilon) = \Pr[(l, a)] \cdot\frac{A_{K}(s,
\epsilon)}{\sum_{-K}^{K} A_{k}(s, \epsilon)} = \Pr[(l, a)] \cdot\frac{1}{2K +
1} + O(\Delta s),
\]
and similarly,
\[
x^{\prime}_{-K} = x^{\prime}_{0} \cdot A^{\prime}_{-K}(s, \epsilon) = \Pr[(h,
b)] \cdot\frac{A^{\prime}_{-K}(s, \epsilon)}{\sum_{-K}^{K} A^{\prime}_{k}(s,
\epsilon)} = \Pr[(h, b)] \cdot\frac{1}{2K + 1} + O(\Delta s).
\]

In other words, as long as $K \rightarrow\infty$ and $\Delta s \rightarrow0$,
$x_{K}, x^{\prime}_{-K} \rightarrow0$.

In particular, as $\epsilon\rightarrow0$, the following $K$ satisfy the
desired property:
\[
K = \frac{\sqrt{\epsilon}}4 \log_{1+\epsilon} \min\{h^{*} / \mu, \mu/ l^{*}\}
= O(\sqrt{\epsilon} / \ln(1 + \epsilon)) = O(1/\sqrt{\epsilon}) \rightarrow
\infty,
\]
in the meanwhile,
\[
\Delta s \leq\mu\cdot(1 + \epsilon)^{4K} - \mu\leq\mu(\min\{h^{*} / \mu, \mu/
l^{*}\}^{\sqrt{\epsilon}} - 1) = O(\sqrt{\epsilon}) \rightarrow0.
\qedhere\]
\end{proof}

\section{Additional Results and Proofs for Asymmetric Setting}\label{app:moderate}

In Section~\ref{sec:uni-inc} we provide a proof of Proposition~\ref{pro:uni-inc} for the uniform winner and incongruent loser case, which shows that the optimal information structure must be an interior information structure. In Section~\ref{sec:uni-con} and Section~\ref{sec:variable} we state additional propositions for (i) the uniform winner and congruent loser case and (ii) the variable winner case, which establish that the optimal information structure must be moderate. Together, these results show Theorem~\ref{thm:moderate}, whose proof appears in Section~\ref{sec:proof-moderate}.

%In this appendix we prove Theorem~\ref{thm:moderate}. We present our argument for the uniform winner and congruent loser case in Proposition~\ref{pro:uni-con} in Section \ref{sec:uni-con}, for the uniform winner and incongruent loser case in Proposition~\ref{pro:uni-inc} in Section~\ref{sec:uni-inc}, and for the variable winner case in Proposition~\ref{pro:variable} in Section~\ref{pro:variable}.

\subsection{Uniform Winner, Incongruent Loser}\label{sec:uni-inc}

%\pauledit{I guess this proof can also go into the appendix} \song{+1}

\begin{proof}
[Proof of Proposition~\ref{pro:uni-inc}]
%\pauledit{Added this} 
Note that in this case $r_1 \geq r'_1 \geq r'_2 \geq r_2$ and hence also $\mu_1 \geq \mu_2$. Thus, the revenue from the extremal information structures is given by $\mu_2 = pr_2 + (1-p)r'_2$. Consequently, no-disclosure and full-disclosure yield the same expected revenue. Moreover, unbundling the winner and bundling the loser yield the same
revenue as no-disclosure or full-disclosure:%
\begin{equation*}
r_{1}p\left( \frac{r_{2}p+r_{2}^{\prime }\left( 1-p\right) }{r_{1}}\right)
+r_{1}^{\prime }\left( 1-p\right) \left( \frac{r_{2}p+r_{2}^{\prime }\left(
1-p\right) }{r_{1}^{\prime }}\right) =pr_{2}+\left( 1-p\right) r_{2}^{\prime
}
\end{equation*}
We then consider bundling the winner and unbundling the loser: 
\begin{equation*}
r_{1}p\left( \frac{r_{2}}{r_{1}p+r_{1}^{\prime }\left( 1-p\right) }\right)
+r_{1}^{\prime }\left( 1-p\right) \left( \frac{r_{2}^{\prime }}{%
r_{1}p+r_{1}^{\prime }\left( 1-p\right) }\right) 
\end{equation*}%
This yields a strictly lower revenue than no-disclosure with an incongruent
loser:%
\begin{eqnarray*}
&&r_{1}p\left( \frac{r_{2}}{r_{1}p+r_{1}^{\prime }\left( 1-p\right) }\right)
+r_{1}^{\prime }\left( 1-p\right) \left( \frac{r_{2}^{\prime }}{%
r_{1}p+r_{1}^{\prime }\left( 1-p\right) }\right) -\left(
pr_{2}+r_{2}^{\prime }\left( 1-p\right) \right)  \\
&=&\frac{p\left( 1-p\right) \left( r_{1}-r_{1}^{\prime }\right) \left(
r_{2}-r_{2}^{\prime }\right) }{pr_{1}+r_{1}^{\prime }\left( 1-p\right) }<0,
\end{eqnarray*}%
as the incongruent loser is defined by the property%
\begin{equation*}
\left( r_{1}-r_{1}^{\prime }\right) \left( r_{2}-r_{2}^{\prime }\right) <0%
\text{.}
\end{equation*}%
Thus, the exact opposite result to the congruent loser. Finally, unbundling
the winner and bundling the loser yields the same revenue as
no-disclosure or full-disclosure. 

It therefore suffices to show that we can construct an interior information
structure such that revenue improves in comparison to no-disclosure. We begin by partially
bundling the loser only. This is a single-agent optimization problem: 
\begin{equation}
\begin{tabular}{l|cccc}
\toprule
\backslashbox{$r$}{$s$} & $(r_{1},x)$ & $(r_{1},y)$ & $\left( r_{1}^{\prime },x\right) $ & $%
\left( r_{1}^{\prime },y\right) $ \\ \hline
$(r_{1},r_{2})$ & $p-q$ & $q$ & $0$ & $0$ \\ \hline
$(r_{1}^{\prime },r_{2}^{\prime })$ & $0$ & $0$ & $q^{\prime }$ & $p^{\prime
}-q^{\prime }$\\%
\bottomrule
\end{tabular}%
,\ \ \   \label{osp2}
\end{equation}%
where $p^{\prime }=1-p$. This information structure leads the following
revenue: 
\begin{equation}
r_{1}\left( \left( p-q\right) \frac{x}{r_{1}}+q\frac{y}{r_{1}}\right)
+r_{1}^{\prime }\left( q^{\prime }\frac{x}{r_{1}^{\prime }}+\left( p^{\prime
}-q^{\prime }\right) \frac{y}{r_{1}^{\prime }}\right)   \label{osp3}
\end{equation}%
where $r_{1}$ and $r_{1}^{\prime }$ cancel:%
\begin{equation*}
\left( \left( p-q\right) x+qy\right) +\left( q^{\prime }x+\left( p^{\prime
}-q^{\prime }\right) y\right) 
\end{equation*}%
and the calibration constraints are given by: 
\begin{equation*}
x=\frac{r_{2}\left( p-q\right) +r_{2}^{\prime }q^{\prime }}{\left(
p-q\right) +q^{\prime }}
\end{equation*}%
and 
\begin{equation*}
y=\frac{r_{2}q+r_{2}^{\prime }\left( p^{\prime }-q^{\prime }\right) }{%
q+p^{\prime }-q^{\prime }}.
\end{equation*}%
Thus, we have revenue indifference (as expected): 
\begin{equation*}
\left( \left( p-q\right) x+qy\right) +\left( q^{\prime }x+\left( p^{\prime
}-q^{\prime }\right) y\right) =r_{2}p+p^{\prime }r_{2}^{\prime }.
\end{equation*}%
It follows that the introduction of $\left( q,q^{\prime }\right) $, subject
to the probability constraints, leaves revenue unchanged. Clearly, we can
always choose $(q,q^{\prime })$ so that 
\begin{equation}
y>x\text{.}  \label{ops4}
\end{equation}%
However, we now notice that if we consider the submatrix: 
\begin{equation*}
\begin{tabular}{c|c}
\toprule
$(r_{1},y)$ & $\left( r_{1}^{\prime },x\right) $ \\ \hline
$q$ & $0$ \\ 
$0$ & $q^{\prime }$\\%
\bottomrule
\end{tabular}%
\ 
\end{equation*}%
and $y>x$, then this submatrix represents an instance of the congruent
winner with weak competition case, and hence by Theorem~\ref{thm:uni-con-weak} %Lemma \ref{pb} 
it is optimal
to bundle the winner. Thus,% %\pauledit{Fixed. Was (y,z) and (x,z)}%
\begin{equation*}
\begin{tabular}{c|c}
\toprule
%$(y,z)$ & $\left( x,z\right) $ \\ \hline
$(z,y)$ & $\left(z,x\right) $ \\\hline
$q$ & $0$ \\ %\hline
$0$ & $q^{\prime }$\\%
\bottomrule
\end{tabular}%
\ 
\end{equation*}%
with%
\begin{equation*}
z=\frac{r_{1}q+r_{1}^{\prime }q^{\prime }}{q+q^{\prime }}.
\end{equation*}
Moreover, given that the benefit of bundling is increasing in the difference 
$y-x$, it follows that we have an interior solution, as a higher $y$ and a
lower $x$ mean that $q$ and $q^{\prime }$ have to become smaller.
\end{proof}

\subsection{Uniform Winner, Congruent Loser}\label{sec:uni-con}

%\pauledit{I think Proposition~\ref{pro:uni-con} applies generally whether we have weak or strong competition, so I have moved it here. But it only shows that extremal structures are dominated.  Theorem~\ref{thm:uni-con-weak} is more specific, as it requires weak competition, but gives the optimal structure. The theorem also shows that unlike with weak competition moderate, but non-interior solutions may be optimal.}

\begin{proposition}[Uniform Winner, Congruent Loser]
\label{pro:uni-con}
With a uniform winner and a congruent loser,
%, and strong competition, 
partially bundling the winner 
\emph{and} the loser is revenue-improving relative to the extremal
information structures. Both partial unbundling of the loser:%
\begin{equation}
\begin{tabular}{l|c|c|c}
\toprule
\backslashbox{$r$}{$s$} & $\left( \mu _{1},\mu _{1}\right)$  & $\left( \mu _{1},\mu _{1}\right)$  & 
$\left( \mu _{1},r_{2}^{\prime }\right)$  \\ \hline
$(r_{1},r_{2})$ & $p$ & $0$ & $0$ \\ 
$(r_{1}^{\prime },r_{2}^{\prime })$ & $0$ & $1-p-q$ & $q$\\
\bottomrule%
\end{tabular}
\label{u1}
\end{equation}%
or partial bundling of the winner:%
\begin{equation}
\begin{tabular}{l|c|c|c}
\toprule
\backslashbox{$r$}{$s$} & $\left( r_{2},r_{2}\right)$  & $\left( r_{2},r_{2}^{\prime }\right)$  & 
$\left( r_{1}^{\prime },r_{2}^{\prime }\right)$  \\\hline 
$(r_{1},r_{2})$ & $p$ & $0$ & $0$ \\ 
$(r_{1}^{\prime },r_{2}^{\prime })$ & $0$ & $1-p-q$ & $q$\\%
\bottomrule
\end{tabular}
\label{u2}
\end{equation}%
for suitably chosen $q$ 
generates a strictly higher revenue than the extremal information structures.
\end{proposition}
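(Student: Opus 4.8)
The plan is first to reduce the claim to beating the revenue benchmark $\mu_2$, and then to treat the \emph{weak competition} sub-case $r_2\le\mu_1$ and the \emph{strong competition} sub-case $r_2>\mu_1$ separately, using structure \eqref{u1} for the former and \eqref{u2} for the latter. The opening step records that under a uniform winner and a congruent loser bidder $1$ wins in every realization of every extremal structure, since $r_1\ge r_2$ and $r'_1\ge r'_2$. Hence full-disclosure charges $r_2$ (resp.\ $r'_2$) per click and earns $pr_2+(1-p)r'_2=\mu_2$; and since $\mu_1\ge\mu_2$ (again by the uniform-winner ordering), no-disclosure has bidder $1$ win at per-click price $\mu_2/\mu_1$, for expected revenue $\mu_1\cdot\mu_2/\mu_1=\mu_2$. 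So it suffices to exhibit, in each sub-case, a moderate structure with revenue strictly above $\mu_2$.

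\textbf{Weak competition ($r_2\le\mu_1$).} Here I would analyze \eqref{u1}: keep $s_1=\mu_1$ throughout, pool the CTR profile $(r_1,r_2)$ together with a $1-p-q$ fraction of $(r'_1,r'_2)$ into a single loser-signal, and reveal the remaining $q$-fraction of $(r'_1,r'_2)$ as $r'_2$. Calibration forces the pooled loser-signal to be $\sigma=\frac{pr_2+(1-p-q)r'_2}{1-q}$, a convex combination of $r_2$ and $r'_2$, so $\sigma\in[\mu_2,r_2]$; combined with $r_2\le\mu_1$ and $r'_2\le r'_1\le\mu_1$, this shows bidder $1$ still wins in every realization, so the structure is valid, calibrated and efficient. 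Collecting the per-profile revenues gives total revenue $\sigma-\frac{q\,r'_1(\sigma-r'_2)}{\mu_1}$; its $q$-derivative equals $\frac{(\sigma-r'_2)(\mu_1-r'_1)}{\mu_1(1-q)}\ge 0$, so the revenue is nondecreasing in $q$, and at $q=1-p$ it equals $\E[r_1r_2]/\E[r_1]$. Finally $\E[r_1r_2]-\E[r_1]\E[r_2]=p(1-p)(r_1-r'_1)(r_2-r'_2)$, nonnegative by the labeling $r_1\ge r'_1$ and congruence $r_2\ge r'_2$ (strict unless $r_1=r'_1$ or $r_2=r'_2$) --- this is exactly Chebyshev's sum inequality (Lemma~\ref{lem:chebyshev}). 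Hence the revenue strictly exceeds $\E[r_2]=\mu_2$ for every $q\in(0,1-p]$; at $q=1-p$ this recovers the optimal structure of Theorem~\ref{thm:uni-con-weak}.

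\textbf{Strong competition ($r_2>\mu_1$).} Here I would analyze \eqref{u2}: the winner's signal is $r_2$ on $(r_1,r_2)$ and on a $1-p-q$ fraction of $(r'_1,r'_2)$, and $r'_1$ on the remaining $q$-fraction, while the loser is fully revealed ($r_2$ resp.\ $r'_2$). Calibration of the winner-signal $r_2$ forces $q=\frac{r_2-\mu_1}{r_2-r'_1}$; since $r'_1\le\mu_1<r_2\le r_1$ this lies in $(0,1-p]$, and a short computation gives $1-p-q=\frac{p(r_1-r_2)}{r_2-r'_1}$. For efficiency: the three signal profiles are $(r_2,r_2)$, $(r_2,r'_2)$, $(r'_1,r'_2)$; bidder $1$ (weakly) wins the latter two by congruence ($r_2\ge r'_2$) and the uniform-winner ordering ($r'_1\ge r'_2$), and the first profile is a tie, which I would break in bidder $1$'s favor by an $\epsilon$-perturbation of the winner's signal (as in the ``flipping the square'' construction) and then let $\epsilon\to 0$. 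In the limit the revenue equals $pr_1+(1-p-q)\frac{r'_1r'_2}{r_2}+q\,r'_2$; substituting the value of $1-p-q$ and simplifying collapses this to $\mu_2+p(r_1-r_2)\frac{r_2-r'_2}{r_2}$, which strictly exceeds $\mu_2$ because $r_1\ge r_2\ge r'_2$ (strict unless $r_1=r_2$ or $r_2=r'_2$). Since \eqref{u2} is neither no- nor full-disclosure, it is a moderate structure dominating the extremal ones.

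\textbf{Main obstacle.} The bulk of the work is the revenue accounting in each sub-case: correctly identifying the auction winner and the second-price per-click payment under every signal profile, and verifying algebraically that the probability-weighted sum strictly beats $\mu_2$ --- which in both sub-cases ultimately reduces to the strict congruence inequality $(r_1-r'_1)(r_2-r'_2)>0$, i.e.\ the positive correlation exploited through Chebyshev's sum inequality, and this is precisely why congruence is essential. A secondary nuisance is the exact tie arising in \eqref{u2}, which under a uniform tie-break need not be revenue-improving and therefore requires the efficiency-favoring $\epsilon$-perturbation limit. In the genuinely degenerate configurations ($r_1=r'_1$, $r_2=r'_2$, or $r_1=r_2$ in the strong-competition case) no-disclosure, full-disclosure and the constructed structure all have revenue $\mu_2$, so there is nothing to prove.
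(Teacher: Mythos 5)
Your proof is correct, and the strong-competition half is essentially the paper's own argument for structure \eqref{u2}: you derive the same calibration-pinned $q=(r_2-\mu_1)/(r_2-r'_1)$, the same revenue expression, and the same final gap $p(r_1-r_2)(r_2-r'_2)/r_2>0$. Where you genuinely diverge is in the treatment of \eqref{u1} and in the case split. The paper pins the pooled loser signal at $\mu_1$, which forces $q=(\mu_1-\mu_2)/(\mu_1-r'_2)$ and yields the gap $p(r_1-r'_1)\bigl(p(r_1-r_2)+(1-p)(r'_1-r'_2)\bigr)/\mu_1$; it never checks that this $q$ satisfies $q\le 1-p$, and in fact that feasibility condition is equivalent to $r_2\ge\mu_1$, so the paper's versions of \emph{both} tables are only valid probability assignments under strong competition. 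Your replacement of \eqref{u1} by the free-parameter family with pooled signal $\sigma(q)\in[\mu_2,r_2]$, together with the monotonicity computation $R'(q)=(\sigma-r'_2)(\mu_1-r'_1)/(\mu_1(1-q))\ge 0$ interpolating from no-disclosure at $q=0$ to the Theorem~\ref{thm:uni-con-weak} optimum at $q=1-p$, is exactly what is needed to cover the weak-competition regime that the paper's literal constructions miss, and your explicit feasibility check $q\in(0,1-p]$ for \eqref{u2} under strong competition closes the corresponding gap there. You are also more careful than the paper about the tie at $(r_2,r_2)$ in \eqref{u2}: under the model's symmetric tie-breaking the revenue from that cell would be $(r_1+r_2)/2$ rather than $r_1$, which can destroy the improvement, so your $\epsilon$-perturbation is not cosmetic. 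The only caveat is that the proposition literally asserts that \emph{both} displayed structures improve on the extremal ones, whereas you verify each only in its own competition regime; given the feasibility analysis above this is the most that can be salvaged, but it is worth stating explicitly that the claim holds regime-by-regime (and, as in the paper, only non-strictly in the degenerate configurations where congruence or the winner's advantage is not strict).
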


%\pauledit{Shall we move this prove to the appendix?}
%\song{Or maybe we just state the revenue for each case and show they are higher than extremal information structures? The proof on computing $q$ might be less interesting given the space limit.}

\begin{proof}
[Proof of Proposition~\ref{pro:uni-con}]
%\pauledit{Added} 
We begin by observing that in the uniform winner and congruent loser case the revenue from no- and full-disclosure is equal to
\[
\mu_2 = pr_2 + (1-p)r'_2.
\]

%\pauledit{rephrased} 
Next, we argue that the information structure in  (\ref{u1}) yields a strict revenue improvement. Its revenue is given by:%
\begin{equation*}
r_{1}p+r_{1}^{\prime }(1-p-q)+r_{1}^{\prime }q\frac{r_{2}^{\prime }}{\mu _{1}%
},
\end{equation*}%
where the calibration constraint is given by: 
\begin{equation*}
\frac{r_{2}p+r_{2}^{\prime }(1-p-q)}{1-q}=r_{1}p+r_{1}^{\prime }(1-p)=\mu
_{1},
\end{equation*}%
where we remind the reader that 
\begin{equation*}
\mu _{1}=r_{1}p+r_{1}^{\prime }(1-p)
\end{equation*}%
and thus: %\pauledit{Changed Q to q}%
\begin{equation*}
q=-\frac{\left( pr_{1}-pr_{2}-r_{1}^{\prime }\left( p-1\right)
+r_{2}^{\prime }\left( p-1\right) \right) }{r_{2}^{\prime
}-pr_{1}+r_{1}^{\prime }\left( p-1\right) }
\end{equation*}%
Hence, the revenue difference is%
\begin{eqnarray*}
&&r_{1}p+r_{1}^{\prime }(1-p-q)+qr_{1}^{\prime }\frac{r_{2}^{\prime }}{\mu
_{1}}-\left( r_{2}p+\left( 1-p\right) r_{2}^{\prime }\right)  \\
&=&p\frac{r_{1}-r_{1}^{\prime }}{\left( 1-p\right) r_{1}^{\prime }+pr_{1}}%
\left( \left( 1-p\right) \left( r_{1}^{\prime }-r_{2}^{\prime }\right)
+p\left( r_{1}-r_{2}\right) \right) >0,
\end{eqnarray*}%
where the inequality holds due to the uniform winner property. 

%\pauledit{rephrased} 
It remains to establish that the revenue of (\ref{u2}) is also greater than the revenue from the extremal information structures. The revenue of (\ref{u2}) is:%
\begin{equation*}
r_{1}p+r_{1}^{\prime }(1-p-q)\frac{r_{2}^{\prime }}{r_{2}}+r_{1}^{\prime }q%
\frac{r_{2}^{\prime }}{r_{1}^{\prime }},
\end{equation*}%
where the calibration constraint is given by: 
\begin{equation*}
\frac{r_{1}p+r_{1}^{\prime }(1-p-q)}{1-q}=r_{2},
\end{equation*}%
and thus: 
\begin{equation*}
q=-\frac{1}{r_{1}^{\prime }-r_{2}}\left( r_{2}-pr_{1}+r_{1}^{\prime }\left(
p-1\right) \right) .
\end{equation*}%
The revenue difference is therefore 
\begin{eqnarray*}
&&r_{1}p+r_{1}^{\prime }(1-p-q)\frac{r_{2}^{\prime }}{r_{2}}+r_{1}^{\prime }q%
\frac{r_{2}^{\prime }}{r_{1}^{\prime }}-\left( r_{2}p+\left( 1-p\right)
r_{2}^{\prime }\right)  \\
&=&\frac{p}{r_{2}}\left( r_{1}-r_{2}\right) \left( r_{2}-r_{2}^{\prime
}\right) >0,
\end{eqnarray*}%
where the inequality holds by the congruent loser property. 
\end{proof}

%\subsubsection{Optimal solution with weak competition}
%\pauledit{For weak competition we can characterize the optimal information structure ...}

%\pauledit{This was after Proposition \ref{pro:uni-con}. I am not sure I am following it. Shall we rather circle back to the strong competition case and say a bit about our observations (through numerical experiments) about the optimal structure in this case? As we do at the end of Section~\ref{sec:uni-inc}?} \song{+1}

%\textcolor{blue}{The interpretation is that if the weak state of the winner is sufficiently
%close to the weak state of the loser, then we give up on bundling the
%winner. Then we focus on maintaining the competition with the stronger
%element of the competitor.}

%\textcolor{blue}{In the residual world, when the weak state is sufficiently close to the
%loser in the low or the high state, then we bundle the looser a bit to
%maintain competition and efficiency.}

%Proposition \ref{uwil} establishes that the candidate information structure
%provides a revenue improvement over extremal information structure. We
%conjecture that in fact the optimal choice of $q$ and $q^{\prime }$
%identifies the optimal information structure in the case of the incongruent
%loser. This conjecture is supported by numerical analysis, but we have not
%yet identified a global optimality argument.

\subsection{Variable Winner}
\label{sec:variable}

%\pauledit{I think it would be nice to add the information structure that yields the improvement in the proposition statement as in the other cases. BUT there is one WLOG in the proof which is not clear to me, so I did not do it yet.}

\begin{proposition}[Variable Winner]
\label{pro:variable}
With a variable winner, 
%the following information structure 
%\[
%\text{\pauledit{TODO: add}}
%\]
a moderate information structure 
%for suitably chosen \pauledit{TODO: add} 
improves on the extremal information structures.
\end{proposition}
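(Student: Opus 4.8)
The plan is to produce the required moderate structure by \emph{reducing to the equal‑means construction} of Proposition~\ref{equal-means}. Label the two profiles so that $(r_1,r_2)$ occurs with probability $p$ (bidder~$1$ is the efficient winner, $r_1\ge r_2$) and $(r'_1,r'_2)$ with probability $1-p$ (bidder~$2$ is the efficient winner, $r'_2>r'_1$); assume $0<p<1$, otherwise the environment is degenerate. First I pin down the benchmark: full‑disclosure earns $pr_2+(1-p)r'_1$, no‑disclosure earns $\min(\mu_1,\mu_2)$, and $r_1\ge r_2$ together with $r'_2>r'_1$ give $\min(\mu_1,\mu_2)\ge pr_2+(1-p)r'_1$ (cf.\ Corollary~\ref{zc}). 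So it suffices to build a moderate calibrated structure with revenue strictly above $\min(\mu_1,\mu_2)$.

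Set $A=r_1-r_2\ge 0$, $B=r'_2-r'_1>0$, $q=B/(A+B)$. The key object is the two‑point distribution $D_1$ that puts mass $q$ on $(r_1,r_2)$ and mass $1-q$ on $(r'_1,r'_2)$: by construction $qA=(1-q)B$, i.e.\ $D_1$ has \emph{equal means}, so Proposition~\ref{equal-means} supplies, for every $\epsilon>0$, a calibrated correlated structure $\F_1$ for $D_1$ with revenue at least $\E_{D_1}[\max(r_1,r_2)]-\epsilon=qr_1+(1-q)r'_2-\epsilon$. I then split the prior $G$ as $\lambda D_1+(1-\lambda)D_2$ with $D_2$ a point mass: when $\mu_1>\mu_2$ (equivalently $pA>(1-p)B$, equivalently $q<p$) take $D_2$ concentrated on $(r_1,r_2)$ and $\lambda=(1-p)/(1-q)\in(0,1)$; when $\mu_1<\mu_2$ take $D_2$ concentrated on $(r'_1,r'_2)$ and $\lambda=p/q\in(0,1)$; when $\mu_1=\mu_2$ there is nothing to decompose, since $\mu_1=\mu_2$ forces $A>0$ and Proposition~\ref{equal-means} applied \emph{directly} to $G$ already gives revenue $pr_1+(1-p)r'_2-\epsilon>pr_2+(1-p)r'_2=\mu_2$. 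On the point mass $D_2$ I use full‑disclosure (there it coincides with no‑disclosure); its revenue is the second‑highest value, namely $r_2$ or $r'_1$ respectively.

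Composing the pieces with Lemma~\ref{lem:signal_composition} gives a calibrated structure $\F=\lambda\F_1+(1-\lambda)\F_2$ with $\Rev(\F)=\lambda\Rev(\F_1)+(1-\lambda)\Rev(\F_2)$. Substituting $q=B/(A+B)$ and the chosen $\lambda$ and simplifying yields, in the case $\mu_1>\mu_2$,
\[
\Rev(\F)\ \ge\ \lambda\,(qr_1+(1-q)r'_2-\epsilon)+(1-\lambda)r_2\ =\ \mu_2+(1-p)B-\lambda\epsilon ,
\]
and in the case $\mu_1<\mu_2$, $\Rev(\F)\ge \mu_1+pA-\lambda\epsilon$; since $(1-p)B>0$ and $pA>0$ respectively, a small enough $\epsilon$ delivers a strict improvement over $\min(\mu_1,\mu_2)$. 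That $\F$ is moderate (not extremal) is immediate: $\lambda\in(0,1)$, the $\F_1$‑component emits signals different from the constants $\mu_i$ (so $\F$ is not no‑disclosure) and randomizes the signal given the state (so $\F$ is not full‑disclosure). Note the argument never needs the finer congruent/incongruent‑loser or weak/strong‑competition distinctions.

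The single delicate point — and the main obstacle — is the non‑generic boundary $A=0$, i.e.\ $r_1=r_2$, where $q=1$ and $D_1$ degenerates. There $r_1=r_2$ forces $\mu_1<\mu_2$; no‑disclosure is already \emph{efficient} (its welfare equals $\mu_2$) yet extracts only $\mu_1$, and one checks that the calibration constraint pins the revenue of \emph{any} cross‑state pooling at exactly $\mu_1$, so no strict improvement exists and the conclusion holds only weakly (witnessed, for instance, by any nontrivial partial‑bundling structure, which is moderate and also earns $\mu_1$). I would state this case separately; away from it, the construction above gives the strict improvement uniformly across all variable‑winner sub‑cases.
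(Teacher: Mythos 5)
Your argument is correct and shares the paper's core strategy -- reduce the variable-winner prior to an equal-means component on which Proposition~\ref{equal-means} extracts (almost) full surplus, then reassemble with Lemma~\ref{lem:signal_composition} -- but the decomposition is genuinely different. The paper carves out an $\epsilon$-small sub-population, reweighted by the unique $\lambda$ with $\lambda r_1+(1-\lambda)r'_1=\lambda r_2+(1-\lambda)r'_2$, applies the equal-means result there, and pools the remaining $1-\epsilon$ mass into a single averaged signal $(r''_1,r''_2)$; this forces an extra check that $\epsilon$ is small enough for $r''_1\ge r''_2$ and yields only an $O(\epsilon)$ improvement over no-disclosure. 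You instead take the \emph{maximal} equal-means component, so the residual is a point mass on a single CTR profile where full- and no-disclosure coincide; this removes the ordering check, makes the moderation of the composite structure transparent, and produces an explicit gain of $(1-p)(r'_2-r'_1)$ or $p(r_1-r_2)$ rather than a vanishing one. Your separate treatment of $r_1=r_2$ is careful but concerns a configuration the paper's variable-winner hypothesis (its display \eqref{vari}, which asserts $r_1>r_2$) already excludes. One caveat applies equally to both proofs: Proposition~\ref{equal-means}, as restated in Appendix~\ref{app:equal-means}, requires the losing bidder's two CTR values to differ ($a\neq b$), i.e.\ here $r_2\neq r'_2$; when $r_2=r'_2$ calibration pins bidder~2's signal to a constant and full surplus extraction on the equal-means piece is impossible, so neither your construction nor the paper's covers that degenerate sub-case, and you should either exclude it or note it alongside your $r_1=r_2$ discussion.
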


%\pauledit{I think this proof can also go in the appendix.}

\begin{proof}
%\pauledit{This proof was referring to Cor 3.2 to argue superiority of no-disclosure over full-disclosure, but the proof of this Corollary is assuming independent signalling. I replaced the reference with a proof from first principles.}
Note that the revenue from full-disclosure is $p r_1 \frac{r'_1}{r_1} + (1-p) r'_2 \frac{r_2}{r'_2} = p r'_1 + (1-p)r_2$, which is no more than $\mu_2$ because $r'_1 \leq r'_2$, but also no more than $\mu_1$ because $r_2 \leq r_1$. So, the revenue from full-disclosure is at most $ \min\{\mu_1,\mu_2\}$. The revenue from no-disclosure, on the other hand, is $\mu_2$ if $\mu_1 \geq \mu_2$, and $\mu_1$ otherwise. It is therefore equal to $\min\{\mu_1,\mu_2\}$.
%By Corollary 3.2, the no-disclosure policy strictly dominates the
%full-disclosure policy. It is therefore sufficient to identify a moderate
%information policy that strictly improves on the no-disclosure policy.

%\pauledit{rephrased} 
We construct a moderate information structure that strictly improves upon this.
%moderate information policy for the case of the
%variable winner, and then in the case of the uniform winner. 
%\pauledit{Why is this wlog? Where is it used? How would the information structure look in the other case?} \song{I think this is only used at the very end, saying ``in the no-disclosure policy the revenue would be $\mu_2$''. I think we can simply remove the assumption, and use the same proof that \eqref{s1} $\geq \mu_2 \geq \min\{\mu_1, \mu_2\}$, as what we proved above.} 
Without loss of
generality, we assume that $\mu _{1}>\mu _{2}$. We consider the following
candidate information structure: %which is a moderate information structure:%
\begin{equation}
\begin{tabular}{l|ccc}
\toprule
\backslashbox{$r$}{$s$}& $(r_{1},r_{2})$ & $(r_{1}^{\prime },r_{2}^{\prime })$ & $\left(
r_{1}^{\prime \prime },r_{2}^{\prime \prime }\right)$  \\ \hline
$(r_{1},r_{2})$ & $\epsilon \lambda$  & $0$ & $1-\epsilon$  \\ 
$(r_{1}^{\prime },r_{2}^{\prime })$ & $0$ & $\epsilon \left( 1-\lambda \right)$ 
& $1-\epsilon$\\
\bottomrule
\end{tabular}%
,  \label{t}
\end{equation}%
the construction of which we now detail. By the hypothesis of a variable
winner, namely 
\begin{equation}
r_{1}>r_{2},\ \ r_{1}^{\prime }<r_{2}^{\prime },  \label{vari}
\end{equation}
we can find a unique convex combination $\lambda $ and $1-\lambda $ of the
click-through rate configurations $(r_{1},r_{2})$ and $(r_{1}^{\prime
},r_{2}^{\prime })$, such that the expected click-through rate of both
bidders is the same. Thus, in expectation they have the same expected
click-through rate, which we call $\hat{r}$: %\song{Maybe we want to use a different notation? As we previously used $r$ as the vector of $(r_1, r_2, ...)$}
\begin{equation}
\lambda r_{1}+\left( 1-\lambda \right) r_{1}^{\prime }=\lambda r_{2}+\left(
1-\lambda \right) r_{2}^{\prime }= \hat{r}.  \label{s}
\end{equation}
The remaining probability is collected in a single calibrated signal that
yields the posterior expectation of the click-through rates:%
\begin{equation*}
r_{i}^{\prime \prime }=\frac{\left( p-\epsilon \lambda \right)
r_{i}+\left( 1-p-\epsilon \left( 1-\lambda \right) \right) r_{i}^{\prime }%
}{\left( p-\epsilon \lambda \right) +\left( 1-p-\epsilon \left(
1-\lambda \right) \right) }.
\end{equation*}%
This yields the above information structure (\ref{t}). Now, we can always
chose $\epsilon >0$ sufficiently small so that $r_{1}^{\prime \prime
}\geq r_{2}^{\prime \prime }.$

If we restrict attention to the subset of signals $%
(r_{1},r_{2}),(r_{1}^{\prime },r_{2}^{\prime })$, then it is as if we have a
bidding game with equal means of the CTRs by construction of (\ref{s}). By
Lemma 4.5, we know that on this subset of signals we can extract the full
surplus generated by the click-through rates. In the remaining profile, we
run the bidding game as if we have a zero-disclosure policy. With this, we
know that the resulting revenue is given by 
\begin{equation}
\epsilon \lambda r_{1}+\left( 1-\epsilon \right) \lambda r_{2}^{\prime
}+\left( 1-\epsilon \right) \frac{\left( p-\epsilon \lambda \right)
r_{2}+\left( 1-p-\epsilon \left( 1-\lambda \right) \right) r_{2}^{\prime }%
}{\left( p-\epsilon \lambda \right) +\left( 1-p-\epsilon \left(
1-\lambda \right) \right) }.  \label{s1}
\end{equation}%
By contrast, in the no-disclosure policy the revenue would be $\mu _{2}$,
which can be written as%
\begin{equation}
\epsilon \lambda r_{2}+\left( 1-\epsilon \right) \lambda r_{2}^{\prime
}+\left( 1-\epsilon \right) \frac{\left( p-\epsilon \lambda \right)
r_{2}+\left( 1-p-\epsilon \left( 1-\lambda \right) \right) r_{2}^{\prime }%
}{\left( p-\epsilon \lambda \right) +\left( 1-p-\epsilon \left(
1-\lambda \right) \right) },  \label{s2}
\end{equation}%
and by the hypothesis of the variable winner (see (\ref{vari})), $r_{1}>r_{2}$%
, and the above information structure (\ref{t}) strictly improves upon the
no-disclosure policy.
%%
%Next we consider the uniform winner case. In the uniform winner case, the
%revenue from no-disclosure and full-disclosure are exactly identical. It is
%therefore sufficient to prove that a moderate information structure strictly
%dominates one of the two extremal information structures.  Here, we have
%already shown in the earlier results, that a moderate information structure
%improves upon the no-disclosure information structure. 
\end{proof}

%\pauledit{Is this the right place to have this discussion? Make it a subsection ``Discussion''?}

\subsection{Proof of Theorem \ref{thm:moderate}}\label{sec:proof-moderate}

\begin{proof}[Proof of Theorem \ref{thm:moderate}]
The claim follows from combining Proposition~\ref{pro:uni-inc} with Proposition~\ref{pro:uni-con} and Proposition~\ref{pro:variable}.
\end{proof}

\end{document}